\DeclareMathAlphabet{\mathpzc}{OT1}{pzc}{m}{it}
\newtheorem{theorem}{\textbf{\textsc{Theorem}}}
\begin{document}
	\bibliographystyle{IEEE2}

\title{Dynamic Model for Network Selection in Next Generation HetNets with Memory-affecting Rational Users}

\author{Shaohan~Feng,~\IEEEmembership{Student Member,~IEEE,} 
	Dusit~Niyato,~\IEEEmembership{Fellow,~IEEE,} 
	Xiao~Lu,~\IEEEmembership{Student Member,~IEEE,} 
	Ping~Wang,~\IEEEmembership{Senior Member,~IEEE,} 
	and~Dong~In~Kim,~\IEEEmembership{Fellow,~IEEE}
\IEEEcompsocitemizethanks{\IEEEcompsocthanksitem S. Feng and D. Niyato are with the School of Computer Science and Engineering, Nanyang Technological University, Singapore. 
\IEEEcompsocthanksitem X. Lu is with the Department of Electrical and Computer Engineering, University of Alberta, Canada. 
\IEEEcompsocthanksitem P. Wang is with the Department of Electrical Engineering and Computer Science, Lassonde School of Engineering, York University, Canada. 
\IEEEcompsocthanksitem D. I. Kim is with the Department of Electrical and Computer Engineering, Sungkyunkwan University, Suwon 16419, Korea. D. I. Kim is the corresponding author of the paper.}}

\maketitle
\vspace{-15mm}
\begin{abstract}\vspace{-2mm}
Recently, due to the staggering growth of wireless data traffic, heterogeneous networks have drawn tremendous attention due to the capabilities of enhancing the capacity/coverage and to save energy consumption for the next generation wireless networks. In this paper, we study a long-run user-centric network selection problem in the 5G heterogeneous network, where the network selection strategies of the users can be investigated dynamically. Unlike the conventional studies on the long-run model, we incorporate the memory effect and consider the fact that the decision-making of the users is affected by their memory, i.e., their past service experience. Namely, the users select the network based on not only their instantaneous achievable service experience but also their past service experience within their memory. Specifically, we model and study the interaction among the users in the framework of fractional evolutionary game based on the classical evolutionary game theory and the concept of the power-law memory. We analytically prove that the equilibrium of the fractional evolutionary game exists, is unique and uniformly stable. We also numerically demonstrate the stability of the fractional evolutionary equilibrium. Extensive numerical results have been conducted to evaluate the performance of the fractional evolutionary game. The numerical results have revealed some insightful findings. For example, the user in the fractional evolutionary game with positive memory effect can achieve a higher cumulative utility compared with the user in the fractional evolutionary game with negative memory effect. Moreover, the fractional evolutionary game with positive memory effect can reduce the loss in the user's cumulative utility caused by the small-scale fading.
\end{abstract}

\begin{IEEEkeywords}
Network selection, fractional evolutionary game, memory-affecting rationality, and heterogeneous network.
\end{IEEEkeywords}

\section{Introduction}
\label{sec:introduction}

Due to the proliferation of wireless handsets and portable devices as well as data-hungry multimedia applications, mobile data demand continues to grow exponentially in recent years and is likely to soon outgrow the capacity of the current cellular networks~\cite{Cambridge2014Joint}. To address this severe issue, the network will continue to become increasingly heterogeneous as we move to fifth generation (5G)~\cite{andrews2014will}. A heterogeneous network (HetNet) is a wireless network consisting of nodes with different transmission powers and coverage sizes~\cite{hu2014energy}. High power nodes (HPNs) with large coverage areas are deployed in a planned way for blanket coverage of urban, suburban, or rural areas. Low power nodes (LPNs) with small coverage areas aim to complement the HPNs for coverage extension and throughput enhancement. By taking advantage of the best of different networking technologies, multi-faceted benefits can be reaped in 5G heterogeneous networks, such as improving the utilization of the network resources, enhancing the scalability and provisioning networking service upon requirement~\cite{hu2015mih}.


\subsection{Motivation}

In this paper, we study a user-centric network selection problem in 5G HetNets, where the users can freely select the network and access the networking service of which. Moreover, we study the network selection problem on a long-run basis such that the behaviors of the users can be investigated dynamically. It is worth noting here that to study such a long-run network selection problem, it is natural and practical to consider that the users' memory, i.e., past service experience, will affect their decision-making. In other words, the users make their decisions by taking into account not only their instantaneous achievable service experience but also their past service experience. In brief, the action that the user selects the network can be regarded as the behavior of an economic agent in an economic process. In the economic process, the economic agent is aware of and pays great attention to the history of this process, hence the impact of which on the behavior cannot be ignored. For example, one recent study reported that the visitors are probably not deciding whether to enrol in the visited school during their visit, cloudiness must be influencing college decisions through memory~\cite{simonsohn2009weather}. The reason is that, in reality, the students' impression, i.e., their past experience in memory, about the college affects their decisions on the college enrollment.


\subsection{Our Contributions}
\label{subsec:contribution}

Nevertheless, the conventional dynamical model, i.e., classical evolutionary game, cannot capture the impact of the users' memory on their decision-making due to the fact that the players in the classical evolutionary game only consider the instantaneous achievable utility~\footnote{In economics, the agents are memory-aware due to the well-known fact that the agents can remember the history of the economic processes~\cite{sun2018new}.}. In this case, we incorporate the concept of the power-law memory~\cite{tarasova2018concept}, which is used to depict the impact of the users' memory on their strategies~\cite{lynch1991memory}. That is, whenever the users are making decisions on their current strategies, they will take into consideration not only their instantaneous achievable utility but also their previous decisions within the memory~\cite{tarasova2017logistic}. It is worth noting that extensive works that incorporate the memory effect have been proposed in the economics such as~\cite{tarasova2016fractional, tarasov2016long} and the modern physics such as~\cite{koeller1984applications, tarasov2011fractional}. In particular, a generalization of the economic model of natural growth, which takes into account the power-law memory effect, is suggested by the authors in~\cite{tarasova2016fractional}. Regarding the applications in the modern physics, the memory effect has been incorporated in~\cite{koeller1984applications} to study the materials with memory.


In this paper, we study a dynamic network selection problem in 5G HetNets as shown in Fig.~\ref{fig:system_model}, where a HetNet constituted of ultra-high frequency (UHF), i.e., the frequencies below $6$ GHz, base station (BS), millimeter-wave (mmWave) BS, and unmanned aerial vehicle (UAV)-enabled mmWave BS is considered to be the application scenario for the 5G HetNet. In the problem, there two parties, i.e., different types of BSs working as the utility providers with a flat-rate pricing scheme for provisioning the communication service and the communication service customers, i.e., memory-affecting rational users. Specifically, we first formulate a classical evolutionary game to analyze the interaction among the users in the 5G HetNet on a long-run basis. Then, by incorporating the concept of the power-law memory, which is depicted by using the fractional calculus (including fractional derivatives and integrals), we cast the classical evolutionary game as a fractional evolutionary game, where the dynamic behaviors of the memory-affecting rational users can be investigated. 

The major contributions of this paper are summarized as follows:
\begin{itemize}
	\item We model the interactions among the network users in a framework of the classical evolutionary game. Different from the traditional static game models in the existing literature, the proposed framework in this paper is characterized by modeling the dynamic, i.e., time-variant, behaviors of the users on a long-term basis.
	
	\item For the proposed classical evolutionary game model, we further incorporate the concept of power-law memory to reformulate it into a fractional evolutionary game. The network selection strategies of the users from the classical evolutionary game and that from the fractional evolutionary game are compared to investigate the impact of the users' memory on their strategies.
	
	\item We theoretically prove that the equilibrium in the fractional evolutionary game exists, and is unique and uniformly stable. Moreover, we numerically verify the stability of the equilibrium by using the direction field of the replicator dynamics. 
	
	\item Extensive simulations have been conducted to evaluate the performance of the proposed fractional evolutionary game. The numerical results have revealed some interesting findings. For example the users in the fractional evolutionary game with positive memory effect can achieve higher utility compared with that in the classical evolutionary game and the fractional evolutionary game with negative memory effect.
\end{itemize}


The rest of the paper is organized as follows. Section~\ref{sec:related_work} presents the related work and highlights the research gap in the literature. Section~\ref{sec:preliminary} introduces the network model and the concept of the power-law memory. Section~\ref{sec:system_model} describes the system model and the classical evolutionary game formulation as well as the fractional evolutionary game formulation. Section~\ref{sec:equilibrium_analysis} provides the proofs of the existence and uniqueness as well as the stability of the equilibrium. Section~\ref{sec:performance} presents the numerical performance evaluation with some insightful results. Finally, Section~\ref{sec:conclusion} concludes the paper.


\section{Related Work}
\label{sec:related_work}

\subsection{Heterogeneous Networks}

Nowadays, due to the capability of improving the spectral efficiency, the HetNets are inevitably becoming an alternative solution in helping to meet the exponentially increased wireless data traffic~\cite{cai2016green}. The HetNet, which is an overlay of multiple cellular networks such as macrocells, microcells, and femtocells, has been verified to have greater end-user data rate and throughput as well as better indoor and cell-edge coverage~\cite{madhusudhanan2012downlink}. For these reasons, a number of works are presented to improve the performance of the HetNets. In~\cite{elshaer2016downlink}, the authors developed a general analytical model for a hybrid cellular network constituted of the traditional sub-6 GHz macrocells and mmWave small cells and analyzed how the user should associate with these two types of BSs in the uplink and downlink. To reduce the number of handoffs while maintaining user's Quality of Service (QoS) requirements in mmWave HetNets, a reinforcement learning based handoff policy named SMART was introduced in~\cite{sun2018smart}. The authors in~\cite{tan2018learning} proposed a joint resource allocation and network access problem to investigate the coexistence mechanism for license-assisted access LTE (LAA-LTE) based HetNets. In this joint resource allocation and network access problem, the normalized throughput of the unlicensed band was maximized while meeting the QoS requirement of incumbent WiFi user. Considering the multimedia application QoS in the heterogeneous wireless networks, an optimal distributed network selection scheme was presented in~\cite{si2010optimal}, which is applicable to both tight coupling and loose coupling scenarios in the integration of heterogeneous wireless network. The author in~\cite{xie2016joint} studied a joint user association and rate allocation problem for HTTP adaptive streaming in HetNets, where the system utility was maximized, and moreover, the user's requirement of Quality of Experience (QoE) was satisfied. By deploying distributed caching helpers in heterogeneous architectures, the capacity bottleneck of backhaul networks can be alleviated while the area spectral efficiency can be improved~\cite{li2017cooperative}.


\subsection{The Network Selection in Wireless Networks}

Recently, due to the diversity of the communication technologies and also the heterogeneity of the cellular networks, there are some works concerning the network selection in the wireless networks. For example, the authors in~\cite{wang2013mathematical} presented a tutorial of mathematical modeling for network selection in HetNets. In~\cite{jiang2012renewal}, a renewal theoretical framework was proposed to study the dynamic spectrum access in cognitive radio networks. Therein, the authors considered a challenging scenario that the secondary user does not know about the primary user's communication mechanism, which would induce additional interference in the primary user's communication. A stochastic game was proposed in~\cite{yang2013wireless} to investigate the network selection in the wireless access networks. In this stochastic game, the influence of subsequent users' network selection decision on an individual's throughput due to the limited available resources is captured by using the negative network effect. The authors in~\cite{chen2013evolutionarily} designed spectrum access mechanisms with the scenarios of both complete and incomplete network information, where an evolutionary spectrum access mechanism and a distributed learning mechanism were introduced to handle the scenarios with complete and incomplete network information, respectively. A new MAC algorithm was proposed in~\cite{rad2009utility} to achieve the network utility maximization while taking the form of random access without message passing. Hop selection problem in ultra-dense cellular networks was discussed by the authors in~\cite{ge20165g}. 


\subsection{Memory effect}

As one of the important rational components, memory formalism in finance represents the effect of memory on the economic operators for their action in the markets~\cite{forte2014handbook}. Specifically, the economic process with dynamic memory assumes the awareness of the economic agents about the history of this process. In such a process, the behavior of the economic agent is based on not only the instantaneous information of the state of the process at a given moment but also the information about the process states at previous moments over a time interval~\cite{tarasova2018concept}. This is well consistent with the reality, which results in that the economic processes with dynamic memory are actively studied in recent years. In~\cite{tarasova2016elasticity}, generalizations of price elasticity of demand to the case of processes with dynamic memory have been defined, where the changes of the price in the previous time horizon are taken into account. The authors in~\cite{tarasova2017economic} discussed a generalization of the economic growth model with a constant pace under the effect of dynamic memory. In this model, the economic agents are considered to have the memory of their previous decisions and states, which will consequently induce different reactions for the agents.

However, to our best knowledge, it still remains an open research question that what will happen if the decision-making of the users is affected by their memory during the network selection processes in the HetNets. The aforementioned works inspire us to incorporate the users' memory and investigate their interaction through a fractional evolutionary game theory framework. This is the main objective as well as the major contribution of our studies.


\section{Preliminary}
\label{sec:preliminary}

\begin{figure}[!]
	\centering
	\includegraphics[width=0.6\textwidth,trim=50 130 700 430,clip]{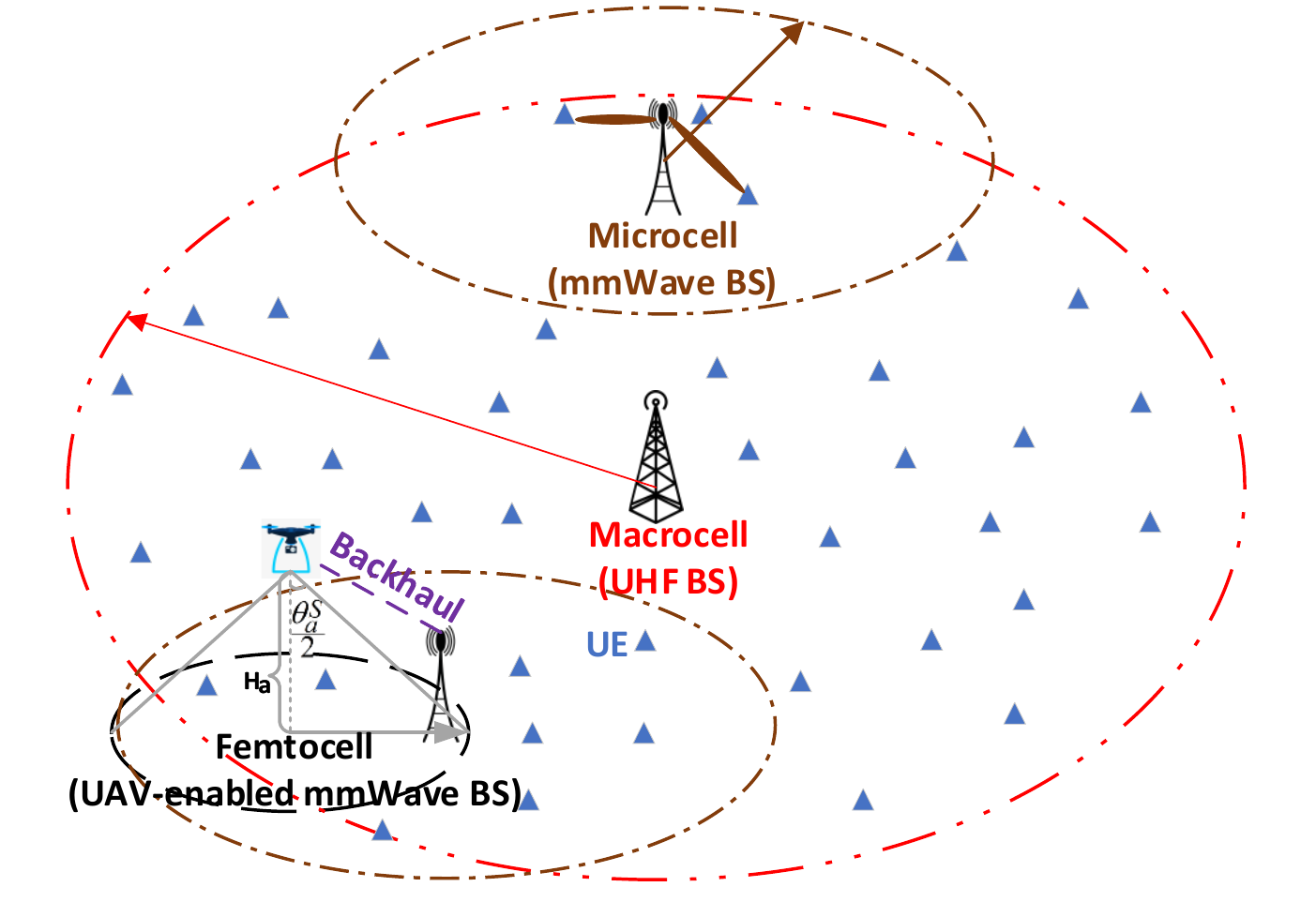}
	\caption{5G Heterogeneous Network Architecture}
	\label{fig:system_model}
\end{figure}

As shown in Fig.~\ref{fig:system_model}, we consider a 5G HetNet constituted of three types of BSs, i.e., UHF BS, mmWave BS, and UAV-enabled mmWave BS as the application scenario. Here, the propagation models of the aforementioned types of BSs are presented in Section~\ref{subsec:network_model}. For simplicity, we assume that ideal backhaul links exist between the UAV-enabled mmWave BSs and their nearby mmWave BSs. As the mmWave BSs' transmit power is much larger than that of the UAV-enabled mmWave BSs, it is reasonable to consider that the transmission rate of the backhaul links of the UAV-enabled mmWave BSs is much faster than that of the downlinks of the UAV-enabled mmWave BSs. Therefore, the throughput of the backhauls of the UAV-enabled mmWave BSs will not affect the throughput of their downlinks. In addition, as most of the energy of the UAV-enabled mmWave BS will be consumed by its flight~\cite{zeng2019energy}, we assume that the scheduling of the UAV-enabled mmWave communication service can ensure that the UAV-enabled mmWave BS will be replaced by a fully charged UAV-enabled mmWave BS before its battery is going to be depleted by the flight. Therefore, the limited energy of the UAV-enabled mmWave BS will not affect the UAV-enabled mmWave communication service provision. Note here that the service customer, i.e., the user, is considered to have an omnidirectional antenna.

Additionally, to incorporate the concept of power-law memory, we introduce the left-sided Caputo fractional derivative with respect to time~\cite{tarasova2017logistic} in Section~\ref{subsec:caputo_derivative}. The power-law memory can be captured by the left-sided Caputo fractional derivative, which has been widely adopted in the literature such as~\cite{tarasova2017logistic, tarasova2016fractional, tarasova2016elasticity}. 


\subsection{Network Model}
\label{subsec:network_model}

\begin{table*}[!]\label{tab:notation_network}
	\centering
	\caption{Notations for the Network Model}
	\begin{tabular}{|c|l|}
		\hline
		\hline
		{\bf{Symbol}} & {\bf{Definition}} \\
		\hline
		$W_m$, $W_u$ $W_a$ & Bandwidth of an mmWave channel, a UHF channel, and an UAV-enabled mmWave channel, respectively.\\
		\hline
		$P^{\rm{t}}_u$, $P^{\rm{t}}_m$, $P^{\rm{t}}_a$ & The transmit power of a UHF BS, an mmWave BS, and an UAV-enabled mmWave BS, respectively. \\
		\hline
		$\sigma^2_m$, $\sigma^2_u$, $\sigma^2_a$ & The noise power in the mmWave channel, UHF channel, and UAV-enabled mmWave channel, respectively.\\
		\hline
		$\alpha_u$, $\alpha_m^{\text{LOS}}$, $\alpha_m^{\text{NLOS}}$ & Path-loss exponent of UHF signals, LOS and NLOS path-loss exponent of mmWave signals. \\
		\hline
		$f_m$, $f_u$ & mmWave and UHF carrier frequencies. \\
		\hline
		$G_u$, $G_{m,i}$, $G_{a,i}$ & \makecell[l]{The gains of UHF BS antenna, mmWave BS antenna, and UAV-enabled mmWave BS antenna, respectively.} \\
		\hline
		$G^M$, $G^S$, $G^A$& \makecell[l]{mmWave BS's main lobe gain, mmWave BS's side lobe gain, and UAV-enabled mmWave BS's main lobe\\ gain, respectively.}\\
		\hline
		$C$, $D$ & Fractional LOS area $C$ in a disc of radius $D$. \\
		\hline
		$h_{u,i}$, $h_{m,i}$, $h_{a,i}$ & \makecell[l]{small-scale fading of UHF BS antenna, mmWave BS antenna, and UAV-enabled mmWave BS\\ antenna, respectively.} 	\\
		\hline
		$N_u$, $N_m$, $N_a$ & Number of users served by a UHF BS, an mmWave BS, and an UAV-enabled mmWave BS, respectively.	\\
		\hline
		$H_a$ & The altitude of UAV-enabled mmWave BS $a$. \\
		\hline
		$\theta_m^S$, $\theta^S_a$ & \makecell[l]{The main beamwidth of mmWave BS $m$ and the half-power beamwidth of UAV-enabled mmWave\\ BS $a$, respectively.} 	\\
		\hline
		${\cal{C}}^{\rm{U}}_i$, ${\cal{C}}^{\rm{M}}_i$, ${\cal{C}}^{\rm{A}}_i$ & The sets of UHF BSs, mmWave BSs and UAV-enabled mmWave BSs cover user $i$, respectively.	\\
		\hline
	\end{tabular}
\end{table*}

\subsubsection{UHF Propagation Model}
\label{subsubsec:uhf_propagation}

A set of UHF BSs, denoted by ${\cal{U}}$, is deployed to provide UHF communication service. The received downlink signal power at user $i\in{\cal{N}}$ from UHF BS $u\in{\cal{U}}$ is 
\begin{equation}
P_{u,i}=P^{\rm{t}}_uh_{u,i}\rho_uG_u\left[L_u\left(l_u-l_i\right)\right]^{-1},
\end{equation}
where $l_u$, $l_i\in{\mathbb{R}}^{3\times1}$ respectively denote the locations of UHF BS $u$ and user $i$\footnote{The third components of the three-dimensional vectors $l_u$ and $l_i$ indicate the altitudes of UHF BS $u$ and user $i$, respectively.}, $L_u\left(z\right)=\left\|z\right\|^{\alpha_u}$ is the path-loss function, $h_{u,i}$ is the small-scale fading, $G_u$ is the antenna gain, $\rho_u$ is the near-field path loss at $1$ m, i.e., $\rho_u=\left(\frac{c}{4\pi f_u}\right)^2$, $c$ represents the speed of light, and $P^{\rm{t}}_u$ is the transmit power. Let ${\cal{I}}_u$ denote the set of the UHF BSs using the same channel as that of UHF BS $u$, the co-channel interference of UHF BS $u$ is ${\sum\limits_{w\in{\cal{I}}_u} P^{\rm{t}}_wh_{w,i}\rho_wG_w\left[L_w\left(l_w-l_i\right)\right]^{-1}}$, and the SINR for user $i$ at UHF BS $u$ is $\frac{P^{\rm{t}}_uh_{u,i}\rho_uG_u\left[L_u\left(l_u-l_i\right)\right]^{-1}} {\sum\limits_{w\in{\cal{I}}_u} P^{\rm{t}}_wh_{w,i}\rho_wG_w\left[L_w\left(l_w-l_i\right)\right]^{-1}+\sigma^2_u}$. As a result, the downlink transmission rate of user $i$ at UHF BS $u$ is given as
\begin{equation}\label{eq:UHF_rate}
\begin{aligned}
&R_{u,i}=\frac{W_u}{N_u}\log_2\left(1+ \frac{P_{u,i}} {\sum\limits_{w\in{\cal{I}}_u} P_{w,i}+\sigma^2_u}\right)\\
=&\frac{W_u}{N_u}\log_2\left(1+ \frac{P^{\rm{t}}_uh_{u,i}\rho_uG_u\left[L_u\left(l_u-l_i\right)\right]^{-1}} {\sum\limits_{w\in{\cal{I}}_u} P^{\rm{t}}_wh_{w,i}\rho_wG_w\left[L_w\left(l_w-l_i\right)\right]^{-1}+\sigma^2_u}\right),
\end{aligned}
\end{equation}
where $W_u$ is the bandwidth of UHF channel at UHF BS $u$ and $N_u$ is the number of users selecting UHF BS $u$. Here, we consider that UHF BSs use TDMA, and hence the number of users selecting UHF BS $w\in{\cal{I}}_u$ will not affect the co-channel interference received by the users in the coverage of UHF BS $u$. Note here that $R_{u,i}$ represents the per-user transmission rate averaged over frame time.

\subsubsection{mmWave Propagation Model}
\label{subsubsec:mmwave_propagation}

The set of mmWave BSs, denoted by ${\cal{M}}$, provide mmWave communication service. The received signal power at user $i$ in the downlink from an mmWave BS $m\in{\cal{M}}$ is 
\begin{equation}\label{eq:mmwave_propagation}
P_{m,i}=p^{\rm{LOS}}_{m,i}P^{\rm{t}}_mh_{m,i}\rho_mG_{m,i}\left[L_m\left(l_m-l_i\right)\right]^{-1},
\end{equation}
where $l_m \in {\mathbb{R}}^{3\times1}$ is the location of mmWave BS $m$\footnote{The third component of $l_m$ indicates the altitude of mmWave BS $m$.}, $L_m\left(z\right)=\left\|z\right\|^{\alpha_m^s}$ is the path-loss function for an mmWave channel, therein $s\in\left\{{\text{LOS}}, {\text{NLOS}}\right\}$ is the link indicator, $h_{m,i}$ is the small-scale fading, $G_{m,i}$ is the antenna gain, $\rho_m$ is the near-field path loss at $1$m, i.e., $\rho_m=\left(\frac{c}{4\pi f_m}\right)^2$, and $P^{\rm{t}}_m$ is the transmit power. $p^{\rm{LOS}}_{m,i}$ is the line-of-sight probability as a function of the distance between user $i$ and mmWave BS $m$, i.e., $r_{m,i}=\left\|l_m-l_i\right\|$, given by~\cite{singh2015tractable}
\begin{equation}
p^{\rm{LOS}}_{m,i}=\left\{
\begin{aligned}
C,\quad &{\text{if}}\; r_{m,i} \le D,&\\
0,\quad &{\text{otherwise}},&
\end{aligned}
\right.
\end{equation}
where $C\in\left[0,1\right]$ can be interpreted as the average LOS area in the spherical region around a typical user, e.g., $\left[C,D\right]=\left[0.081,250\right]$ for Chicago and $\left[C,D\right]=\left[0.117,200\right]$ for Manhattan~\cite{wang2016physical}. The mmWave BSs are equipped with directional antennas and the antenna gain of mmWave BS $m$ for user $i$, i.e., $G_{m,i}$, is given by 
\begin{equation}
G_{m,i}\left(\theta_{m,i}\right)=\left\{
\begin{aligned}
G^{\rm{M}},\quad &{\text{if}}\;\left|\theta_{m,i}\right|\le\frac{\theta^{\rm{S}}_m}{2},&\\
G^{\rm{S}},\quad &{\text{otherwise}},&
\end{aligned}
\right.
\end{equation}
where $\theta_{m,i}$ is the user $i$'s angle with respect to the best beam alignment, and $\theta_m^{\rm{S}}$ represents the main beamwidth of mmWave BS $m$. With the directional antennas, the SNR for user $i$ at mmWave BS $m$ is  $\frac{p^{\rm{LOS}}_{m,i}P^{\rm{t}}_mh_{m,i}\rho_mG_{m,i}\left[L_m\left(l_m-l_i\right)\right]^{-1}} {\sigma_m^2}$, where $\sigma^2_m$ is the noise power in the mmWave BS $m$'s channel. Correspondingly, the downlink transmission rate of user $i$ at mmWave BS $m$ is 
\begin{equation}\label{eq:mmwave_rate}
\begin{aligned}
&R_{m,i}=\frac{W_m}{N_m}\log_2\left(1+\frac{P_{m,i}} {\sigma_m^2}\right)\\
=&\frac{W_m}{N_m}\log_2\left(1+\frac{p^{\rm{LOS}}_{m,i}P^{\rm{t}}_mh_{m,i}\rho_mG_{m,i}\left[L_m\left(l_m-l_i\right)\right]^{-1}} {\sigma_m^2}\right),
\end{aligned}
\end{equation}
where $W_m$ is the bandwidth of an mmWave channel at mmWave BS $m$ and $N_m$ is the number of users selecting mmWave BS $m$.

\subsubsection{UAV-enabled mmWave Propagation Model} 
\label{subsubsec:uav_propagation}

The set of UAV-enabled mmWave BSs, denoted by ${\cal{A}}$, is deployed to provide UAV-enabled mmWave communication service. The received signal power at user $i$ in the downlink from an UAV-enabled mmWave BS $a\in{\cal{A}}$ is 
\begin{equation}
P_{a,i}=p^{\rm{LOS}}_{a,i}P^{\rm{t}}_ah_{a,i}\rho_aG_{a,i}\left[L_a\left(l_a-l_i\right)\right]^{-1},
\end{equation}
where the definitions of $P^{\rm{t}}_a$, $h_{a,i}$, $\rho_a$, and $L_a\left(z\right)$ are similar to that in~(\ref{eq:mmwave_propagation}). Here, $l_a\in{\mathbb{R}}^{3\times1}$ is the location of UAV-enabled mmWave BS $a$, and the altitude of UAV-enabled mmWave BS $a$ is denoted by $H_a$, which is the third component of $l_a$. Similar to~\cite{lyu2017blocking}, each UAV-enabled mmWave BS $a\in{\cal{A}}$ is equipped with a directional antenna pointing downward at the ground, whose half-power beamwidths are $\theta_{a}^{S}$ radians with $\frac{\theta_{a}^{S}}{2}\in\left(0, \frac{\pi}{2}\right)$. Then, the corresponding antenna gain for user $i$ at UAV-enabled mmWave BS $a$ is~\cite{balanis2016antenna}
\begin{equation}
G_{a,i}\left(l_a,l_i\right)=\left\{
\begin{aligned}
G^A, \quad& {\text{if}} \; \left\|\left.l_a\right|_{H_a=0}-l_i\right\|\le H_a \tan \frac{\theta_{a}^{S}}{2},&\\
0, \quad &{\text{otherwise}}.&
\end{aligned}
\right.
\end{equation}
$p^{\rm{LOS}}_{a,i}$ is the line-of-sight probability as a function of the distance between the user $i$ and UAV-enabled mmWave BS $a$ on the 2-D plane, i.e., $r_{a,i}=\left\|\left.l_a\right|_{H_a=0}-l_i\right\|$, as well as the altitude of UAV-enabled mmWave BS $a$, i.e., $H_a$, given by
$
p^{\rm{LOS}}_{a,i}\left(r_{a,i},H_a\right)=\frac{1}{1+b\exp\left(-c\left(\frac{180}{\pi}\tan^{-1}\left(\frac{H_a}{r_{a,i}}\right)-b\right)\right)}
$,
where $b$ and $c$ are the constants depending on the environment, and $\left.l_a\right|_{H_a=0}$ is the projection of UAV-enabled mmWave BS $a$ on the ground~\cite{al2014optimal}~\footnote{Note that due to the different propagation environments, the LOS models of the mmWave BS and the UAV-enabled mmWave BS are different.}. Due to the directional antenna equipped by the UAV-enabled mmWave BSs, the corresponding expected SNR of user $i$ at UAV-enabled mmWave BS $a$ is $\frac{ p^{\rm{LOS}}_{a,i} P^{\rm{t}}_ah_{a,i}\rho_aG_{a,i}\left[L_a\left(l_a-l_i\right)\right]^{-1}} {\sigma_a^2}$, where $\sigma_a^2$ is the noise power in UAV-enabled mmWave BS $a$'s channel. As a result, the downlink transmission rate of user $i$ at UAV-enabled mmWave BS $a$ is
\begin{equation}\label{eq:uav_rate}
\begin{aligned}
&R_{a,i}=\frac{W_a}{N_a}\log_2\left(1+ \frac{P_{a,i}} {\sigma_a^2}\right)\\
=&\frac{W_a}{N_a}\log_2\left(1+ \frac{p^{\rm{LOS}}_{a,i}P^{\rm{t}}_ah_{a,i}\rho_aG_{a,i}\left[L_a\left(l_a-l_i\right)\right]^{-1}} {\sigma_a^2}\right),
\end{aligned}
\end{equation}
where $W_a$ is the bandwidth of mmWave channel at UAV-enabled mmWave BS $a$ and $N_a$ is the number of users selecting UAV-enabled mmWave BS $a$.


\subsection{The Concept of the Power-law Memory}
\label{subsec:caputo_derivative}

As introduced in~\cite{tarasova2018concept}, most of the economic process are memory-aware due to the fact the memory plays an essential role not only in the psychology but also in modern physics. To study the memory-aware economic processes, fractional calculus, which are the integrals and derivatives of non-integer order, have been widely introduced as a promising approach to investigate the role of the memory in the economic processes such as~\cite{wang1965principle} and~\cite{coleman1968general}. Specifically, the memory-aware economic processes are derived as follows:
\begin{enumerate}
	\item Basically, the most general formulation of the economic processes can be presented in a symbolic expression as $Y \left(t\right) = F_0^t \left(X \left(\tau\right)\right) + Y_0$, where $X\left(\tau\right)$ with $\tau\in\left[0,t\right]$ is the time-dependent input, $Y\left(t\right)$ is the time-dependent output, and $Y_0$ is the initial state of the time-dependent output of the processes. Therein, $F_0^t$ is an operator describing the relationship between the time-dependent input $X\left(\tau\right)$ and output $Y\left(t\right)$ and hence is a mapping that can determine the time-dependent output $Y\left(t\right)$ based on the time-dependent input $X\left(\tau\right)$ with $\tau \in \left[0,t\right]$. Initially, an integer-order integral based approach, i.e., $Y \left(t\right) = \int^t_0 X \left(\tau\right) {\rm{d}}t + Y_0$ with $F^t_0 \left(X \left(\tau\right)\right) := \int_0^t X \left(\tau\right) {\rm{d}}t$ , has been widely adopted to study the dynamics in such economic processes, e.g.,~\cite{niyato2008dynamics} and~\cite{semasinghe2014evolutionary}. 
	
	\item However, a vital drawback exists in the integer-order integral based approach. By taking derivative of $Y\left(t\right) = \int\limits_0^tX\left(\tau\right){\rm{d}}t + Y_0$ with respect to $t$, the formulation of the economic processes based on the integer-order integral can be represented in the expression of the ordinary differential equation as $\frac{{\rm{d}}}{{\rm{d}}t}Y\left(t\right) = X\left(t\right)$ with $Y\left(0\right) = Y_0$. In these processes, $\frac{{\rm{d}}}{{\rm{d}}t}Y\left(t\right)$ does not have any information about $X\left(\tau\right)$ for all $\tau\in\left[0,t\right)$ and hence depends only on $X\left(t\right)$, which means that the reaction of the economic processes based on the integer-order integral is not aware of the changes of $X\left(\tau\right)$ for all $\tau\in\left[0,t\right)$ and further not memory-aware.
	
	\item To address this drawback, the economists in~\cite{tarasova2018concept} considered the Volterra operator, i.e., a mapping, which can be specifically expressed as $F_0^t\left(X\left(\tau\right)\right) := \int\limits_0^tM_\beta\left(t - \tau\right)X\left(\tau\right){\rm{d}}t$, where $M_\beta\left(t - \tau\right)$ is a weighting function to measure the impact level of the previous input $X\left(\tau\right)$ on the current output $Y\left(t\right)$ according to the time distance between $\tau$ and $t$. Therein, the value of the weighting function $M_\beta\left(t - \tau\right)$ changes with respect to $\tau$ such that the dynamic characteristic of the memory can be captured. Additionally, by introducing such a weighting function into the economic process, the first derivative of $Y\left(t\right)$, i.e., $\frac{{\rm{d}}}{{\rm{d}}t}Y\left(t\right)=M_\beta\left(t\right)X\left(0\right) + \int_0^t M_\beta\left(t - \tau\right)\left[\frac{{\rm{d}}}{{\rm{d}}\tau} X\left(\tau\right)\right]{{\rm{d}}\tau}$, no longer depends only on $X\left(t\right)$ but also on $X\left(\tau\right)$ with $\tau\in\left[0,t\right)$. The specifical form of $M_\beta\left(t - \tau\right)$ is considered to be in the power form, i.e., $M_\beta\left(t - \tau\right) = \frac{1}{\Gamma\left(\beta\right)} \frac{1}{\left(t-\tau\right)^{1-\beta}}$, where
	\begin{equation}\label{eq:gamma_function}
	\Gamma\left(z\right)=\int^{+\infty}_0 x^{z-1}e^{-x}{{\rm{d}}x}
	\end{equation}
	is the gamma function~\cite{mathworld2019gamma}. It is worth noting that the power-law characteristic of the human's memory has been extensively demonstrated in experiments~\cite{edelman2019evolution}, e.g., power-law forgetting~\cite{wixted1990analyzing} and power-law human learning~\cite{anderson2000learning}.
	
	\item To represent the economic processes in the expression of fractional equation, we take the derivation of $Y\left(t\right)$ at the order of $\beta$ by using the left-sided Caputo fractional derivative. As a result, the memory-aware economic processes can be represented as follows:
	\begin{equation}\label{eq:memory_affecting_time_variant_system}
	{}_{{0}}^{\rm{C}}D_t^\beta Y\left(t\right) = X\left(t\right),
	\end{equation}
	with the initial state for the economic processes, i.e., $Y\left(0\right)=Y_0$, where ${}_{{0}}^{\rm{C}}D_t^\beta Y\left(t\right)$ is the left-sided Caputo fractional derivative of $Y\left(t\right)$ at the order of $\beta$ defined as follows~\cite{tarasova2017logistic}
	\begin{equation}\label{eq:caputo_left_derivative}
	{}_{{0}}^{\rm{C}}D_t^\beta Y\left(t\right)= \frac{1}{\Gamma\left(\left\lceil {\beta} \right\rceil - \beta\right)}\int_{0}^{t} \frac{Y^{\left(\left\lceil {\beta} \right\rceil\right)}\left(\tau\right)}{\left(t-\tau\right)^{\beta+1-\left\lceil {\beta} \right\rceil}} {\rm{d}}\tau,
	\end{equation}
	and $\left\lceil {\beta} \right\rceil$ denotes the integer obtained by rounding up $\beta$. 
\end{enumerate}

As illustrated in~\cite{tarasova2017logistic}, by incorporating the power-law memory, the following key properties of the impact of the user's memory on their decision-making process will be satisfied:
\begin{itemize}
	\item The past experience of the user at different points in time play different roles in its decision-making such that the dynamic characteristic of the memory can be captured.
	
	\item The user is affected by its experience within the memory rather than that at the present point in time. For example, unlike the economic processes studied in~\cite{niyato2008dynamics} and~\cite{semasinghe2014evolutionary} that the users are only aware of their instantaneous achievable utilities, the memory-affecting users are allowed to take into account their past experience within their memory.
	
	\item The memory-affecting users perform to react differently from the memory-unaware users do, which can be verified in reality that the empirical stock trader prefers conservative style rather than radical one. As such, the empirical trader reacts more cautiously than the newbie does.
\end{itemize}

Note that the fractional calculus has been widely adopted to depict the memory effect in multiple disciplines such as economics, e.g.,~\cite{tarasova2016fractional, tarasova2016elasticity}, and human science, e.g.,~\cite{dassiosfractional, lubashevsky2014fractional}. Specifically, the generalization of the economic model of natural growth with power-law memory suggested in~\cite{tarasova2016fractional} applied the fractional differential equation to investigate the proportional relationship between the income and the growth rate of the output.~\cite{dassiosfractional} presented a method to estimate the rate at which the human brain learns a certain amount of knowledge is proportional to the amount of knowledge yet to be learned by using the fractional differential equation.


\section{System Description}
\label{sec:system_model}

In this section, we give two representative system models. For the purpose to justify the advantage of incorporating the memory effect and avoid being affected by any other environmental factor, we first consider a system model, i.e., HetNet with homogeneous users, in Section~\ref{subsec:system_model_special_case}. 
Then, a generalization of the simplified system model, which considers the scenario with heterogeneous users, is presented in Section~\ref{subsec:system_model_general}.


\subsection{A Heterogeneous Network with Homogeneous Users}
\label{subsec:system_model_special_case}

We first consider a HetNet, where three BSs, i.e., UHF BS $u$, mmWave BS $m$, and UAV-enabled mmWave BS $a$, are deployed in the network. The homogeneous users are regarded as a group, denoted by ${\cal{N}}^G$, to select the BSs and access the communication services. That is, $N^G_u$, $N^G_m$, and $N^G_a$ users will select UHF BS $u$, mmWave BS $m$, and UAV-enabled mmWave BS $a$, respectively, where $N^G_u+N^G_m+N^G_a=N^G$ is the total number of the users. Accordingly, each user in the group selects UHF BS $u$, mmWave BS $m$, and UAV-enabled mmWave BS $a$ at the probabilities of $y_u=\frac{N^G_u}{N^G}$, $y_m=\frac{N^G_m}{N^G}$, and $y_a=\frac{N^G_a}{N^G}$, respectively. Here, in the system model of the HetNet with homogeneous users, we assume that the users are homogeneous in the long-term basis. For example, the users have the same antenna gain and are likely to be in the same locations, e.g., indoor users working in the office.

\subsubsection{User's Utility}

Based on the expression of the downlink transmission rate of UHF BS $u$, i.e.,~(\ref{eq:UHF_rate}), the expected downlink transmission rate for the users selecting UHF BS $u$ can be derived as follows. As the expected number of the users selecting UHF BS $u$ is $y_{u}N^G$ and UHF BS $u$ is based on TDMA, each user will access the channel with full bandwidth $W_u$ at the probability of $\frac{1}{y_{u}N^G}$ for every time shot. Therefore, the user can obtain the expected bandwidth of $\frac{W_u}{y_{u}N^G}$ in every time slot. Then, the expected downlink transmission rate of UHF BS $u$ is
\begin{equation}\label{eq:rate_UHF_special}
{\tilde R}_{u}=\frac{W_u}{y_{u}N^G}\log_2\left(1+ \frac{P_{u}} {\sigma^2_u}\right).
\end{equation}
Let $\lambda_{u}$ denote the intrinsic value of unit bitrate, i.e, the utility that each user can obtain from downloading one unit data by selecting the communication service of UHF BS $u$, the utility excluding the cost that each user can obtain from selecting UHF BS $u$ is therefore $\lambda_u{\tilde{R}}_{u}$. To access the UHF BS $u$'s communication service, the user needs to pay the service fee, which is proportional to the user's download and further the transmission rate. We denote the price of downloading one unit data through UHF BS $u$'s communication service by $\phi_{u}$, the utility including the cost that each user can obtain by selecting UHF BS $u$ is given by
\begin{equation}\label{eq:utility_UHF_special}
\Pi_{u}=\lambda_{u}{\tilde{R}}_{u}-\phi_{u}{\tilde{R}}_{u}=\left(\lambda_{u}-\phi_{u}\right){\tilde{R}}_{u}=\lambda^\phi_{u}{\tilde{R}}_{u},
\end{equation}
where $\lambda^\phi_{u} = \lambda_{u}-\phi_{u}$.

Similar to~(\ref{eq:rate_UHF_special}) and based on~(\ref{eq:mmwave_rate}) and~(\ref{eq:uav_rate}), the expected downlink transmission rates of mmWave BS $m$ and UAV-enabled mmWave BS $a$ for each user are ${\tilde{R}}_{m}=\frac{W_m}{y_{m}N^G}\log_2\left(1+\frac{P_{m}} {\sigma_m^2}\right)$
and ${\tilde{R}}_{a}=\frac{W_a}{y_{a}N^G}\log_2\left(1+ \frac{P_{a}} {\sigma_a^2}\right)$, respectively. Then, similar to~(\ref{eq:utility_UHF_special}), the utilities of each user obtained from selecting mmWave BS $m$ and UAV-enabled mmWave BS $a$ are respectively
\begin{equation}\label{eq:utility_mmWave_special}
\Pi_{m}=\lambda_{m}{\tilde{R}}_{m}-\phi_{m}{\tilde{R}}_{m}=\left(\lambda_{m}-\phi_{m}\right){\tilde{R}}_{m}=\lambda^\phi_{m}{\tilde{R}}_{m}
\end{equation}
and
\begin{equation}\label{eq:utility_UAV_special}
\Pi_{a}=\lambda_{a}{\tilde{R}}_{a}-\phi_{a}{\tilde{R}}_{a}=\left(\lambda_{a}-\phi_{a}\right){\tilde{R}}_{a}=\lambda^\phi_{a}{\tilde{R}}_{a},
\end{equation}
where $\lambda_{m}$ and $\lambda_{a}$ are respectively the intrinsic values of unit bitrate for mmWave BS $m$ and UAV-enabled mmWave BS $a$, $\phi_{m}$ and $\phi_{a}$ are respectively the price of downloading one unit data through the  mmWave BS $m$'s and UAV-enabled mmWave $a$'s communication service,  $\lambda^\phi_{m} = \lambda_{m}-\phi_{m}$, and $\lambda^\phi_{a} = \lambda_{a}-\phi_{a}$.

\subsubsection{Game Formulation for the Heterogeneous Cellular Network with Homogeneous Users}

With the utility functions defined in~(\ref{eq:utility_UHF_special}),~(\ref{eq:utility_mmWave_special}), and~(\ref{eq:utility_UAV_special}), i.e., $\Pi_{u}$, $\Pi_{m}$, and $\Pi_{a}$, respectively, the average utility of the user is 
${\bar{\Pi}}=y_{u}\Pi_{u}+ y_{m}\Pi_{m}+ y_{a}\Pi_{a}$, where $y_{u}$, $y_{m}$, and $y_{a}$ respectively represent the probabilities that the users select UHF BS $u$, mmWave BS $m$, and UAV-enabled mmWave BS $a$. Then, the replicator dynamics yields the following classical evolutionary game, i.e., the system of Ordinary Differential Equations (ODEs):
\begin{equation}\label{eq:evolutionary_game_special}
\begin{aligned}
\frac{\rm{d}}{{\rm{d}} t}{ {y}}_{u}\left(t\right)&=\exp\left(-\delta\right)y_{u}\left(t\right)\left[\Pi_u\left(t\right) -{\bar{\Pi}}\left(t\right)\right],\\
\frac{\rm{d}}{{\rm{d}} t}{ {y}}_{m}\left(t\right)&=\exp\left(-\delta\right)y_{m}\left(t\right)\left[\Pi_m\left(t\right) -{\bar{\Pi}}\left(t\right)\right],\\
\frac{\rm{d}}{{\rm{d}} t}{ {y}}_{a}\left(t\right)&=\exp\left(-\delta\right)y_{a}\left(t\right)\left[\Pi_a\left(t\right) -{\bar{\Pi}}\left(t\right)\right],
\end{aligned}
\end{equation}
with the initial strategies, i.e., the Dirichlet boundary condition, of $y_{u}\left(0\right) = y_{u}^0$, $y_{m}\left(0\right) = y_{m}^0$, and $y_{a}\left(0\right) = y_{a}^0$, where $\frac{\rm{d}}{{\rm{d}} t}$ represents the first derivative with respect to $t$ and $\delta$ controls the gain from the rate of strategy adaptation. Moreover, by incorporating the power-law memory, we formulate the fractional evolutionary game based on the left-sided Caputo fractional derivative as follows:
\begin{equation}\label{eq:fractional_evolutionary_game_special}
\begin{aligned}
{}_{{0}}^{\rm{C}}{\rm{D}}_t^\beta  {y}_{u}\left(t\right)&=\exp\left(-\delta\right)y_{u}\left(t\right)\left[\Pi_{u}\left(t\right) -{\bar{\Pi}}\left(t\right)\right],\\
{}_{{0}}^{\rm{C}}{\rm{D}}_t^\beta {y}_{m}\left(t\right)&=\exp\left(-\delta\right)y_{m}\left(t\right)\left[\Pi_{m}\left(t\right) -{\bar{\Pi}}\left(t\right)\right]\\
{}_{{0}}^{\rm{C}}{\rm{D}}_t^\beta  {y}_{a}\left(t\right)&=\exp\left(-\delta\right)y_{a}\left(t\right)\left[\Pi_{a}\left(t\right) -{\bar{\Pi}}\left(t\right)\right],
\end{aligned},
\end{equation}
with the initial strategies of $y_{u}\left(0\right) = y_{u}^0$, $y_{m}\left(0\right) = y_{m}^0$, and $y_{a}\left(0\right) = y_{a}^0$, where $\beta\in\left(0,1\right)\cup\left(1,2\right)$ is the order of the Caputo fractional derivative defined in~(\ref{eq:caputo_left_derivative}). Note here that the specific physical meaning of $\beta$ is explained in Section~\ref{subsec:performance_special_case}. Since our fractional evolutionary game is a standard system of fractional differential equations with the Dirichlet boundary condition, the solution of which can be obtained numerically by using the standard algorithm in~\cite{jain1979numerical}.


\subsection{A Heterogeneous Cellular Network with Heterogeneous Users}
\label{subsec:system_model_general}

In the network, there are three types of BSs deployed in the HetNet, i.e., UHF BSs, mmWave BSs, and UAV-enabled mmWave BSs, each of which contains a set of BSs, i.e., ${\cal{U}}$, ${\cal{M}}$, and ${\cal{A}}$ respectively denote the sets of UHF BSs, mmWave BSs, and UAV-enabled mmWave BSs. A set of users, denoted by ${\cal{N}}$, can select the BSs for their services, e.g., wireless multimedia. Each user $i\in{\cal{N}}$ will select one of the BSs and access the communication service. That is, user $i$ select UHF BS $u$, mmWave BS $m$, and UAV-enabled mmWave BS $a$ at the probabilities of $x_{u,i}$, $x_{m,i}$, and $x_{a,i}$, respectively. Note here that as a generalization of the scenario of the HetNet with homogeneous users, we extend the number of BSs in each type and consider the users to be independent, the parameters of which are different to each other, i.e., nonidentical.

\subsubsection{Users' Utilities}

Based on the equation of the downlink transmission rate for user $i$ at UHF BS $u$, i.e.,~(\ref{eq:UHF_rate}), the users selecting UHF BS $u$ will share the bandwidth of UHF BS $u$. The expected number of the users selecting UHF BS $u$ is $\sum\limits_{j\in{\cal{N}}_u}x_{u,j}$, and hence each user $i\in{\cal{N}}_u$ will access the channel of full bandwidth $W_u$ at the probability of $\frac{1}{{\sum\limits_{j\in{\cal{N}}_u}x_{u,j}}}$ for every time slot, where ${\cal{N}}_u$ is the set of the users in the coverage of UHF BS $u$, and ${\cal{N}}_v$, ${\cal{N}}_{v'}$, ${\cal{N}}_m$, and ${\cal{N}}_a$ are defined similarly to the way that ${\cal{N}}_u$ is defined. Therefore, the user can obtain the expected bandwidth of $\frac{W_u}{{\sum\limits_{j\in{\cal{N}}_u}x_{u,j}}}$ for every time slot. By selecting UHF BS $u$ at the probability of $x_{u,i}$, the expected channel access of user $i$ at UHF BS $u$ is therefore $W_u\frac{x_{u,i}}{\sum\limits_{j\in{\cal{N}}_u}x_{u,j}}$. As for the SINR of user $i$ at UHF BS $u$, we present a simple example, where three UHF BSs, i.e., $u$, $v$, and $v'$, are deployed in the network, and the scenario of more than three UHF BSs can be further investigated in a similar way. In the simple example, UHF BSs $u$, $v$, and $v'$ are using the same channel, i.e., ${\cal{I}}_u=\left\{v,\,v'\right\}$, and hence will incur co-channel interference to each other. The SINR of user $i$ at UHF BS $u$ depends on the BS selection strategies of the users in the coverage of UHF BSs $v$ and $v'$. Accordingly, there are four cases: 
\begin{enumerate}
	\item none of the users $j\in{\cal{N}}_v$ selects UHF BS $v$, and none of the users $j'\in{\cal{N}}_{v'}$ selects UHF BS $v'$;
	
	\item none of the users $j\in{\cal{N}}_v$ selects UHF BS $v$ while there is at least one user $j'\in{\cal{N}}_{v'}$ selecting UHF BS $v'$;
	
	\item there is at least one user $j\in{\cal{N}}_v$ selecting UHF BS $v$ while none of the users $j'\in{\cal{N}}_{v'}$ selects UHF BS $v'$;
	
	\item there is at least one user $j\in{\cal{N}}_v$ selecting UHF BS $v$, and there is at least one users $j'\in{\cal{N}}_{v'}$ selecting UHF BS $v'$. 
\end{enumerate}
For the first case, there is no co-channel interference from UHF BSs $v$ and $v'$ to UHF BS $u$, the SINR of user $i$ at UHF BS $u$ is therefore $\frac{P_{u,i}} {\sigma^2_u}$ with $\prod\limits_{j\in{\cal{N}}_v}\left(1-x_{v,j}\right)\prod\limits_{j'\in{\cal{N}}_{v'}}\left(1-x_{v',j'}\right)$ as the case happening probability\footnote{As we assume in Section~\ref{subsubsec:uhf_propagation} that the UHF BSs use TDMA, user $i\in{\cal{N}}_u$ will suffer from the co-channel interference caused by the communication service of UHF BS $w\in{\cal{I}}_u$ only when there is at least one user $j\in{\cal{N}}_w$ selecting UHF BS $w$.}. Similarly, the SINRs of user $i$ at UHF BS $u$ in the second, third, and fourth cases are $\frac{P_{u,i}} {P_{v',i}+\sigma^2_u}$, $\frac{P_{u,i}} {P_{v,i}+\sigma^2_u}$, and $\frac{P_{u,i}} {\sum\limits_{w\in{\cal{I}}_u}P_{w,i}+\sigma^2_u}$, respectively. Moreover, the happening probabilities for the second, third, and fourth cases are $\prod\limits_{j\in{\cal{N}}_v}\left(1-x_{v,j}\right)\left[1-\prod\limits_{j'\in{\cal{N}}_{v'}}\left(1-x_{v',j'}\right)\right]$, $\left[1-\prod\limits_{j\in{\cal{N}}_v}\left(1-x_{v,j}\right)\right]\prod\limits_{j'\in{\cal{N}}_{v'}}\left(1-x_{v',j'}\right)$, and $\left[1-\prod\limits_{j\in{\cal{N}}_v}\left(1-x_{v,j}\right)\right]\left[1-\prod\limits_{j'\in{\cal{N}}_{v'}}\left(1-x_{v',j'}\right)\right]$, respectively. Based on~(\ref{eq:UHF_rate}), the expected downlink transmission rate of user $i$ at UHF BS $u$ is therefore defined in~(\ref{eq:uhf_rate_general}).

\begin{figure*}
	\begin{equation}\label{eq:uhf_rate_general}
	\begin{aligned}
	{\bar R}_{u,i}=&W_u\frac{x_{u,i}}{\sum\limits_{j\in{\cal{N}}_u}x_{u,j}}\left\{\log_2\left(1+ \frac{P_{u,i}} {\sigma^2_u}\right)\prod\limits_{j\in{\cal{N}}_v}\left(1-x_{v,j}\right)\prod\limits_{j'\in{\cal{N}}_{v'}}\left(1-x_{v',j'}\right)\right.\\
	&+\log_2\left(1+ \frac{P_{u,i}} {P_{v',i}+\sigma^2_u}\right)\prod\limits_{j\in{\cal{N}}_v}\left(1-x_{v,j}\right)\left[1-\prod\limits_{j'\in{\cal{N}}_{v'}}\left(1-x_{v',j'}\right)\right] \\
	&+\log_2\left(1+ \frac{P_{u,i}} {P_{v,i}+\sigma^2_u}\right)\left[1-\prod\limits_{j\in{\cal{N}}_v}\left(1-x_{v,j}\right)\right]\prod\limits_{j'\in{\cal{N}}_{v'}}\left(1-x_{v',j'}\right)\\
	&\left.+\log_2\left(1+ \frac{P_{u,i}} {\sum\limits_{w\in{\cal{I}}_u}P_{w,i}+\sigma^2_u}\right)\left[1-\prod\limits_{j\in{\cal{N}}_v}\left(1-x_{v,j}\right)\right]\left[1-\prod\limits_{j'\in{\cal{N}}_{v'}}\left(1-x_{v',j'}\right)\right]	\right\}.
	\end{aligned}
	\end{equation}
\end{figure*}

Let $\phi_{u,i}$ denote the price of downloading one unit data through UHF BS $u$'s communication service for user $i$. Then, the utility of user $i$ obtained from ing UHF BS $u$ is a function of the transmission rate given by
\begin{equation}\label{eq:utility_UHF_general}
U_{u,i}=\lambda_{u,i}{\bar{R}}_{u,i}-\phi_{u,i}{\bar{R}}_{u,i}=\left(\lambda_{u,i}-\phi_{u,i}\right){\bar{R}}_{u,i}=\lambda^\phi_{u,i}{\bar{R}}_{u,i},
\end{equation}
where $\lambda_{u,i}$ denotes the intrinsic value of unit bitrate for user $i$ and $\lambda^\phi_{u,i} = \lambda_{u,i}-\phi_{u,i}$.

Similar to~(\ref{eq:uhf_rate_general}) and based on~(\ref{eq:mmwave_rate}) and~(\ref{eq:uav_rate}), the expected downlink transmission rates of user $i$ at mmWave BS $m$ and UAV-enabled mmWave BS $a$ are
${\bar{R}}_{m,i}=W_m\frac{x_{m,i}}{\sum\limits_{j\in{\cal{N}}_m}x_{m,j}}\log_2\left(1+\frac{P_{m,i}} {\sigma_m^2}\right)$ and $ {\bar{R}}_{a,i}=W_a\frac{x_{a,i}}{\sum\limits_{j\in{\cal{N}}_a}x_{a,j}}\log_2\left(1+ \frac{P_{a,i}} {\sigma_a^2}\right)$, respectively. Accordingly, the utilities of user $i$ obtained from selecting mmWave BS $m$ and UAV-enabled mmWave BS $a$ are respectively
\begin{equation}\label{eq:utility_mmWave_general}
\begin{aligned}
U_{m,i}=&\lambda_{m,i}{\bar{R}}_{m,i}-\phi_{m,i}{\bar{R}}_{m,i}\\
=&\left(\lambda_{m,i}-\phi_{m,i}\right){\bar{R}}_{m,i} = \lambda_{m,i}^\phi{\bar{R}}_{m,i}
\end{aligned}
\end{equation}
and
\begin{equation}\label{eq:utility_UAV_general}
U_{a,i}=\lambda_{a,i}{\bar{R}}_{a,i}-\phi_{a,i}{\bar{R}}_{a,i}=\left(\lambda_{a,i}-\phi_{a,i}\right){\bar{R}}_{a,i} = \lambda_{a,i}^\phi{\bar{R}}_{a,i},
\end{equation}
where $\lambda_{m,i}$ and $\lambda_{a,i}$ are respectively the intrinsic values of unit bitrate of mmWave BS $m$ and UAV-enabled mmWave BS $a$ for user $i$, $\phi_{m,i}$ and $\phi_{a,i}$ are respectively the prices of downloading one unit data through mmWave BS $m$'s and UAV-enabled mmWave $a$'s communication services for user $i$, $\lambda_{m,i}^\phi = \lambda_{m,i}-\phi_{m,i}$, and $\lambda_{a,i}^\phi = \lambda_{a,i}-\phi_{a,i}$.

\subsubsection{Game Formulation for the Heterogeneous Cellular Network with Heterogeneous Users}

With the utilities of user $i$ obtained from UHF BS $u$, mmWave BS $m$, and UAV-enabled mmWave BS $a$, i.e.,~(\ref{eq:utility_UHF_general}),~(\ref{eq:utility_mmWave_general}), and~(\ref{eq:utility_UAV_general}), respectively, the average utility of user $i$ from all the BSs is expressed by ${\bar{U}}_i=\sum\limits_{u\in{\cal{C}}^{\rm{U}}_i}x_{u,i}U_{u,i}+ \sum\limits_{m\in{\cal{C}}^{\rm{M}}_i}x_{m,i}U_{m,i}+
\sum\limits_{a\in{\cal{C}}^{\rm{A}}_i}x_{a,i}U_{a,i}$, where ${\cal{C}}^{\rm{U}}_i$, ${\cal{C}}^{\rm{M}}_i$, and ${\cal{C}}^{\rm{A}}_i$ are the sets of UHF BSs, mmWave BSs, and UAV-enabled mmWave BSs covering user $i$, respectively. Then, the replicator dynamics yields the following classical evolutionary game:
\begin{equation}\label{eq:evolutionary_game}
\begin{aligned}
\frac{\rm{d}}{{\rm{d}} t}{ {x}}_{u,i}\left(t\right)&=\exp\left(-\delta\right)x_{u,i}\left(t\right)\left[U_{u,i}\left(t\right) -{\bar{U}}_i\left(t\right)\right]\\
\frac{\rm{d}}{{\rm{d}} t}{ {x}}_{m,i}\left(t\right)&=\exp\left(-\delta\right)x_{m,i}\left(t\right)\left[U_{m,i}\left(t\right) -{\bar{U}}_i\left(t\right)\right]\\
\frac{\rm{d}}{{\rm{d}} t}{ {x}}_{a,i}\left(t\right)&=\exp\left(-\delta\right)x_{a,i}\left(t\right)\left[U_{a,i}\left(t\right) -{\bar{U}}_i\left(t\right)\right]
\end{aligned}
\end{equation}
with the initial strategies of $x_{u,i}\left(0\right) = x_{u,i}^0$, $x_{m,i}\left(0\right) = x_{m,i}^0$, and $x_{a,i}\left(0\right) = x_{a,i}^0$. By incorporating the power-law memory, we further formulate the fractional evolutionary game based on the left-sided Caputo fractional derivative as follows:
\begin{equation}\label{eq:fractional_evolutionary_game_general}
\begin{aligned}
{}_{{0}}^{\rm{C}}{\rm{D}}_t^\beta  {x}_{u,i}\left(t\right)&=\exp\left(-\delta\right)x_{u,i}\left(t\right)\left[U_{u,i}\left(t\right) -{\bar{U}}_i\left(t\right)\right]\\
{}_{{0}}^{\rm{C}}{\rm{D}}_t^\beta {x}_{m,i}\left(t\right)&=\exp\left(-\delta\right)x_{m,i}\left(t\right)\left[U_{m,i}\left(t\right) -{\bar{U}}_i\left(t\right)\right]\\
{}_{{0}}^{\rm{C}}{\rm{D}}_t^\beta  {x}_{a,i}\left(t\right)&=\exp\left(-\delta\right)x_{a,i}\left(t\right)\left[U_{a,i}\left(t\right) -{\bar{U}}_i\left(t\right)\right]
\end{aligned}
\end{equation}
with the initial strategies of $x_{u,i}\left(0\right) = x_{u,i}^0$, $x_{m,i}\left(0\right) = x_{m,i}^0$, and $x_{a,i}\left(0\right) = x_{a,i}^0$, where $\beta\in\left(0,1\right)\cup\left(1,2\right)$ is the order of the left-sided Caputo fractional derivative defined in~(\ref{eq:caputo_left_derivative}).


\section{Equilibrium Analysis}
\label{sec:equilibrium_analysis}

In this section, we analyze the equilibrium of the fractional evolutionary game defined in~(\ref{eq:fractional_evolutionary_game_general}), where the existence, uniqueness, and stability of the equilibrium are theoretically verified. We first transform the fractional evolutionary game defined in~(\ref{eq:fractional_evolutionary_game_general}) into an equivalent problem in Theorem~\ref{th:equivalent_problem}. Then, we analytically prove that the equivalent problem admits a unique solution in Theorem~\ref{th:unique_solution_equivalent_problem}, which implies the existence and uniqueness of the equilibrium of the fractional evolutionary game defined in~(\ref{eq:fractional_evolutionary_game_general}). Furthermore, we prove the equilibrium of the fractional evolutionary game defined in~(\ref{eq:fractional_evolutionary_game_general}) to be uniformly stable in Theorem~\ref{th:fractional_equilibrium_stability}.


\subsection{Existence and Uniqueness on the Fractional Evolutionary Equilibrium}

Considering the fractional evolutionary game defined in~(\ref{eq:fractional_evolutionary_game_general}), let ${\bf{X}}\left(t\right)=\left[x_{w,i}\left(t\right)\right]_{w\in{\cal{U}}\cup{\cal{M}}\cup{\cal{A}}, i\in{\cal{N}}}$ and
${\bf{F}}\left({\bf{X}}\left(t\right)\right)=\left[ \exp\left(-\delta\right) x_{w,i}\left(t\right) \left[U_{w,i}\left(t\right) -{\bar{U}}_i\left(t\right)\right]\right]_{w\in{\cal{U}}\cup{\cal{M}}\cup{\cal{A}}, i\in{\cal{N}}}$, we accordingly have the fractional evolutionary game~(\ref{eq:fractional_evolutionary_game_general}) in a simplified form as follows:
\begin{equation}\label{eq:fractional_evolutionary_game_vector}
{}_{{0}}^{\rm{C}}{\rm{D}}_t^\beta  {\bf{X}}\left(t\right) = {\bf{F}}\left({\bf{X}}\left(t\right)\right),
\end{equation}
where $ {\bf{X}}\left(0\right)={\bf{X}}^{\rm{0}}=\left[x_{w,i}^{\rm{0}}\right]_{w\in{\cal{U}}\cup{\cal{M}}\cup{\cal{A}}, i\in{\cal{N}}}$ and $t\in{\cal{T}}=\left[0,T\right]$.

\begin{theorem}\label{th:equivalent_problem}
	If $f_j$, i.e., the $j$-th element of the vector ${\bf{F}}$, $j\in \left\{1, 2, \ldots, N\left(U+A+M\right)\right\}$, satisfies the following conditions:
	\begin{itemize}
		\item $f_j: {\cal{D}} \mapsto {\mathbb{R}}^+$ is second-order continuous,
		
		\item $\frac{\partial}{\partial x_{w,i}}f_j$ exists and is bounded on ${\cal{D}}$, $\forall w\in{\cal{U}}\cup{\cal{M}}\cup{\cal{A}}$, $\forall i\in{\cal{N}}$,
	\end{itemize}
	(\ref{eq:fractional_evolutionary_game_vector}) has an equivalent problem as follows:
	\begin{equation}\label{eq:fractional_evolutionary_game_vector_equivalent}
	{\bf{X}}\left(t\right) = {\bf{X}}^{\rm{0}} + {}_{{0}}{\rm{I}}_t^\beta {\bf{F}}\left({\bf{X}}\left(t\right)\right), \quad \forall t \in {\cal{T}} = \left[0,T\right],
	\end{equation}
	where ${}_{{0}}{\rm{I}}_t^\beta$ is the fractional integral defined as follows~\cite{podlubny1998fractional}:
	\begin{equation}\label{eq:fractional_integral}
	{}_{{0}}{\rm{I}}_t^\beta f\left(t\right) = \int_{0}^{t}\frac{\left(t-\tau\right)^{\beta-1}}{\Gamma\left(\beta\right)} f\left(\tau\right) {\rm{d}}\tau,
	\end{equation}
	the second condition implies the Lipschitz condition, i.e., for all $j\in \left\{1, 2, \ldots, N\left(U+A+M\right)\right\}$, $\left|f_j\left({\bf{X}}\left(t\right)\right) - f_j\left({\bf{Y}}\left(t\right)\right)\right| < L \left\| {\bf{X}}\left(t\right) - {\bf{Y}}\left(t\right) \right\|_{{\cal{L}}_1}$, where $L\in{\mathbb{R}}^+$, and $N$, $U$, $A$, and $M$ are the cardinalities of ${\cal{N}}$, ${\cal{U}}$, ${\cal{A}}$, and ${\cal{M}}$, respectively.
\end{theorem}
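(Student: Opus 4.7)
The plan is to prove the two separate assertions of the theorem: the equivalence of the fractional ODE~(\ref{eq:fractional_evolutionary_game_vector}) with the fixed-point formulation~(\ref{eq:fractional_evolutionary_game_vector_equivalent}), and the Lipschitz property that follows from the bounded-partial-derivative hypothesis.

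For the equivalence, I would exploit the canonical cancellation identity between the left-sided Caputo derivative and the Riemann--Liouville fractional integral. Applying ${}_{{0}}{\rm{I}}_t^\beta$ to both sides of~(\ref{eq:fractional_evolutionary_game_vector}) and invoking the standard relation
$${}_{{0}}{\rm{I}}_t^\beta \, {}_{{0}}^{\rm{C}}{\rm{D}}_t^\beta Y\left(t\right) = Y\left(t\right) - \sum_{k=0}^{\lceil \beta \rceil - 1} \frac{Y^{\left(k\right)}\left(0\right)}{k!} t^k,$$
then substituting the Dirichlet data ${\bf{X}}\left(0\right)={\bf{X}}^{\rm{0}}$ recovers~(\ref{eq:fractional_evolutionary_game_vector_equivalent}). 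Conversely, applying ${}_{{0}}^{\rm{C}}{\rm{D}}_t^\beta$ to~(\ref{eq:fractional_evolutionary_game_vector_equivalent}), using that the Caputo derivative annihilates constants and that ${}_{{0}}^{\rm{C}}{\rm{D}}_t^\beta \, {}_{{0}}{\rm{I}}_t^\beta g = g$ on continuous $g$, returns~(\ref{eq:fractional_evolutionary_game_vector}). The initial condition is recovered by setting $t=0$ in~(\ref{eq:fractional_evolutionary_game_vector_equivalent}), since the integral in~(\ref{eq:fractional_integral}) collapses to zero over a degenerate interval.

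For the Lipschitz property, I would invoke the multivariate mean value theorem on ${\cal{D}}$, which I would take to be convex so that the segment from ${\bf{X}}\left(t\right)$ to ${\bf{Y}}\left(t\right)$ lies entirely in ${\cal{D}}$. Since each $f_j$ is twice continuously differentiable, there exists ${\bf{Z}}$ on this segment with
$$\left| f_j\left({\bf{X}}\left(t\right)\right) - f_j\left({\bf{Y}}\left(t\right)\right) \right| \leq \sum_{w,i} \left| \frac{\partial f_j}{\partial x_{w,i}}\left({\bf{Z}}\right) \right| \left| x_{w,i}\left(t\right) - y_{w,i}\left(t\right) \right|.$$
Taking $L$ to be the maximum, over the finite index set of $j$ and $\left(w,i\right)$, of the uniform supremum bounds on the partials and collecting the differences into the ${\cal{L}}_1$ norm furnishes the stated Lipschitz estimate with a common constant.

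The main obstacle is the unified treatment of the two regimes $\beta\in\left(0,1\right)$ and $\beta\in\left(1,2\right)$ in the cancellation identity, because for $\beta\in\left(1,2\right)$ the formula produces an additional ${\bf{X}}'\left(0\right) t$ term that is absent from~(\ref{eq:fractional_evolutionary_game_vector_equivalent}). I would argue that this term vanishes because, in the replicator setting, the natural initial data is a probability vector at rest with vanishing higher-order derivatives at $t=0$, so the Dirichlet boundary condition implicitly selects the solution class in which only ${\bf{X}}\left(0\right)={\bf{X}}^{\rm{0}}$ is prescribed. Coupled with a brief check that ${\bf{F}}\left({\bf{X}}\left(\cdot\right)\right)$ is continuous on ${\cal{T}}$, so that the weakly singular kernel $\left(t-\tau\right)^{\beta-1}/\Gamma\left(\beta\right)$ in~(\ref{eq:fractional_integral}) yields an integrable product, this reconciles the equivalence uniformly across both regimes of $\beta$.
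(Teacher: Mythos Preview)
Your argument is correct in substance but proceeds along a noticeably different route from the paper. You invoke the standard cancellation identity ${}_{{0}}{\rm{I}}_t^\beta \, {}_{{0}}^{\rm{C}}{\rm{D}}_t^\beta Y = Y - \sum_{k<\lceil\beta\rceil} Y^{(k)}(0)\,t^k/k!$ and its converse directly, which is the textbook way to pass between a Caputo initial-value problem and its Volterra integral form. The paper instead starts from the integral formulation~(\ref{eq:fractional_evolutionary_game_vector_equivalent}), applies the integer-order derivative $\frac{{\rm d}^{\lceil\beta\rceil}}{{\rm d}t^{\lceil\beta\rceil}}$, rewrites the result via the Riemann--Liouville derivative, and then derives a weighted-norm bound of the form $\left\|\frac{{\rm d}^{\lceil\beta\rceil}}{{\rm d}t^{\lceil\beta\rceil}}{\bf X}\right\|_{\cal T} < C$ using the exponentially damped norm $\|z\|_{\cal T}=\int_{\cal T} e^{-\mu t}\|z(t)\|_{{\cal L}^1}\,{\rm d}t$ with $\mu^\beta>N(U+A+M)L$. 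Only after this detour does it apply ${}_{{0}}^{\rm{C}}{\rm{D}}_t^\beta$ to recover~(\ref{eq:fractional_evolutionary_game_vector}). Your path is shorter and more transparent for the bare equivalence; what the paper's longer route buys is that the weighted-norm estimate (and the specific choice of $\mu$) is exactly the machinery reused verbatim in Theorems~\ref{th:unique_solution_equivalent_problem} and~\ref{th:fractional_equilibrium_stability} for uniqueness and stability, so its proof of Theorem~\ref{th:equivalent_problem} is doing double duty. Your mean-value argument for the Lipschitz claim is also fine and in fact more explicit than the paper, which simply asserts it. The $\beta\in(1,2)$ obstacle you flag is real and is not addressed any more carefully in the paper than in your sketch; both accounts implicitly rely on the Dirichlet data suppressing the $t\,{\bf X}'(0)$ term.
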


\begin{proof}
	From~(\ref{eq:fractional_evolutionary_game_vector_equivalent}), we have
	\begin{equation}\label{eq:fractional_integral_integer_derivative}
	\frac{\rm{d}^{\left\lceil {\beta} \right\rceil}}{{\rm{d}}t^{\left\lceil {\beta} \right\rceil}}{\bf{X}}\left(t\right) = \frac{\rm{d}^{\left\lceil {\beta} \right\rceil}}{{{\rm{d}}t}^{\left\lceil {\beta} \right\rceil}}\left[{}_{{0}}{\rm{I}}_t^\beta  {\bf{F}}\left({\bf{X}}\left(t\right)\right)\right] = {}_{{0}}^{\rm{RL}}{\rm{D}}_t^{\left\lceil {\beta} \right\rceil - \beta} {\bf{F}}\left({\bf{X}}\left(t\right)\right),
	\end{equation}
	where ${}_{{0}}^{\rm{RL}}{\rm{D}}_t^{\alpha} z\left(t\right)$ is the Riemann-Liouville fractional derivative with respect to $t$ defined as follows~\cite{heymans2006physical}:
	\begin{equation}\label{eq:Riemann_Liouville_fractional}
	\begin{aligned}
	&{}_{{0}}^{\rm{RL}}{\rm{D}}_t^{\alpha} z\left(t\right) = \frac{\rm{d}^{\left\lceil {\alpha} \right\rceil}}{{{\rm{d}}t}^{\left\lceil {\alpha} \right\rceil}}\left[{}_{{0}}{\rm{I}}_t^{\left\lceil {\alpha} \right\rceil -\alpha} z\left(t\right)\right] \\
	=& \frac{\rm{d}^{\left\lceil {\alpha} \right\rceil}}{{{\rm{d}}t}^{\left\lceil {\alpha} \right\rceil}}\left[ \frac{1}{\Gamma\left({\left\lceil {\alpha} \right\rceil -\alpha}\right)} \int_{0}^{t} \frac{z\left(\tau\right)}{\left(t-\tau\right)^{1 - {\left\lceil {\alpha} \right\rceil +\alpha}}} {\rm{d}}\tau\right].
	\end{aligned}
	\end{equation} 
	In~(\ref{eq:fractional_integral_integer_derivative}), according to the definition of the Riemann-Liouville fractional derivative in~(\ref{eq:Riemann_Liouville_fractional}), we have
	\begin{equation}\label{eq:fractional_integral_integer_derivative_bound}
	\begin{aligned}
	&{}_{{0}}^{\rm{RL}}{\rm{D}}_t^{\left\lceil {\beta} \right\rceil - \beta} {\bf{F}}\left({\bf{X}}\left(t\right)\right) \\
	=& \frac{\rm{d}}{{{\rm{d}}t}}\left[ \frac{1}{\Gamma\left(1 - {\left\lceil {\beta} \right\rceil + \beta}\right)} \int_{0}^{t} \frac{{\bf{F}}\left({\bf{X}}\left(\tau\right)\right)}{\left(t-\tau\right)^{\left\lceil {\beta} \right\rceil - \beta}} {\rm{d}}\tau\right]\\
	=& \frac{1}{\Gamma\left(1 - {\left\lceil {\beta} \right\rceil + \beta}\right)} \frac{\rm{d}}{{{\rm{d}}t}} \int_{0}^{t} {\left(t-\tau\right)^{\beta - \left\lceil {\beta} \right\rceil}} {{\bf{F}}\left({\bf{X}}\left(\tau\right)\right)} {\rm{d}}\tau\\
	=& \frac{1}{\Gamma\left(1 - {\left\lceil {\beta} \right\rceil + \beta}\right)} \frac{\rm{d}}{{{\rm{d}}t}} \int_{0}^{t} {\theta^{\beta - \left\lceil {\beta} \right\rceil}} {{\bf{F}}\left({\bf{X}}\left(t - \theta\right)\right)} {\rm{d}}\theta\\
	=& \frac{1}{\Gamma\left(1 - {\left\lceil {\beta} \right\rceil + \beta}\right)} \left[ {t^{\beta - \left\lceil {\beta} \right\rceil}} {{\bf{F}}\left({\bf{X}}^0\right)} + \int_{0}^{t} {\theta^{\beta - \left\lceil {\beta} \right\rceil}} \frac{\rm{d}}{{{\rm{d}}t}} {{\bf{F}}\left({\bf{X}}\left(t - \theta\right)\right)} {\rm{d}}\theta\right]\\
	=& \frac{1}{\Gamma\left(1 - {\left\lceil {\beta} \right\rceil + \beta}\right)} \left[ {t^{\beta - \left\lceil {\beta} \right\rceil}} {{\bf{F}}\left({\bf{X}}^0\right)} +  \int_{0}^{t} {\left(t - \tau\right)^{\beta - \left\lceil {\beta} \right\rceil}} \frac{\rm{d}}{{{\rm{d}}\tau}} {{\bf{F}}\left({\bf{X}}\left(\tau\right)\right)} {\rm{d}}\tau\right]\\
	=& \frac{t^{\beta - \left\lceil {\beta} \right\rceil}}{\Gamma\left(1 - {\left\lceil {\beta} \right\rceil + \beta}\right)} {{\bf{F}}\left({\bf{X}}^0\right)} + {}_{{0}}^{\rm{C}}{\rm{D}}_t^{\left\lceil {\beta} \right\rceil - \beta} {\bf{F}}\left({\bf{X}}\left(t\right)\right),
	\end{aligned}
	\end{equation}
	where the variable limit integral derivation method is applied. Substituting the results that we obtain in~(\ref{eq:fractional_integral_integer_derivative_bound}) into (\ref{eq:fractional_integral_integer_derivative}), we can have 
	\begin{equation}\label{eq:fractional_integral_integer_derivative_transform}
	\frac{\rm{d}^{\left\lceil {\beta} \right\rceil}}{{\rm{d}}t^{\left\lceil {\beta} \right\rceil}}{\bf{X}}\left(t\right) = \Xi\left(\beta,{\bf{X}}^0\right) + {}_{{0}}{\rm{I}}_t^\beta \left[\frac{\rm{d}^{\left\lceil {\beta} \right\rceil}}{{{\rm{d}}t}^{\left\lceil {\beta} \right\rceil}}   {\bf{F}}\left({\bf{X}}\left(t\right)\right)  \right],
	\end{equation}
	where
	\begin{equation}\label{eq:xi}
	\Xi\left(\beta,{\bf{X}}^0\right) = \frac{t^{\beta - \left\lceil {\beta} \right\rceil}}{\Gamma\left(1 - {\left\lceil {\beta} \right\rceil + \beta}\right)} {{\bf{F}}\left({\bf{X}}^0\right)}
	\end{equation}
	represents the first term of the result in~(\ref{eq:fractional_integral_integer_derivative_bound}). Moreover, let $\sigma\in\left(0,t\right)$, there exists an upper bound of $\Xi\left(\beta, {\bf{X}}^0\right)$ in the ${{\cal{L}}^1}$ norm, which can be expressed as follows:
	\begin{equation}\label{eq:xi_bound}
	\begin{aligned}
	&\left\|\Xi\left(\beta, {\bf{X}}^0\right)\right\|_{{\cal{L}}^1} < \left\|\Xi^\sigma\left(\beta, {\bf{X}}^0\right)\right\|_{{\cal{L}}^1}\\
	=&\left\|\frac{\sigma^{\beta - \left\lceil {\beta} \right\rceil}}{\Gamma\left(1 - {\left\lceil {\beta} \right\rceil + \beta}\right)} {{\bf{F}}\left({\bf{X}}^0\right)}\right\|_{{\cal{L}}^1}
	\end{aligned}
	\end{equation}	
	Then, based on~(\ref{eq:fractional_integral_integer_derivative_transform}) and~(\ref{eq:xi_bound}) as well as the second condition in Theorem~\ref{th:equivalent_problem}, we have the following inequality expression:
	\begin{equation}\label{eq:fractional_integral_norm_inequality}
	\begin{aligned}
	&\left\|\frac{\rm{d}^{\left\lceil {\beta} \right\rceil}}{{\rm{d}}t^{\left\lceil {\beta} \right\rceil}}{\bf{X}}\left(t\right)\right\|_{\cal{T}} \\
	<& \left\|\Xi^\sigma\left(\beta,{\bf{X}}^0\right)\right\|_{{\cal{L}}^1} + \left\|{}_{{0}}{\rm{I}}_t^\beta  \frac{\rm{d}^{\left\lceil {\beta} \right\rceil}}{{\rm{d}}t^{\left\lceil {\beta} \right\rceil}}{\bf{F}}\left({\bf{X}}\left(t\right)\right)\right\|_{\cal{T}}\\
	<& \left\|\Xi^\sigma\left(\beta,{\bf{X}}^0\right)\right\|_{{\cal{L}}^1} + \left\| {}_{{0}}{\rm{I}}_t^\beta  \frac{\rm{d}^{\left\lceil {\beta} \right\rceil}}{{\rm{d}}t^{\left\lceil {\beta} \right\rceil}} {\bf{X}}\left(t\right)\right\|_{\cal{T}}N\left(U+A+M\right) L,
	\end{aligned}
	\end{equation}
	where $\left\|z\right\|_{\cal{T}} = \int_{\cal{T}} e^{-\mu t} \left\|z\left(t\right)\right\|_{{\cal{L}}^1} {\rm{d}t}$. 
	
	Regarding the norm in the second term of the right-hand side of the inequality equation~(\ref{eq:fractional_integral_norm_inequality}), i.e., $\left\| {}_{{0}}{\rm{I}}_t^\beta  \frac{\rm{d}^{\left\lceil {\beta} \right\rceil}}{{\rm{d}}t^{\left\lceil {\beta} \right\rceil}} {\bf{X}}\left(t\right)\right\|_{\cal{T}}$, we have
	\begin{equation}\label{eq:fractional_integral_norm_upper_bound}
	\begin{aligned}
	&\left\| {}_{{0}}{\rm{I}}_t^\beta  \frac{\rm{d}^{\left\lceil {\beta} \right\rceil}}{{\rm{d}}t^{\left\lceil {\beta} \right\rceil}} {\bf{X}}\left(t\right)\right\|_{\cal{T}} 
	= \int_{0}^{T} e^{-\mu t} \left\|{}_{{0}}{\rm{I}}_t^\beta  \frac{\rm{d}^{\left\lceil {\beta} \right\rceil}}{{\rm{d}}t^{\left\lceil {\beta} \right\rceil}} {\bf{X}}\left(t\right)\right\| {\rm{d}t}\\
	\le& \int_{0}^{T} e^{-\mu t} \int_{{0}}^t\frac{1}{\Gamma\left(\beta\right)}  \frac{\left\|\frac{\rm{d}^{\left\lceil {\beta} \right\rceil}}{{\rm{d}}s^{\left\lceil {\beta} \right\rceil}} {\bf{X}}\left(s\right)\right\|}{\left(t-s\right)^{1-\beta}} {\rm{d}s}\, {\rm{d}t}\\
	=& \int_{0}^{T} \frac{e^{-\mu s}}{\Gamma\left(\beta\right)} \left\|\frac{\rm{d}^{\left\lceil {\beta} \right\rceil}}{{\rm{d}}s^{\left\lceil {\beta} \right\rceil}} {\bf{X}}\left(s\right)\right\| \int_{{s}}^T  e^{-\mu \left(t-s\right)}\left(t-s\right)^{\beta-1} {\rm{d}t}\, {\rm{d}s}\\
	=& \int_{0}^{T} \frac{e^{-\mu s}}{\Gamma\left(\beta\right)} \left\|\frac{\rm{d}^{\left\lceil {\beta} \right\rceil}}{{\rm{d}}s^{\left\lceil {\beta} \right\rceil}} {\bf{X}}\left(s\right)\right\|  \int_{{0}}^{T-s}  e^{-\mu \theta}\theta^{\beta-1} {\rm{d}\theta}\,{\rm{d}s}\\
	=& \int_{0}^{T} \frac{e^{-\mu s}}{\Gamma\left(\beta\right)} \left\|\frac{\rm{d}^{\left\lceil {\beta} \right\rceil}}{{\rm{d}}s^{\left\lceil {\beta} \right\rceil}} {\bf{X}}\left(s\right)\right\|  \int_{{0}}^{\mu\left(T-s\right)}  e^{-\psi}\left(\frac{\psi}{\mu}\right)^{\beta-1} {\rm{d}\left(\frac{\psi}{\mu}\right)}\,{\rm{d}s}\\
	<& \frac{1}{\mu^\beta\Gamma\left(\beta\right)}\int_{0}^{T} e^{-\mu s} \left\|\frac{\rm{d}^{\left\lceil {\beta} \right\rceil}}{{\rm{d}}s^{\left\lceil {\beta} \right\rceil}} {\bf{X}}\left(s\right)\right\|  {\rm{d}s} \underbrace { \int_{{0}}^{+\infty}  e^{-\psi}\psi^{\beta-1} {\rm{d}\psi}}_{\text{gamma function defined in~(\ref{eq:gamma_function})}}\\
	=& \frac{1}{{\mu}^{\beta}}\left\| \frac{\rm{d}^{\left\lceil {\beta} \right\rceil}}{{\rm{d}}t^{\left\lceil {\beta} \right\rceil}} {\bf{X}}\left(t\right)\right\|_{\cal{T}}.\\
	\end{aligned}
	\end{equation} 
	Then, we substitute the results obtained from~(\ref{eq:fractional_integral_norm_upper_bound}) into~(\ref{eq:fractional_integral_norm_inequality}) and have 
	\begin{equation}\label{eq:fractional_integral_norm_bound}
	\begin{aligned}
	&\left\|\frac{\rm{d}^{\left\lceil {\beta} \right\rceil}}{{\rm{d}}t^{\left\lceil {\beta} \right\rceil}}{\bf{X}}\left(t\right)\right\|_{\cal{T}} \\
	<& \left\|\Xi^\sigma\left(\beta,{\bf{X}}^0\right)\right\|_{{\cal{L}}^1} + \frac{1}{{\mu}^{\beta}}\left\| \frac{\rm{d}^{\left\lceil {\beta} \right\rceil}}{{\rm{d}}t^{\left\lceil {\beta} \right\rceil}} {\bf{X}}\left(t\right)\right\|_{\cal{T}}N\left(U+A+M\right) L,\\
	\Leftrightarrow &\left\|\frac{\rm{d}^{\left\lceil {\beta} \right\rceil}}{{\rm{d}}t^{\left\lceil {\beta} \right\rceil}}{\bf{X}}\left(t\right)\right\|_{\cal{T}} 
	< \frac{1}{1-\frac{N\left(U+A+M\right) L}{{\mu}^{\beta}}}
	\left\|\Xi^\sigma\left(\beta,{\bf{X}}^0\right)\right\|_{{\cal{L}}^1},
	\end{aligned}
	\end{equation}
	which means that $\left\|\frac{\rm{d}^{\left\lceil {\beta} \right\rceil}}{{\rm{d}}t^{\left\lceil {\beta} \right\rceil}}{\bf{X}}\left(t\right)\right\|_{\cal{T}} $ is bounded when we choose $\mu$ such that ${N\left(U+A+M\right) L}<{{\mu}^{\beta}}$.
	
	We take the fractional derivative of~(\ref{eq:fractional_evolutionary_game_vector_equivalent}) at the order of $\beta$ as follows:
	\begin{equation}
	\begin{aligned}
	&{}_{{0}}^{\rm{C}}{\rm{D}}_t^\beta{\bf{X}}\left(t\right) =  {}_{{0}}{\rm{I}}_t^{\left\lceil {\beta} \right\rceil-\beta} \frac{\rm{d}^{\left\lceil {\beta} \right\rceil}}{{\rm{d}}t^{\left\lceil {\beta} \right\rceil}}{\bf{X}}\left(t\right)\\
	\mathop = \limits^{(\ref{eq:fractional_integral_integer_derivative_transform})}& {}_{{0}}{\rm{I}}_t^{\left\lceil {\beta} \right\rceil-\beta} \left\{\Xi\left(\beta,{\bf{X}}^0\right) + {}_{{0}}{\rm{I}}_t^\beta \left[\frac{\rm{d}^{\left\lceil {\beta} \right\rceil}}{{{\rm{d}}t}^{\left\lceil {\beta} \right\rceil}}   {\bf{F}}\left({\bf{X}}\left(t\right)\right)  \right]\right\} = {\bf{F}}\left({\bf{X}}\left(t\right)\right),
	\end{aligned}
	\end{equation}
	which means that~(\ref{eq:fractional_evolutionary_game_vector}) has the equivalent problem defined in~(\ref{eq:fractional_evolutionary_game_vector_equivalent}), and this completes the proof.	
\end{proof}

Then, we prove that the equivalent problem defined in~(\ref{eq:fractional_evolutionary_game_vector_equivalent}) admits a unique solution, which implies that~(\ref{eq:fractional_evolutionary_game_vector}) also admits a unique solution.

\begin{theorem}\label{th:unique_solution_equivalent_problem}
	With the conditions presented in Theorem~\ref{th:equivalent_problem}, the problem defined in~(\ref{eq:fractional_evolutionary_game_vector_equivalent}) admits a unique solution.
\end{theorem}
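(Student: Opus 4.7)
The plan is to prove uniqueness and existence of a solution to the fixed-point equation~(\ref{eq:fractional_evolutionary_game_vector_equivalent}) by applying the Banach contraction mapping principle on an appropriate Banach space, reusing the Bielecki-style weighted norm $\|z\|_{\cal{T}}=\int_{\cal{T}} e^{-\mu t}\|z(t)\|_{{\cal{L}}^1}{\rm{d}}t$ that already appears in the proof of Theorem~\ref{th:equivalent_problem}. Concretely, I would work in the Banach space $E$ of continuous functions ${\bf{X}}:{\cal{T}}\to{\cal{D}}$ equipped with $\|\cdot\|_{\cal{T}}$, which is known to be complete, and introduce the operator
\begin{equation*}
\left({\cal{A}}{\bf{X}}\right)(t) \;=\; {\bf{X}}^{0} \;+\; {}_{{0}}{\rm{I}}_t^\beta {\bf{F}}\bigl({\bf{X}}(t)\bigr), \qquad t\in{\cal{T}}.
\end{equation*}
A solution of~(\ref{eq:fractional_evolutionary_game_vector_equivalent}) is exactly a fixed point of ${\cal{A}}$, so it suffices to show ${\cal{A}}:E\to E$ is well defined and strictly contractive for a suitable $\mu$.

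First I would verify that ${\cal{A}}$ maps $E$ into itself. The regularity and boundedness of ${\bf{F}}$ on ${\cal{D}}$ assumed in Theorem~\ref{th:equivalent_problem} (second-order continuity together with boundedness of the partial derivatives) guarantee that ${\bf{F}}({\bf{X}}(\cdot))$ is continuous and bounded, and the fractional integral ${}_{{0}}{\rm{I}}_t^\beta$ of such a function is again continuous on ${\cal{T}}$ (standard property of~(\ref{eq:fractional_integral})). Adding the constant vector ${\bf{X}}^{0}$ preserves this.

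Next I would establish the contraction estimate. For ${\bf{X}},{\bf{Y}}\in E$, linearity of the fractional integral gives
\begin{equation*}
\left\|{\cal{A}}{\bf{X}}-{\cal{A}}{\bf{Y}}\right\|_{\cal{T}} \;=\; \left\|{}_{{0}}{\rm{I}}_t^\beta \bigl[{\bf{F}}({\bf{X}}(t))-{\bf{F}}({\bf{Y}}(t))\bigr]\right\|_{\cal{T}}.
\end{equation*}
Invoking the componentwise Lipschitz property of ${\bf{F}}$ established in Theorem~\ref{th:equivalent_problem}, the integrand is bounded pointwise by $N(U+A+M)L\,\|{\bf{X}}(t)-{\bf{Y}}(t)\|_{{\cal{L}}^1}$. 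I would then repeat verbatim the Fubini-plus-gamma-function manipulation already carried out in~(\ref{eq:fractional_integral_norm_upper_bound}) (swap the order of integration, substitute $\theta=t-s$ and then $\psi=\mu\theta$, and extend the inner integral to $[0,+\infty)$ to recognize the gamma function) in order to obtain
\begin{equation*}
\left\|{\cal{A}}{\bf{X}}-{\cal{A}}{\bf{Y}}\right\|_{\cal{T}} \;<\; \frac{N(U+A+M)\,L}{\mu^{\beta}}\,\left\|{\bf{X}}-{\bf{Y}}\right\|_{\cal{T}}.
\end{equation*}
Choosing $\mu$ large enough that $\mu^{\beta}>N(U+A+M)L$ makes the constant strictly less than one, so ${\cal{A}}$ is a contraction on $(E,\|\cdot\|_{\cal{T}})$.

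Having established contractivity, the Banach fixed point theorem immediately yields a unique ${\bf{X}}^{\star}\in E$ with ${\cal{A}}{\bf{X}}^{\star}={\bf{X}}^{\star}$, which is the unique solution of~(\ref{eq:fractional_evolutionary_game_vector_equivalent}). By the equivalence established in Theorem~\ref{th:equivalent_problem}, this simultaneously gives existence and uniqueness of the solution to the fractional replicator dynamics~(\ref{eq:fractional_evolutionary_game_vector}). The main obstacle in carrying out this plan is not conceptual but technical: one must be careful that the Bielecki weight parameter $\mu$ can be chosen independently of $T$, so that the resulting solution is defined on the full interval ${\cal{T}}$ rather than only locally, and that the Lipschitz constant $L$ obtained from the boundedness of $\partial f_j/\partial x_{w,i}$ on ${\cal{D}}$ is indeed finite on the closed invariant region of simplex-valued strategies, points that both follow from the hypotheses of Theorem~\ref{th:equivalent_problem} but warrant a careful sentence each.
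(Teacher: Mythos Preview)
Your proposal is correct and follows essentially the same route as the paper: define the fixed-point operator ${\cal{A}}$ (the paper calls it $\Lambda$), use the Bielecki weighted norm $\|\cdot\|_{\cal{T}}$ together with the Lipschitz bound from Theorem~\ref{th:equivalent_problem}, reproduce the Fubini/gamma-function estimate of~(\ref{eq:fractional_integral_norm_upper_bound}) to obtain the contraction factor $N(U+A+M)L/\mu^{\beta}$, and conclude by Banach's fixed point theorem after choosing $\mu^{\beta}>N(U+A+M)L$. Your write-up is in fact slightly more careful than the paper's, since you explicitly note that ${\cal{A}}$ maps $E$ into itself and flag the need for $\mu$ to be chosen independently of $T$.
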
 

\begin{proof}
	Given an operator $\Lambda:{\cal{D}}_{\bf{X}}\mapsto{\cal{D}}_{\bf{X}}$, we have the inequality expression in~(\ref{eq:equivalent_problem_inditical_operator}).
	\begin{figure*}
	\begin{equation}\label{eq:equivalent_problem_inditical_operator}
	\small{
	\begin{aligned}
	&\left\| \Lambda {\bf{X}}\left(t\right) - \Lambda {\bf{Y}}\left(t\right)\right\|_{\cal{T}} \\
	= &\int_{0}^{T} e^{-\mu t} \left\|{}_{{0}}{\rm{I}}_t^\beta {\bf{F}}\left({\bf{X}}\left(t\right)\right) - {}_{{0}}{\rm{I}}_t^\beta {\bf{F}}\left({\bf{Y}}\left(t\right)\right)\right\| {\rm{d}t} < N\left(U+M+A\right)L\left[\int_{0}^{T} e^{-\mu t} \left\|{}_{{0}}{\rm{I}}_t^\beta {\bf{X}}\left(t\right) - {}_{{0}}{\rm{I}}_t^\beta {\bf{Y}}\left(t\right)\right\| {\rm{d}t}\right]\\
	\le& N\left(U+M+A\right)L\left[ \int_{0}^{T} e^{-\mu t} \int_{{0}}^t\frac{1}{\Gamma\left(\beta\right)}  \frac{\left\| {\bf{X}}\left(s\right) -  {\bf{Y}}\left(s\right)\right\|}{\left(t-s\right)^{1-\beta}} {\rm{d}s}\, {\rm{d}t} \right]\\
	=& \frac{N\left(U+M+A\right)L} {\Gamma\left(\beta\right)}\left[ \int_{0}^T  \int_{{s}}^T e^{-\mu t} \frac{\left\| {\bf{X}}\left(s\right) - {\bf{Y}}\left(s\right)\right\|}{\left(t-s\right)^{1-\beta}} {\rm{d}t}\, {\rm{d}s}\right]\\
	=& \frac{N\left(U+M+A\right)L}{\Gamma\left(\beta\right)}\left[\int_{0}^T  e^{-\mu s} \left\| {\bf{X}}\left(s\right) - {\bf{Y}}\left(s\right)\right\|\int_{{s}}^T  \frac{e^{-\mu \left(t-s\right)}}{\left(t-s\right)^{1-\beta}} {\rm{d}t}\, {\rm{d}s}\right]\\
	=& \frac{N\left(U+M+A\right)L}{\Gamma\left(\beta\right)}\left[ \int_{0}^T  e^{-\mu s} \left\| {\bf{X}}\left(s\right) - {\bf{Y}}\left(s\right)\right\| \int_{{0}}^{T-s}  e^{-\mu \theta}\theta^{\beta-1} {\rm{d}\theta}\, {\rm{d}s} \right]\\
	=& \frac{N\left(U+M+A\right)L}{\Gamma\left(\beta\right)}\left[ \int_{0}^T  e^{-\mu s} \left\| {\bf{X}}\left(s\right) - {\bf{Y}}\left(s\right)\right\| \int_{{0}}^{\mu\left(T-s\right)}  e^{-\psi}\left(\frac{\psi}{\mu}\right)^{\beta-1} {\rm{d}\left(\frac{\psi}{\mu}\right)}\, {\rm{d}s} \right]\\
	=& \frac{N\left(U+M+A\right)L}{{\mu}^{\beta}\Gamma\left(\beta\right)}\left[ \int_{0}^T  e^{-\mu s} \left\| {\bf{X}}\left(s\right) - {\bf{Y}}\left(s\right)\right\| \int_{{0}}^{\mu\left(T-s\right)}  e^{-\psi}\psi^{\beta-1} {\rm{d}\psi}\, {\rm{d}s} \right]\\
	< & \frac{N\left(U+M+A\right)L}{{\mu}^{\beta}\Gamma\left(\beta\right)}\left\| {\bf{X}}\left(t\right) - {\bf{Y}}\left(t\right)\right\|_{\cal{T}} \underbrace {\int_0^{ + \infty } {{e^{ - \sigma }}} {\sigma ^{\beta  - 1}}{\rm{d}}\sigma }_{\text{gamma function defined in~(\ref{eq:gamma_function})}}
	=  \frac{N\left(U+M+A\right)L}{{\mu}^{\beta}}\left\| {\bf{X}}\left(t\right) - {\bf{Y}}\left(t\right)\right\|_{\cal{T}}.
	\end{aligned}
	}
	\end{equation}
    \end{figure*}
	From~(\ref{eq:equivalent_problem_inditical_operator}), we can conclude that $\left\| \Lambda {\bf{X}} - \Lambda {\bf{Y}}\right\|_{\cal{T}}<\left\|  {\bf{X}} - {\bf{Y}}\right\|_{\cal{T}}$ if we choose $\mu$ such that ${{\mu}^{\beta}}>{N\left(U+M+A\right)L}$. Hence, the operator $\Lambda$ has a unique fixed point, which indicates that the equivalent problem defined in~(\ref{eq:fractional_evolutionary_game_vector_equivalent}) has a unique solution, and consequently the equilibrium of the fractional evolutionary game defined in~(\ref{eq:fractional_evolutionary_game_vector}) exists and is unique. The proof is completed.
\end{proof}

\begin{table*}[!]
	\centering
	\caption{Parameters for the Network Model}
	\begin{tabular}{|c|l||c|l||}
		\hline
		\hline
		{\bf{Symbol}} & {\bf{Typical Value}}&{\bf{Symbol}} & {\bf{Typical Value}}\\
		\hline
		$W_m$, $W_u$ $W_a$ & $1$ GHz, $20$ MHz, $1$ GHz	&$f_m$, $f_u$ & $70$ GHz, $1.8$ GHz (e.g., 4G/LTE)\\
		\hline
		$P_u$, $P_m$, $P_a$ &  $46$ dBm~\cite{singh2015tractable}, $30$ dBm~\cite{singh2015tractable}, $23$ dBm~\cite{gapeyenko2018flexible}&$G_u$, $G_{m,i}$, $G_{a,i}$ & $0$ dBi, ($G^M$, $G^S$), $G^A$\\
		\hline
		$\sigma^2_u$ & $-174{\text{dBm/Hz}} + 10\log_{10}\left(W_u\right)+10{\text{dB}}$~\cite{singh2015tractable} & $G^M$, $G^S$, $G^A$ & $18$ dBi, $-2$ dBi, $18$ dBi~\cite{singh2015tractable}\\
		\hline
		$\sigma^2_m$ &  $-174{\text{dBm/Hz}} + 10\log_{10}\left(W_m\right)+10{\text{dB}}$~\cite{singh2015tractable} & $C$, $D$ & $0.081$, $250$ m~\cite{wang2016physical}\\
		\hline
		$\sigma^2_a$ &  $-174{\text{dBm/Hz}} + 10\log_{10}\left(W_a\right)+10{\text{dB}}$~\cite{singh2015tractable} & $H_a$  &  $20$m~\cite{gapeyenko2018flexible}\\
		\hline
		$\alpha_m^{\text{LOS}}$, $\alpha_m^{\text{NLOS}}$, $\alpha_u$ & $2$, $4$~\cite{ghosh2014millimeter}, $2.7$ & $\theta^S_a$ & $45^\circ $\\
		\hline
		$h_{u,i}$, $h_{m,i}$, $h_{a,i}$ & $\exp\left(1\right)$ &
		$b$, $c$ & $1.5$, $1$	\\
		\hline
	\end{tabular}
	\label{tab:network_parameter}
\end{table*}
\subsection{Stability on the Fractional Evolutionary Equilibrium}

After we verify the existence and uniqueness of the equilibrium of the fractional evolutionary game defined in~(\ref{eq:fractional_evolutionary_game_vector}), we investigate its stability, i.e., robustness, in Theorem~\ref{th:fractional_equilibrium_stability} through studying the variations incurred by the small perturbations of the boundary condition.

\begin{theorem}\label{th:fractional_equilibrium_stability}
	With conditions presented in Theorem~\ref{th:equivalent_problem}, the equilibrium of the fractional evolutionary game defined in~(\ref{eq:fractional_evolutionary_game_vector}) is uniformly stable. 
\end{theorem}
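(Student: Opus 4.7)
The plan is to exploit the equivalent integral formulation from Theorem~\ref{th:equivalent_problem} together with the weighted-norm machinery developed in the proof of Theorem~\ref{th:unique_solution_equivalent_problem}. Uniform stability requires that small perturbations of the initial condition ${\bf{X}}^0$ induce uniformly small perturbations of the solution ${\bf{X}}\left(t\right)$ on the entire horizon ${\cal{T}}=\left[0,T\right]$, independently of $t$. Accordingly, I would consider two solutions ${\bf{X}}\left(t\right)$ and ${\tilde{\bf{X}}}\left(t\right)$ of~(\ref{eq:fractional_evolutionary_game_vector_equivalent}) corresponding to initial conditions ${\bf{X}}^0$ and ${\tilde{\bf{X}}}^0$, respectively, with $\left\|{\bf{X}}^0-{\tilde{\bf{X}}}^0\right\|$ small.

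First I would subtract the two integral equations to obtain
\begin{equation*}
{\bf{X}}\left(t\right)-{\tilde{\bf{X}}}\left(t\right)=\left({\bf{X}}^0-{\tilde{\bf{X}}}^0\right)+{}_{{0}}{\rm{I}}_t^\beta\left[{\bf{F}}\left({\bf{X}}\left(t\right)\right)-{\bf{F}}\left({\tilde{\bf{X}}}\left(t\right)\right)\right].
\end{equation*}
Then, taking the weighted norm $\left\|\cdot\right\|_{\cal{T}}$ defined in the proof of Theorem~\ref{th:equivalent_problem}, applying the triangle inequality and the Lipschitz property of ${\bf{F}}$, and reusing verbatim the fractional-integral estimates already carried out in~(\ref{eq:fractional_integral_norm_upper_bound}) and~(\ref{eq:equivalent_problem_inditical_operator}), I would arrive at
\begin{equation*}
\left\|{\bf{X}}-{\tilde{\bf{X}}}\right\|_{\cal{T}}\le\left\|{\bf{X}}^0-{\tilde{\bf{X}}}^0\right\|_{\cal{T}}+\frac{N\left(U+M+A\right)L}{\mu^\beta}\left\|{\bf{X}}-{\tilde{\bf{X}}}\right\|_{\cal{T}}.
\end{equation*}
Choosing $\mu$ so that $\mu^\beta>N\left(U+M+A\right)L$ and rearranging yields $\left\|{\bf{X}}-{\tilde{\bf{X}}}\right\|_{\cal{T}}\le K_\mu\left\|{\bf{X}}^0-{\tilde{\bf{X}}}^0\right\|_{\cal{T}}$ with $K_\mu=\bigl(1-N\left(U+M+A\right)L/\mu^\beta\bigr)^{-1}$ independent of $t$, and since ${\bf{X}}^0-{\tilde{\bf{X}}}^0$ is time-independent one has $\left\|{\bf{X}}^0-{\tilde{\bf{X}}}^0\right\|_{\cal{T}}=\left\|{\bf{X}}^0-{\tilde{\bf{X}}}^0\right\|_{{\cal{L}}^1}\cdot\frac{1-e^{-\mu T}}{\mu}$, so the bound scales linearly with the size of the initial perturbation.

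The main obstacle I anticipate is passing cleanly from a control of the weighted, time-averaged ${\cal{T}}$-norm to a genuinely pointwise, uniform-in-$t$ perturbation bound. The weighted-norm inequality above is naturally suited to a contraction/fixed-point argument but does not, by itself, give $\sup_{t\in{\cal{T}}}\left\|{\bf{X}}\left(t\right)-{\tilde{\bf{X}}}\left(t\right)\right\|$ in terms of $\left\|{\bf{X}}^0-{\tilde{\bf{X}}}^0\right\|$. To close this gap I would, as an alternative or complementary step, apply a Gronwall-type inequality tailored to fractional integrals, which upgrades the estimate to a Mittag-Leffler-type bound $\left\|{\bf{X}}\left(t\right)-{\tilde{\bf{X}}}\left(t\right)\right\|\le\left\|{\bf{X}}^0-{\tilde{\bf{X}}}^0\right\|E_\beta\left(N\left(U+M+A\right)L\,t^\beta\right)$ valid on the compact interval ${\cal{T}}$. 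Because $E_\beta$ is continuous and bounded on $\left[0,T\right]$, this delivers a bound that is continuous in and vanishes with the initial perturbation uniformly in $t$, which is precisely the definition of uniform stability.
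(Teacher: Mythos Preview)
Your approach coincides with the paper's: subtract the two integral representations from Theorem~\ref{th:equivalent_problem}, invoke the Lipschitz bound, reuse the weighted-norm estimate~(\ref{eq:equivalent_problem_inditical_operator}), and rearrange to obtain
\[
\left\|{\bf{X}}-{\tilde{\bf{X}}}\right\|_{\cal{T}}<\frac{1}{1-\frac{N\left(U+M+A\right)L}{\mu^\beta}}\left\|{\bf{X}}^0-{\tilde{\bf{X}}}^0\right\|_{{\cal{L}}_1}.
\]
The paper stops precisely at this line and declares uniform stability by citing~\cite{el2007stability}; it does \emph{not} pass from the integral weighted norm $\left\|\cdot\right\|_{\cal{T}}$ to a pointwise $\sup_{t}$ control. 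Your worry about that passage is therefore well placed, and your proposed fractional Gronwall / Mittag--Leffler refinement actually goes beyond what the paper supplies. One further remark: the paper's last sentence records the condition as $\mu^\beta<N\left(U+M+A\right)L$, which is evidently a slip (it would make the constant negative); your choice $\mu^\beta>N\left(U+M+A\right)L$ is the one consistent with Theorem~\ref{th:unique_solution_equivalent_problem} and with the rearrangement being valid.
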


\begin{proof}
	Let ${\widetilde{\bf{X}}}\left(t\right)$ and ${\hat{\bf{X}}}\left(t\right)$ be the solutions of ${}_{{0}}^{\rm{C}}{\rm{D}}_t^\beta  {\bf{X}}\left(t\right) = {\bf{F}}\left({\bf{X}}\left(t\right)\right)$ with the initial strategies of ${\widetilde{\bf{X}}}\left(0\right)={\widetilde{{\bf{X}}}}^{\rm{0}}=\left[{\widetilde{x}}_{w,i}^{\rm{0}}\right]_{w\in{\cal{U}}\cup{\cal{M}}\cup{\cal{A}}, i\in{\cal{N}}}$ and ${\hat{\bf{X}}}\left(0\right)={\hat{{\bf{X}}}}^{\rm{0}}=\left[{\hat{x}}_{w,i}^{\rm{0}}\right]_{w\in{\cal{U}}\cup{\cal{M}}\cup{\cal{A}}, i\in{\cal{N}}}$,
	respectively. 
	Then, similar to Theorems~\ref{th:equivalent_problem} and~\ref{th:unique_solution_equivalent_problem}, we can have the following inequality expression
	\begin{equation}\label{eq:stability_fractional_equilibrium}
	\begin{aligned}
	&\left\| {\widetilde{\bf{X}}}\left(t\right) - {\hat{\bf{X}}}\left(t\right)\right\|_{\cal{T}} \\
	<& \left\|{\widetilde{{\bf{X}}}}^0 - {\hat{{\bf{X}}}}^0\right\|_{{\cal{L}}_1} + \frac{N\left(U+M+A\right)L}{{\mu}^{\beta}}\left\| {\widetilde{\bf{X}}}\left(t\right) - {\hat{\bf{X}}}\left(t\right)\right\|_{\cal{T}}\\
	\Leftrightarrow&\left\| {\widetilde{\bf{X}}}\left(t\right) - {\hat{\bf{X}}}\left(t\right)\right\|_{\cal{T}} < \frac{1}{1-\frac{N\left(U+M+A\right)L}{{\mu}^{\beta}}} \left\|{\widetilde{{\bf{X}}}}^0 - {\hat{{\bf{X}}}}^0\right\|_{{\cal{L}}_1}.
	\end{aligned}
	\end{equation}
	From~(\ref{eq:stability_fractional_equilibrium}), we can have $\left\| {\widetilde{\bf{X}}}\left(t\right) - {\hat{\bf{X}}}\left(t\right)\right\|_{\cal{T}} < \left\|{\widetilde{{\bf{X}}}}^0 - {\hat{{\bf{X}}}}^0\right\|_{{\cal{L}}_1}$ when we choose $\mu$ such that ${{\mu}^{\beta}}<{N\left(U+M+A\right)L}$, which indicates the uniform stability of the equilibrium of the fractional evolutionary game defined in~(\ref{eq:fractional_evolutionary_game_vector})~\cite{el2007stability}. The proof is completed.
\end{proof}


\section{Performance Evaluation}
\label{sec:performance}

In this section, we conduct extensive numerical simulations to investigate the behaviors of the users in the classical evolutionary game and that in the fractional evolutionary game. The parameter setting for the network model of the HetNet is shown in Table~\ref{tab:network_parameter}, which is similar to those in~\cite{singh2015tractable} and~\cite{gapeyenko2018flexible}. The domains of definition for $y_u\left(t\right)$, $y_m\left(t\right)$, $y_a\left(t\right)$, $x_{u,i}\left(t\right)$, $x_{m,i}\left(t\right)$, $x_{a,i}\left(t\right)$ are $\left[0,1\right]$ for all $t\in\left[0, T\right]$. We first evaluate the performance of the HetNet with homogeneous users formulated in Section~\ref{subsec:system_model_special_case} and investigate the stability of the equilibriums for the classical evolutionary game and fractional evolutionary game. Then, we investigate the behaviors of the heterogeneous users in the HetNet with small-scale fading under different game formulations, i.e., the classical evolutionary game and fractional evolutionary game.

\begin{table}[!]
	\centering
	\caption{Parameters setting for the heterogeneous network with homogeneous users}
	\label{tab:BS_location_special}
	\begin{tabular}{|c|c||c|c||}
		\hline
		\hline
		{\bf{BSs and user}} & {\bf{Coordinate ($Km$)}} & {\bf{Parameters}}& {\bf{Value}} \\
		\hline
		UHF BS & $\left[1\ 0\ 0\right]$ & $\lambda_{u}^\phi$& $10^{-7}$\\
		\hline
		mmWave BS & $\left[0 \ 0 \ 0\right]$ & $\lambda_{m}^\phi$& $1.5\times10^{-9}$\\
		\hline
		\makecell[l]{UAV-enabled \\mmWave BS}& $\left[0\ 0.1\ 0.02\right]$ & $\lambda_{a}^\phi$ & $10^{-9}$\\
		\hline
		User group& $\left[0\ 0.1\ 0\right]$& ${\cal{N}}^G$, $\delta$ & $10$, $2$\\
		\hline
	\end{tabular}
\end{table}


\begin{figure}[!]
	\centering
	\begin{minipage}{6cm}
		\centering
		\includegraphics[width=1\textwidth,trim=5 0 5 5,clip]{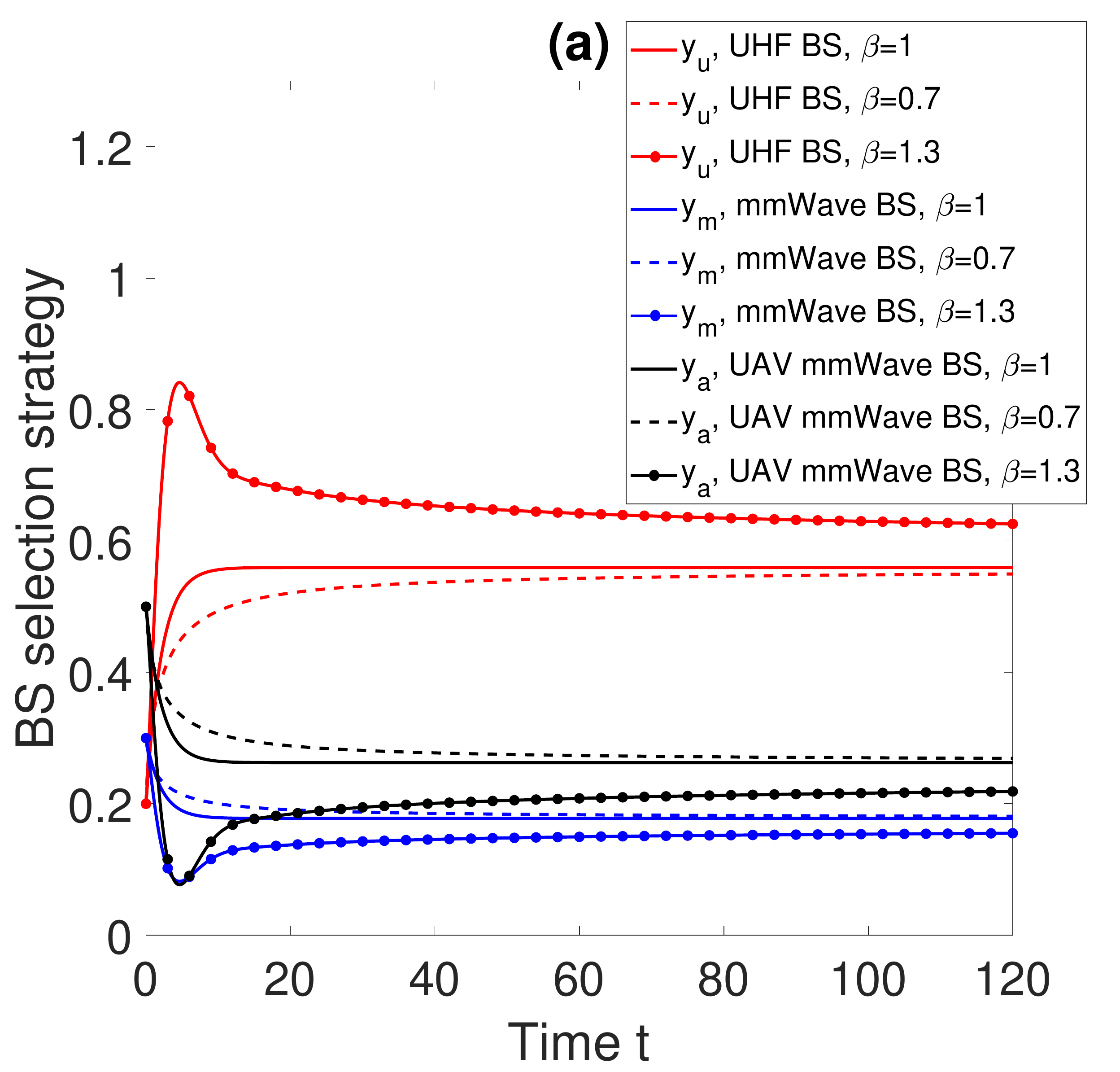}
	\end{minipage}
	\begin{minipage}{6cm}
		\centering
		\includegraphics[width=1\textwidth,trim=5 0 5 5,clip]{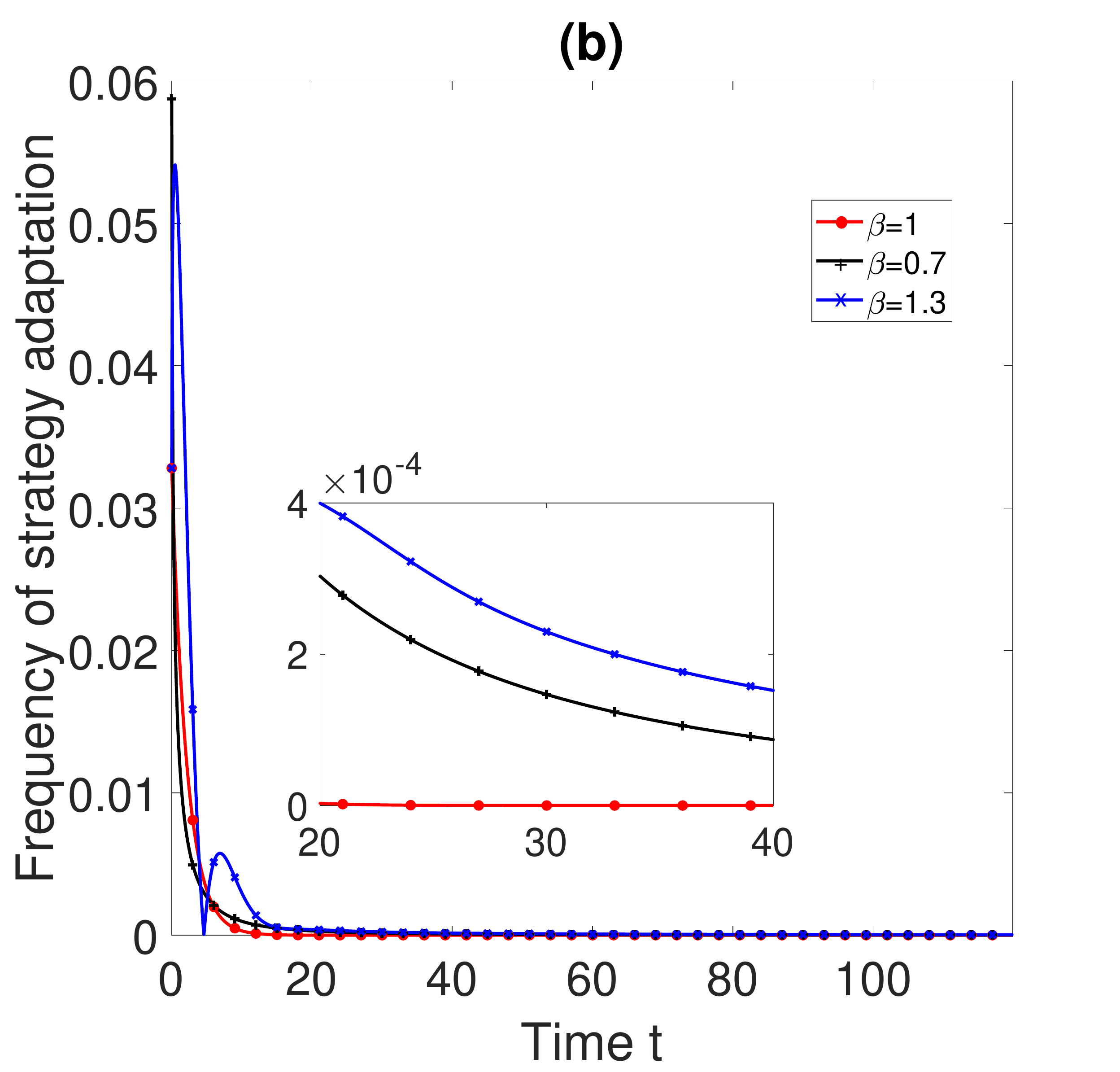}
	\end{minipage}
	\caption{(a) The user's network selection strategy and (b) frequency of strategy adaptation in the classical evolutionary game and fractional evolutionary game}
	\label{fig:special_case_ESS}
\end{figure}

\subsection{Numerical Results for the System Model of the Heterogeneous Network with Homogeneous Users}
\label{subsec:performance_special_case}

We first evaluate the system model formulated in Section~\ref{subsec:system_model_special_case} with the parameter setting given in Table~\ref{tab:BS_location_special}. We compare the results obtained from the fractional evolutionary games with $\beta = 0.7$ and $\beta = 1.3$ with that obtained from the classical evolutionary game, i.e., $\beta = 1$, in Fig.~\ref{fig:special_case_ESS}. As shown in Fig.~\ref{fig:special_case_ESS}(a), the user's strategy of the fractional evolutionary game with $\beta = 1.3$ initially fluctuates in a range more widely than that of the classical evolutionary game and the fractional evolutionary game with $\beta = 0.7$. This means that the strategy adaptation rate of the users in the fractional evolutionary game with $\beta = 1.3$ is faster than that in both the classical evolutionary game and the fractional evolutionary game with $\beta = 0.7$. This is consistent with the result in Fig.~\ref{fig:special_case_ESS}(b) that the frequency of strategy adaptation of the fractional evolutionary game with $\beta = 1.3$ at around $t=4$ is higher than that in both the classical evolutionary game and the fractional evolutionary game with $\beta = 0.7$. However, it can be observed in Fig.~\ref{fig:special_case_ESS}(b) that the user's strategies in the fractional evolutionary game with $\beta = 1.3$ converge to the equilibrium more slowly than that in both the classical evolutionary game and the fractional evolutionary game with $\beta = 0.7$. This is due to the fluctuation in the user's strategy of the fractional evolutionary game with $\beta = 1.3$.


\begin{figure}[!]
	\centering
	\begin{minipage}{5cm}
		\centering
		\includegraphics[width=1.05\textwidth,trim=5 5 15 5,clip]{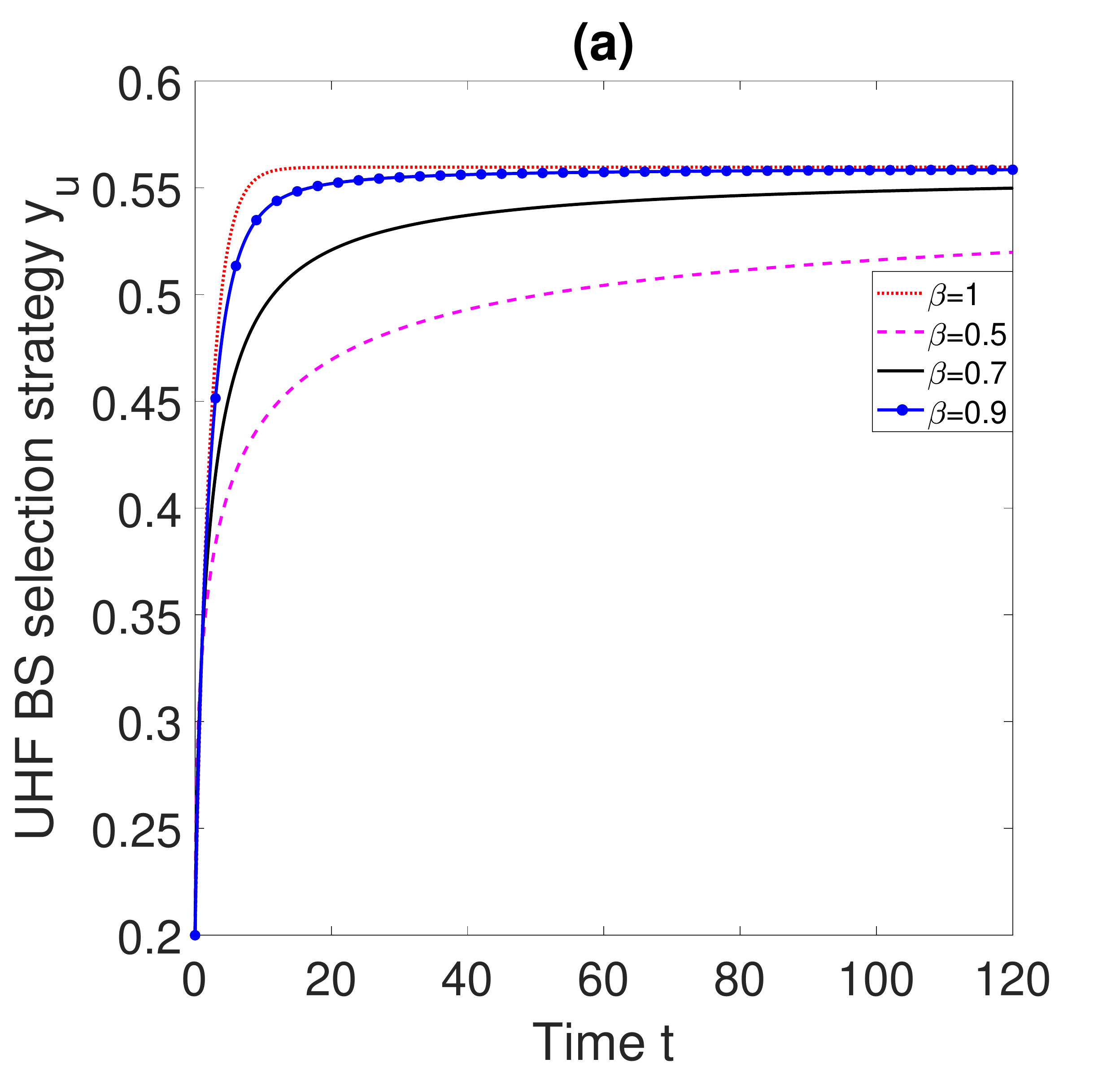}
	\end{minipage}
	\begin{minipage}{5cm}
		\centering
		\includegraphics[width=1.05\textwidth,trim=5 5 15 5,clip]{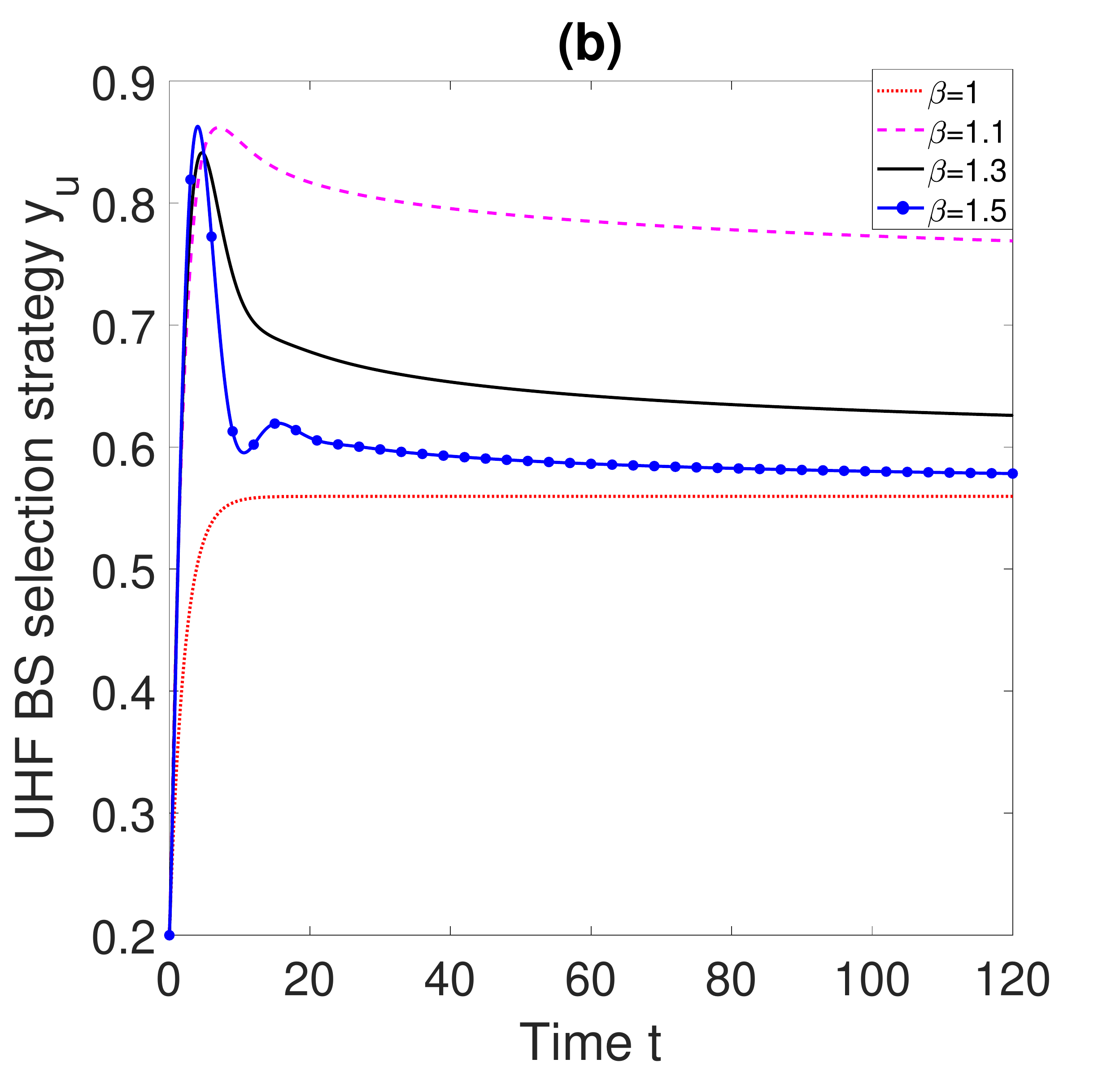}
	\end{minipage}
	\begin{minipage}{5cm}
		\centering
		\includegraphics[width=1.05\textwidth,trim=5 0 15 5,clip]{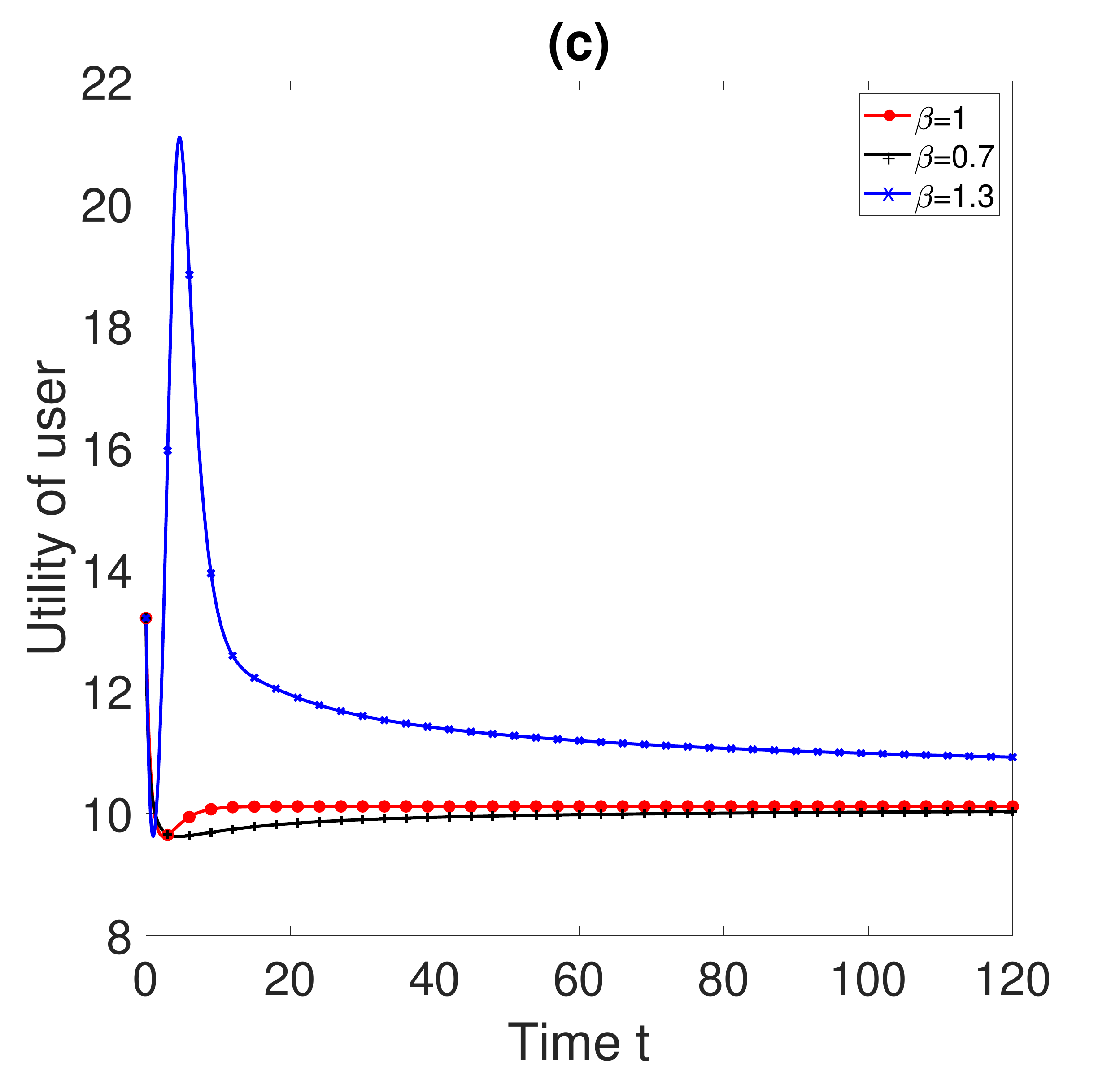}
	\end{minipage}
	\caption{The impact of the coefficient $\beta$ of the left-sided Caputo fractional derivative on the fractional evolutionary game}
	\label{fig:special_case_coefficient}
\end{figure}

To specifically illustrate the impact of the coefficient $\beta$ of the left-sided Caputo fractional derivative on the fractional evolutionary game and further explain its physical meaning, we vary the value of $\beta$ and show the numerical results in Fig.~\ref{fig:special_case_coefficient}. In the figure, the user's strategy fluctuate more widely as the value of $\beta$ increases in the fractional evolutionary games with both $\beta<1$ and $\beta>1$. This means that the strategy adaptation rate increases as the value of $\beta$ increases. Moreover, from Fig.~\ref{fig:special_case_coefficient}(b), as the value of $\beta$ increases, the rate that the user's strategy converges to the vicinity of the equilibrium strategy becomes faster. This is also due to the strategy adaptation rate, which is consistent with the results in and the explanation for Fig.~\ref{fig:special_case_ESS}. 

To specifically illustrate the impact of the coefficient $\beta$ of the left-sided Caputo fractional derivative on the fractional evolutionary game and further explain its physical meaning, we vary the value of $\beta$ and show the results in Fig.~\ref{fig:special_case_coefficient}. In the figure, the rate that the user's strategy converges to the equilibrium becomes faster as the value of $\beta$ increases in the fractional evolutionary game with both $\beta<1$ and $\beta>1$. This means that the convergence rate of the replicator dynamics increases as the value of $\beta$ increases. Moreover, from Fig.~\ref{fig:special_case_coefficient}(b), the rate that the user's strategy converges to the vicinity of the equilibrium strategy becomes faster as the value of $\beta$ increases. This implies that the user's strategy adaptation rate increases as the value of $\beta$ increases, which is consistent with the results in and the explanation for Fig.~\ref{fig:special_case_ESS}. 

In addition, we also evaluate the user's utility under different memory effects, i.e., different values of $\beta$. As shown in Fig.~\ref{fig:special_case_coefficient}(c), the memory effect with $\beta=0.7\in\left(0,1\right)$ can lead to a worse utility for the user compared with the model without memory effect, i.e., $\beta = 1$. In contrast, the memory effect with $\beta = 1.3 \in\left(1,2\right)$ can lead to a better utility for the user compared with the model without memory effect. The results observed in Fig.~\ref{fig:special_case_coefficient}(c) are consistent with the results in~\cite{tarasova2016fractional}. In this case, we would like to define the memory effect with $\beta\in\left(0,1\right)$ as negative memory effect and that with $\beta\in\left(1,2\right)$ as positive memory effect.


\begin{figure*}[!]
	\centering
	\begin{minipage}{5cm}
		\centering
		\includegraphics[width=1.05\textwidth,trim=5 0 10 5,clip]{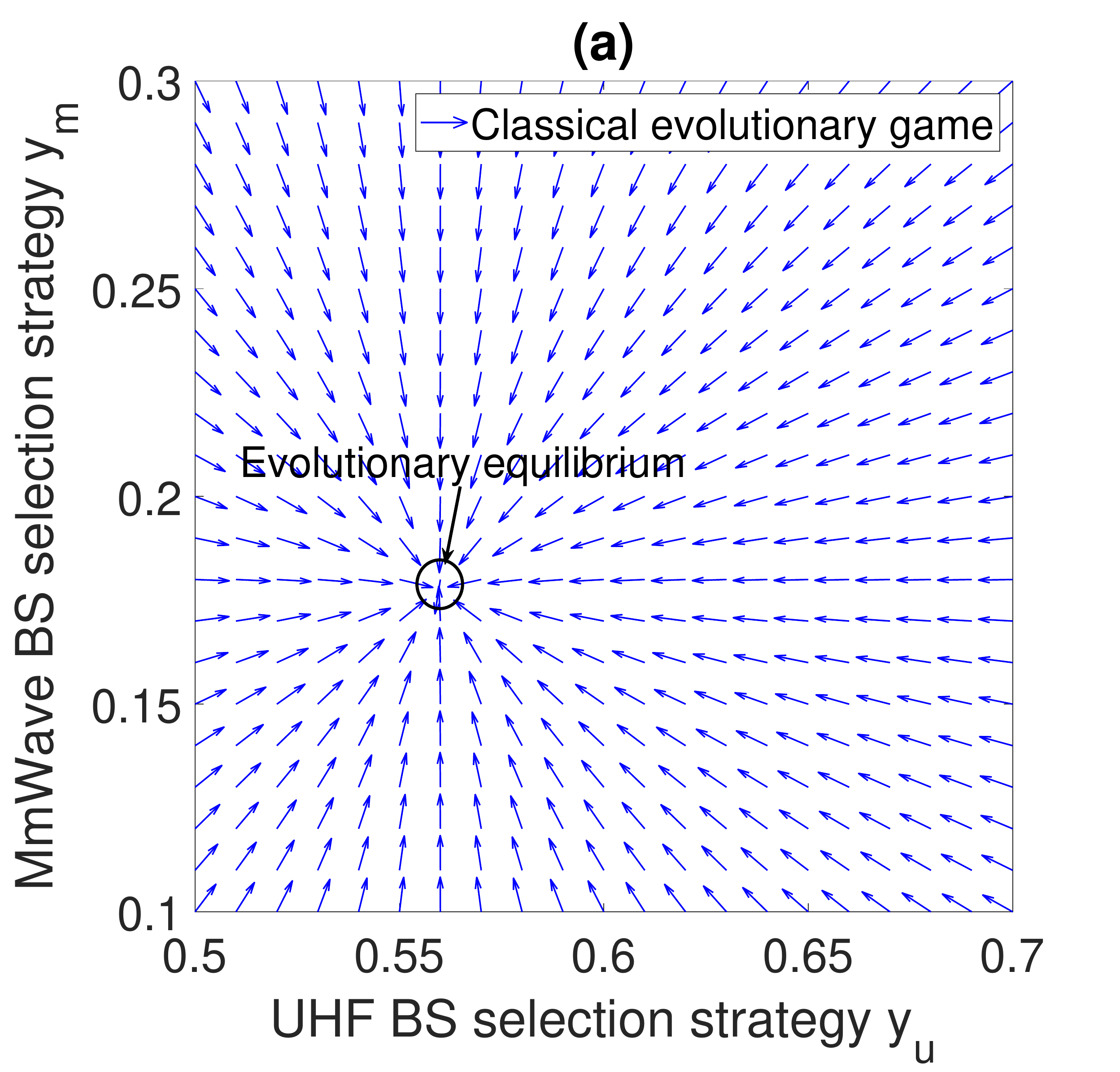}
	\end{minipage}
	\begin{minipage}{5cm}
		\centering
		\includegraphics[width=1.05\textwidth,trim=5 0 10 5,clip]{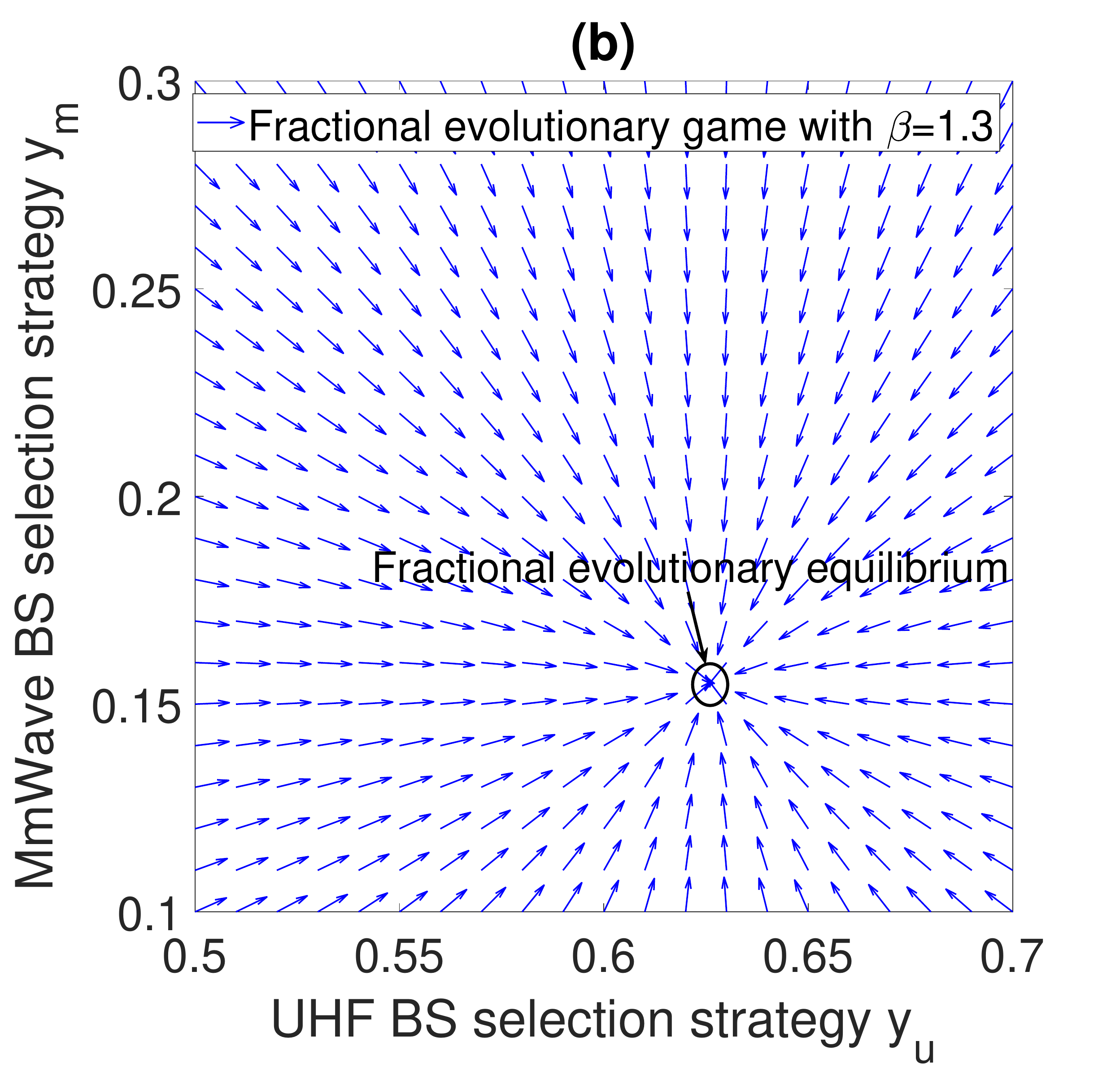}
	\end{minipage}
	\begin{minipage}{5cm}
		\centering
		\includegraphics[width=1.05\textwidth,trim=5 0 10 5,clip]{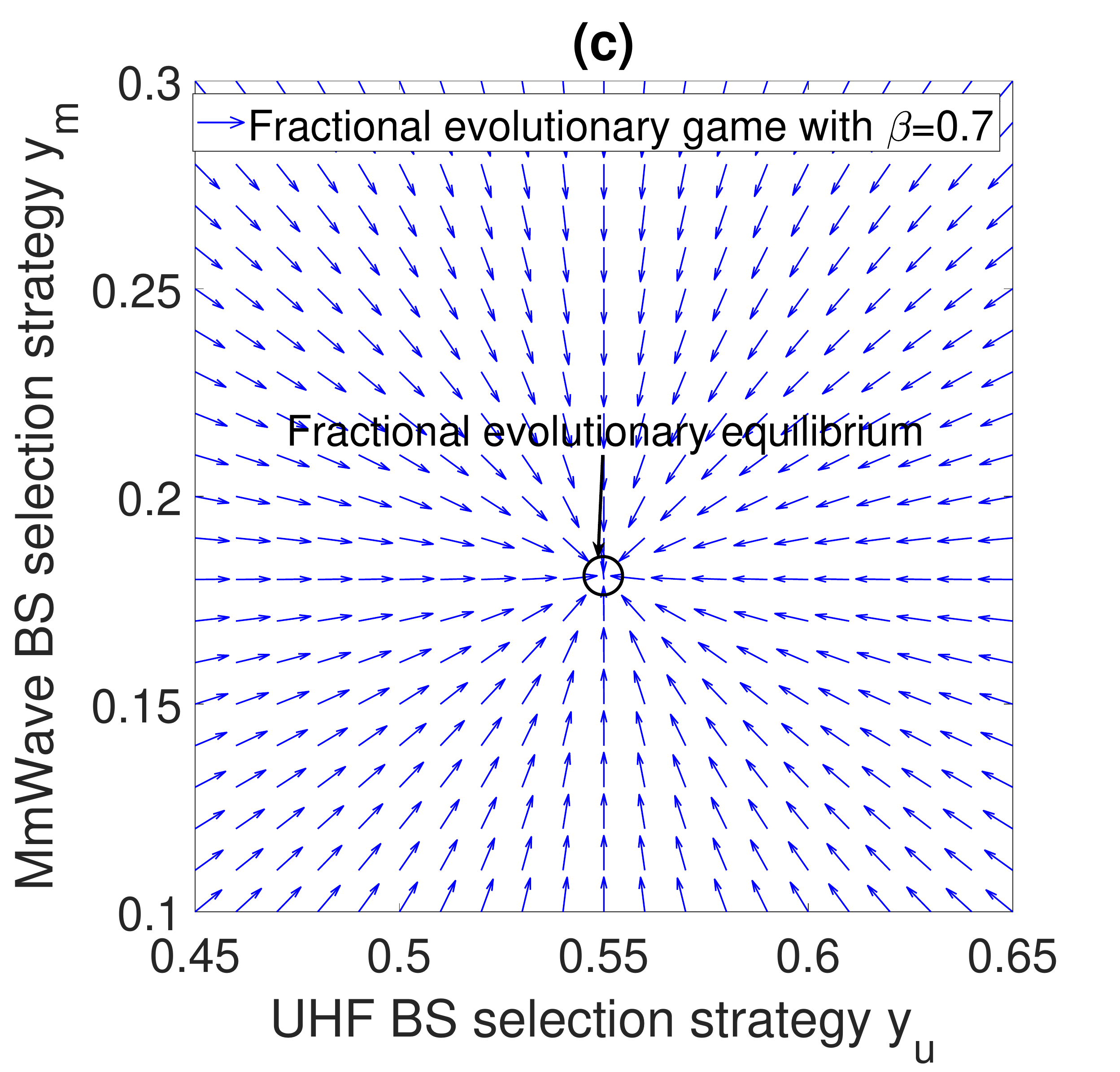}
	\end{minipage}
	\begin{minipage}{5cm}
		\centering
		\includegraphics[width=1.05\textwidth,trim=5 0 10 5,clip]{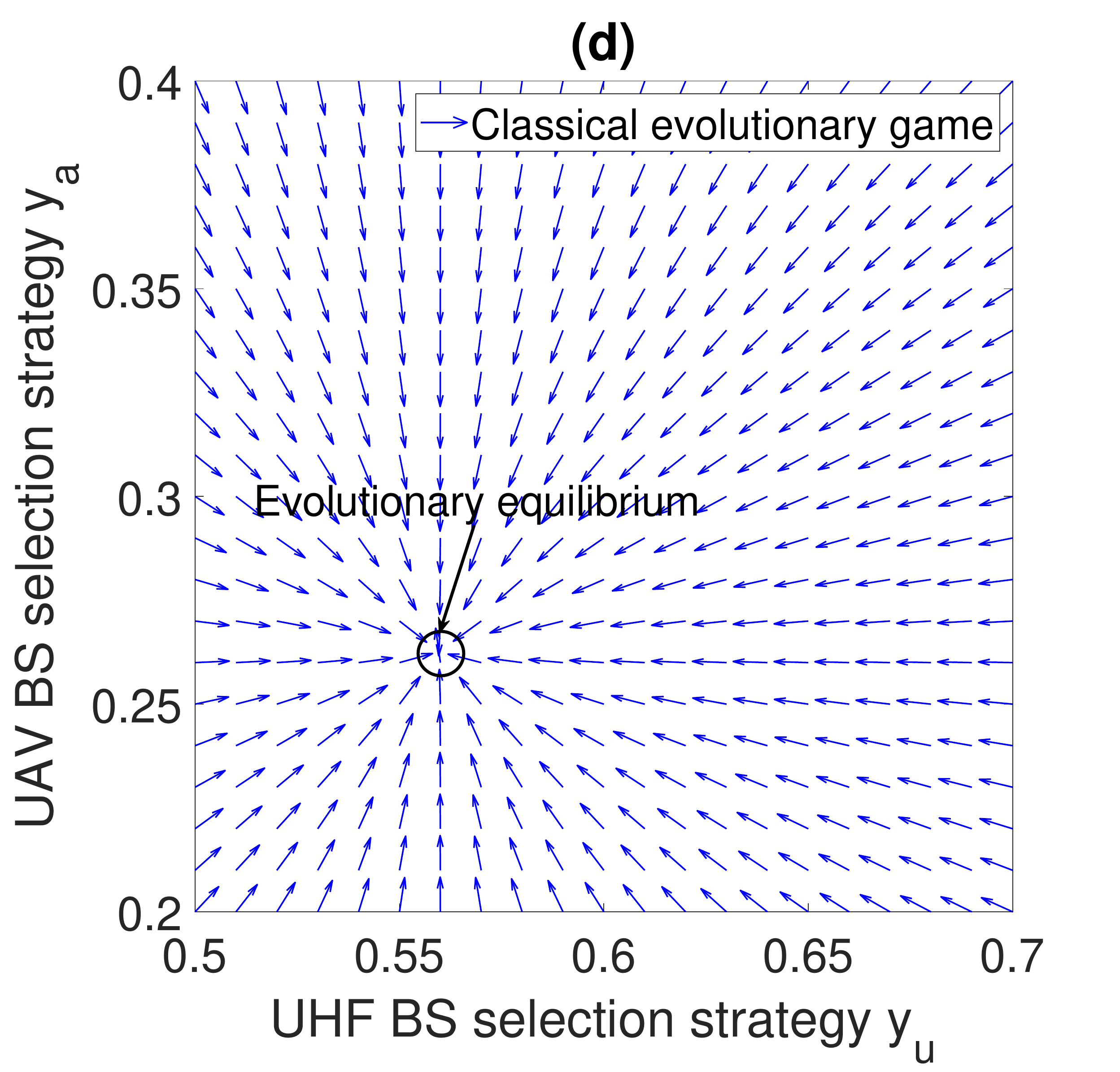}
	\end{minipage}
	\begin{minipage}{5cm}
		\centering
		\includegraphics[width=1.05\textwidth,trim=5 0 10 5,clip]{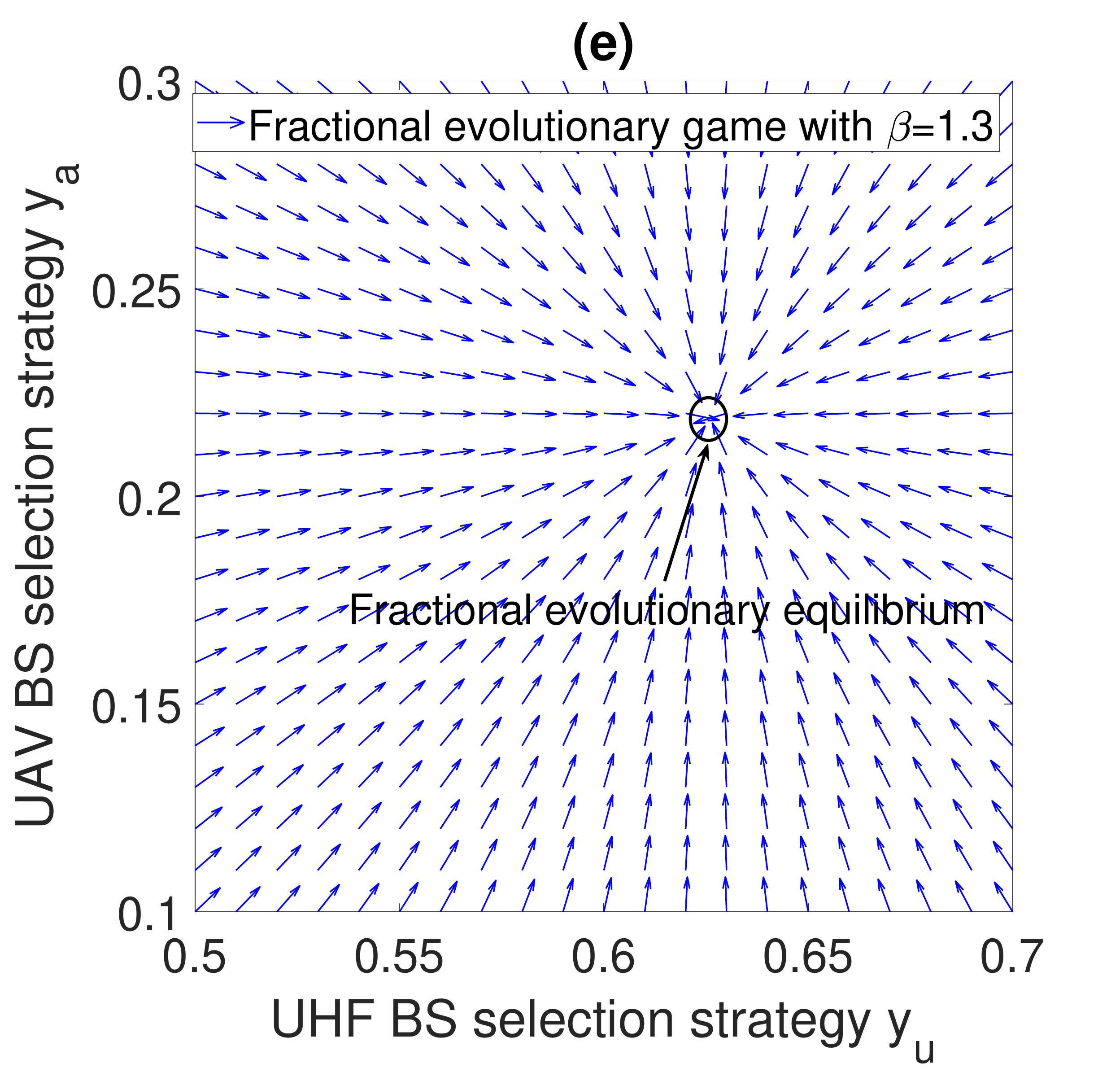}
	\end{minipage}
	\begin{minipage}{5cm}
		\centering
		\includegraphics[width=1.05\textwidth,trim=5 0 10 5,clip]{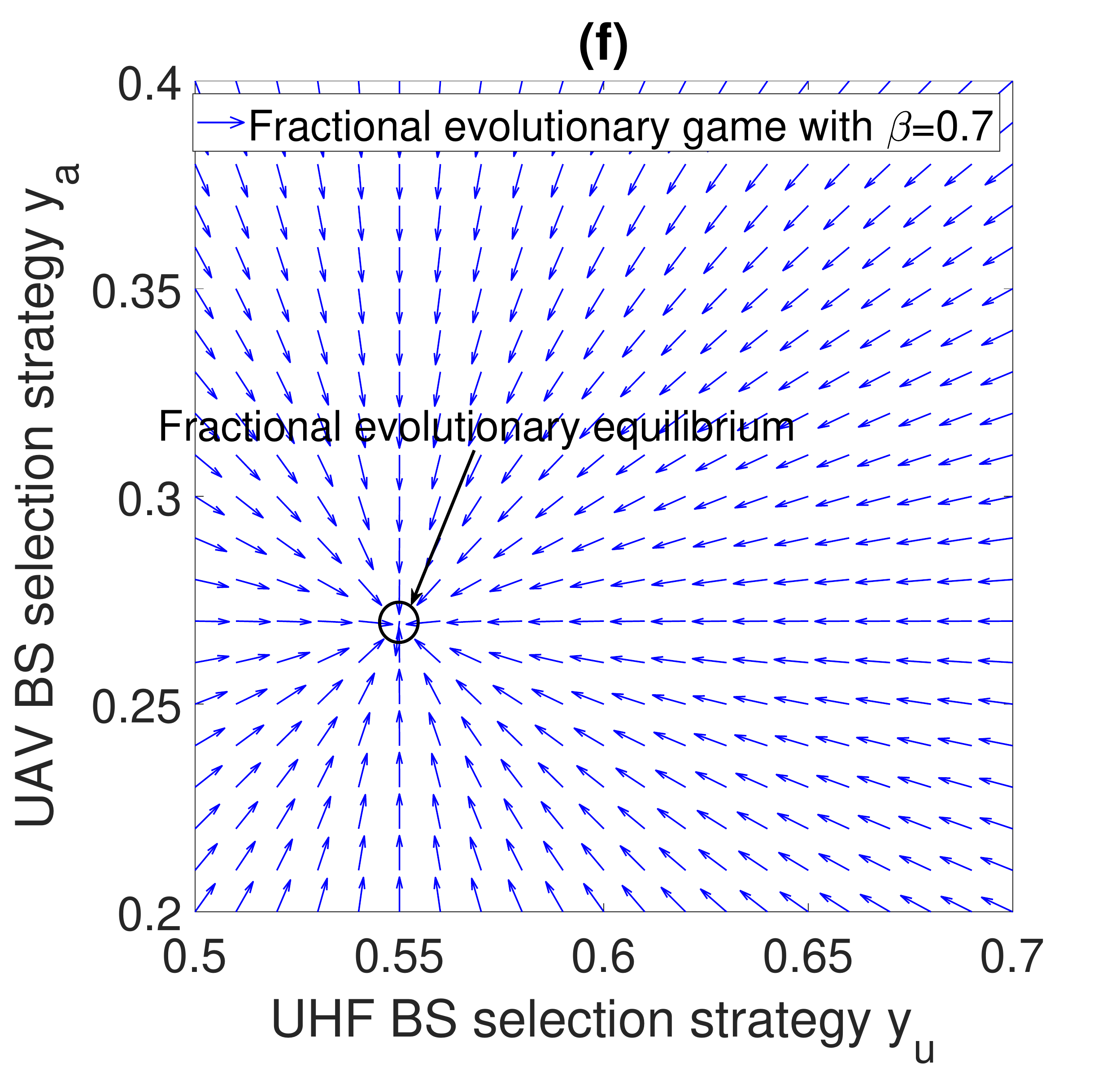}
	\end{minipage}
	\begin{minipage}{5cm}
		\centering
		\includegraphics[width=1.05\textwidth,trim=5 0 10 5,clip]{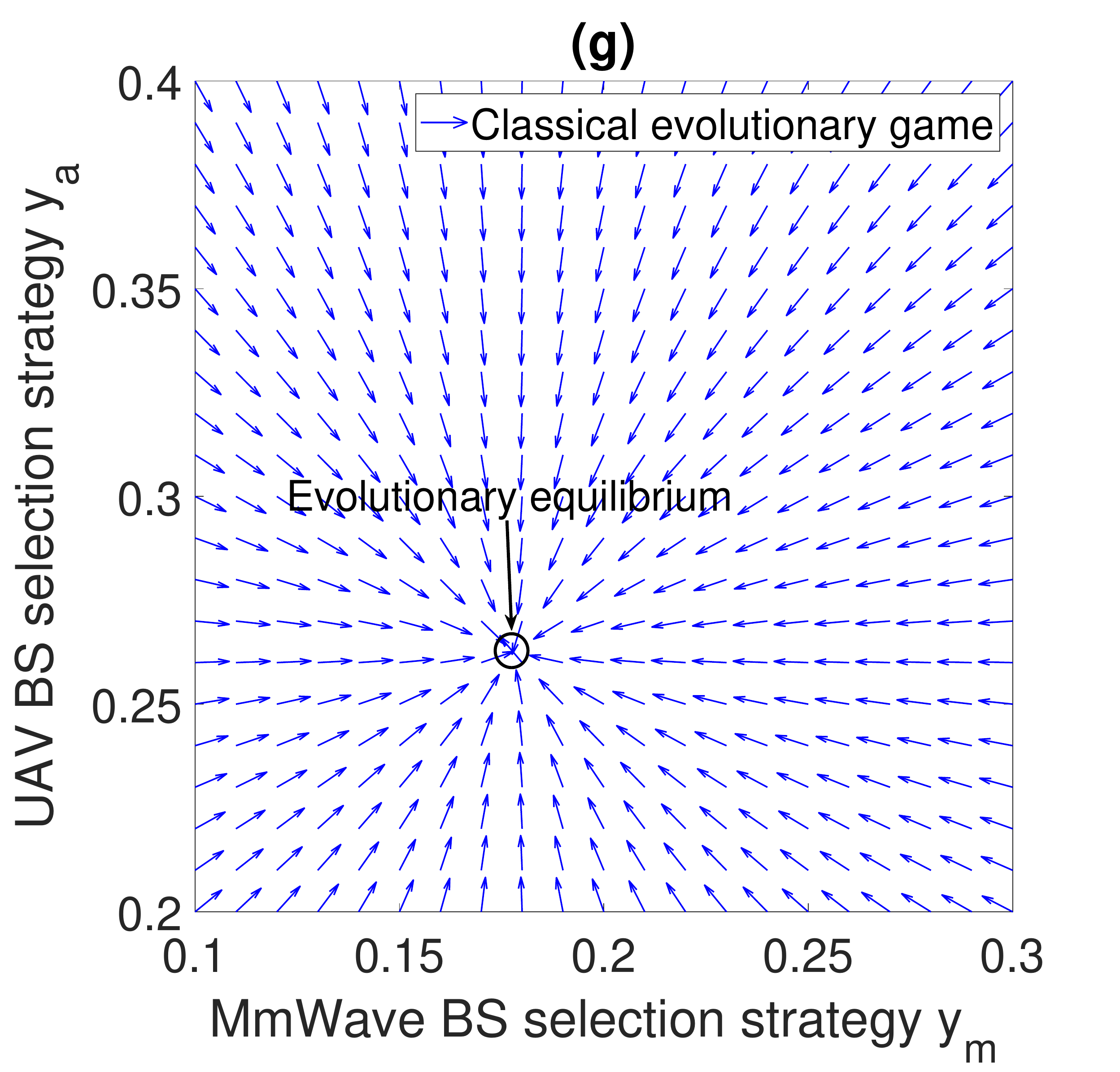}
	\end{minipage}
	\begin{minipage}{5cm}
		\centering
		\includegraphics[width=1.05\textwidth,trim=5 0 10 5,clip]{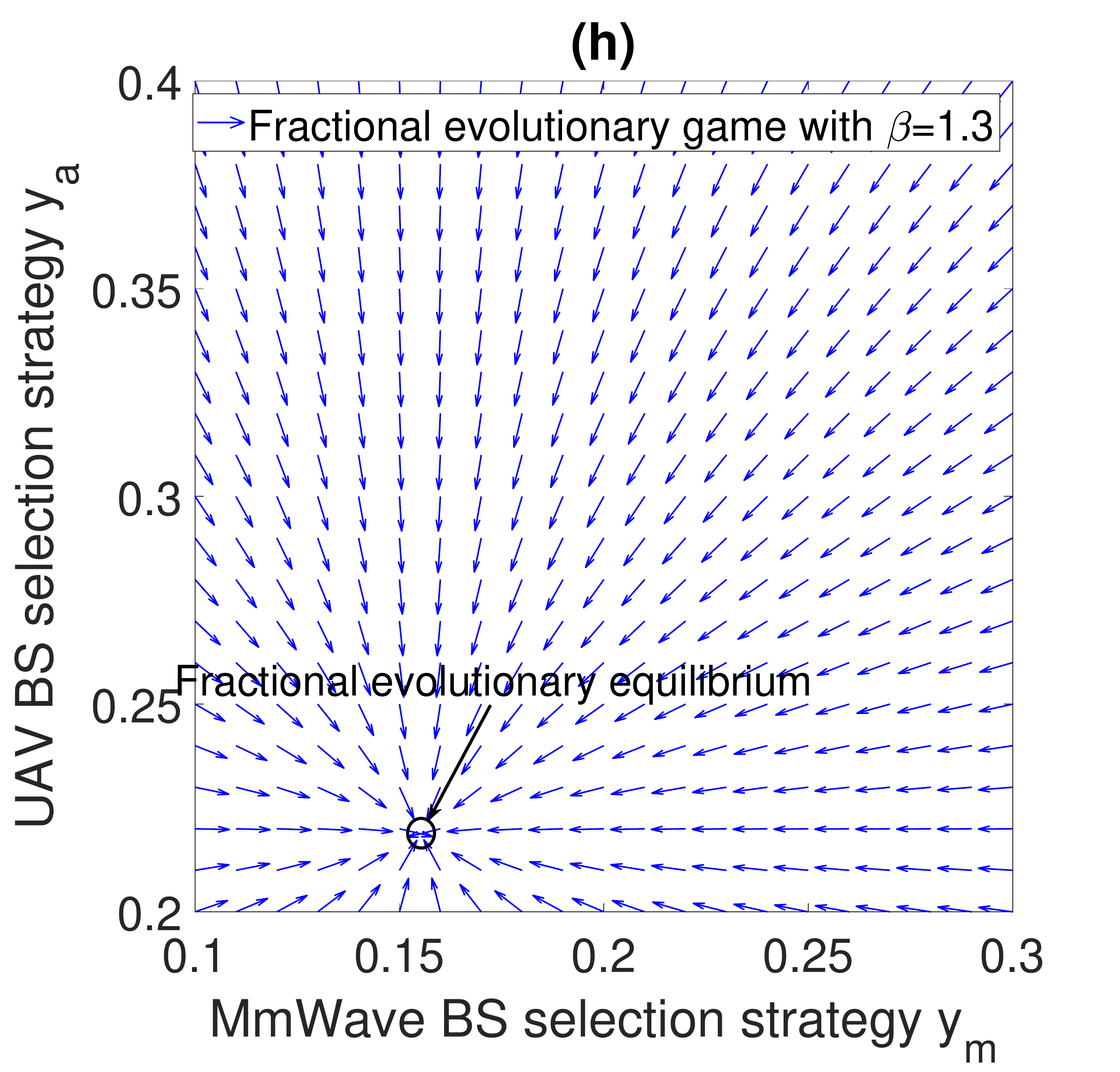}
	\end{minipage}
	\begin{minipage}{5cm}
		\centering
		\includegraphics[width=1.05\textwidth,trim=5 0 10 5,clip]{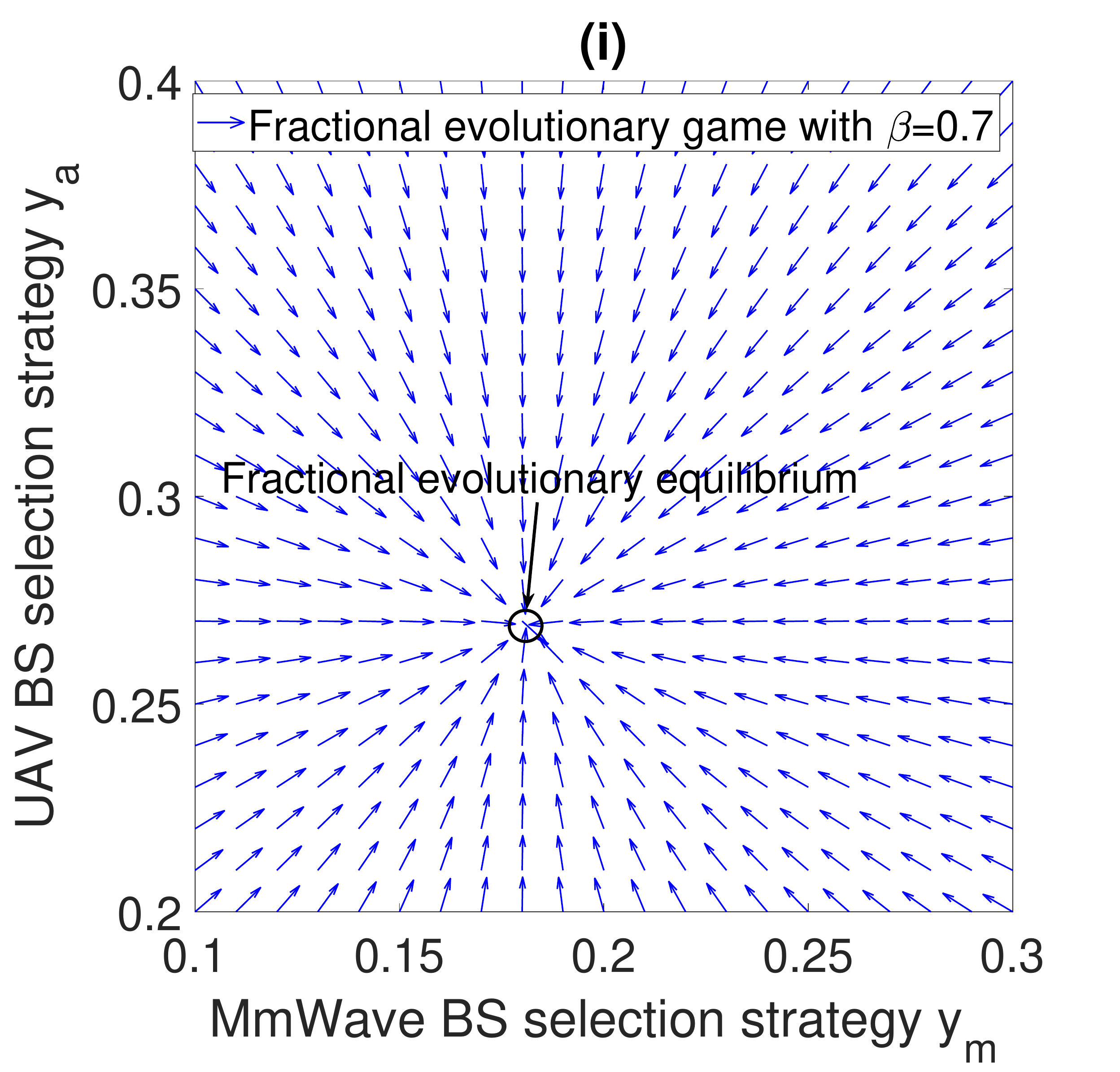}
	\end{minipage}
	\caption{Direction field of the replicator dynamics for verifying the stability of the strategies in the (fractional) evolutionary equilibrium when $t=120$}
	\label{fig:special_case_direction_field}
\end{figure*}

To verify the stability of the strategies in the classical and fractional evolutionary games, we present the direction field of the replicator dynamics. Moreover, to better understand the stability of equilibriums in the classical and fractional evolutionary games, we show the direction field of the replicator dynamics at $t=120$, where the user's strategies have already been stabilized at the equilibrium strategies as shown in Fig.~\ref{fig:special_case_ESS}. As shown in Fig.~\ref{fig:special_case_direction_field}, any unstable strategy will follow the arrow to reach the equilibrium strategy, which is marked by the black circle. This demonstrates the convergence and stability of the strategy and is consistent with Theorem~\ref{th:fractional_equilibrium_stability}. Additionally, the strategies converge to different equilibrium strategies at $t=120$ in the classical and fractional evolutionary games as shown in Fig.~\ref{fig:special_case_ESS}. This can also be verified by using the direction field of the replicator dynamics in Fig.~\ref{fig:special_case_direction_field}, e.g., the unstable strategies follow the arrow to reach different equilibrium strategies in different games as shown in Figs.~\ref{fig:special_case_direction_field}(a), (b), and (c).


\begin{figure}[!]
	\centering
	\begin{minipage}{6cm}
		\centering
		\includegraphics[width=1.1\textwidth,trim=5 5 15 5,clip]{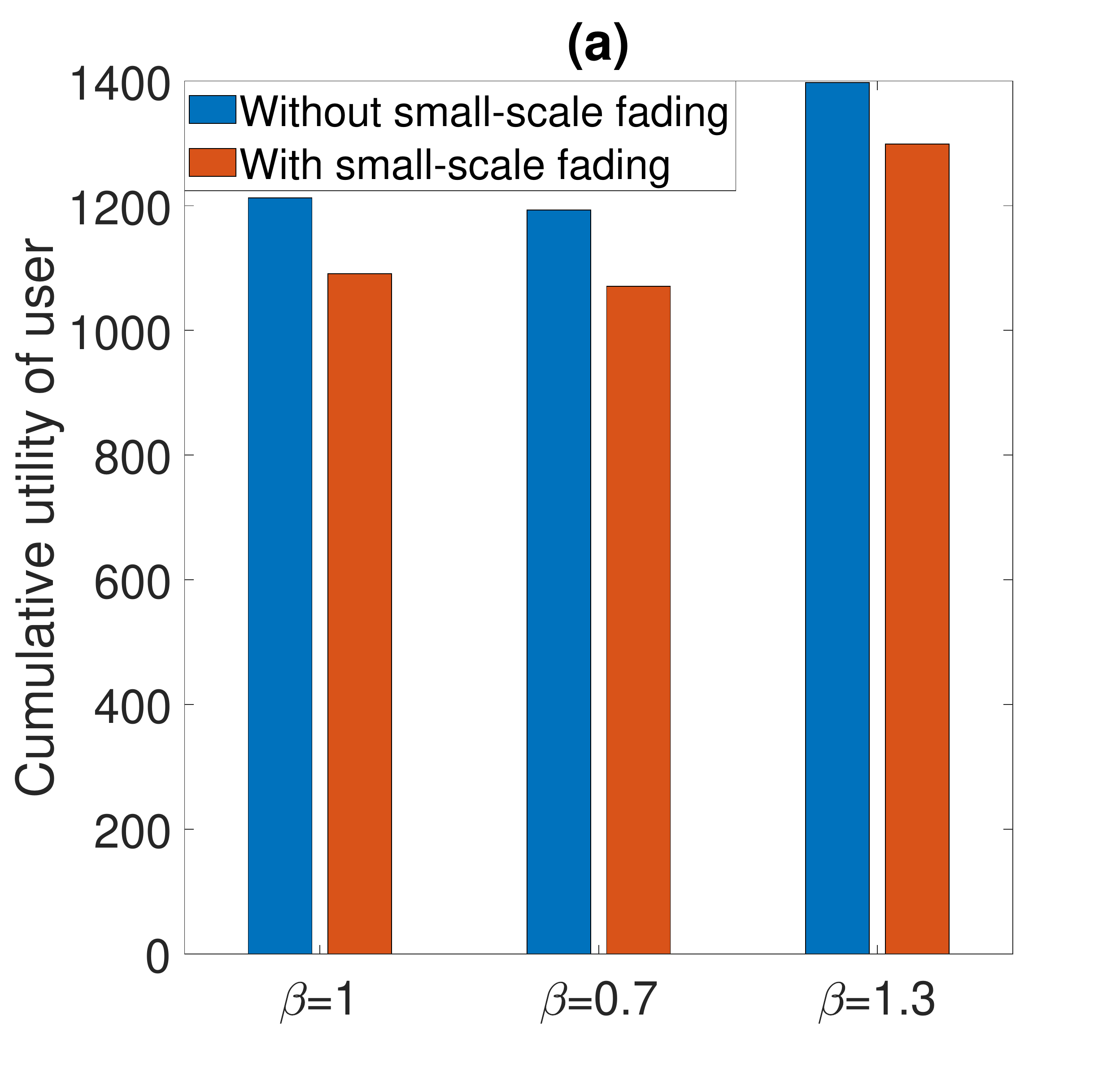}
	\end{minipage}
	\begin{minipage}{6cm}
		\centering
		\includegraphics[width=1.1\textwidth,trim=5 5 15 5,clip]{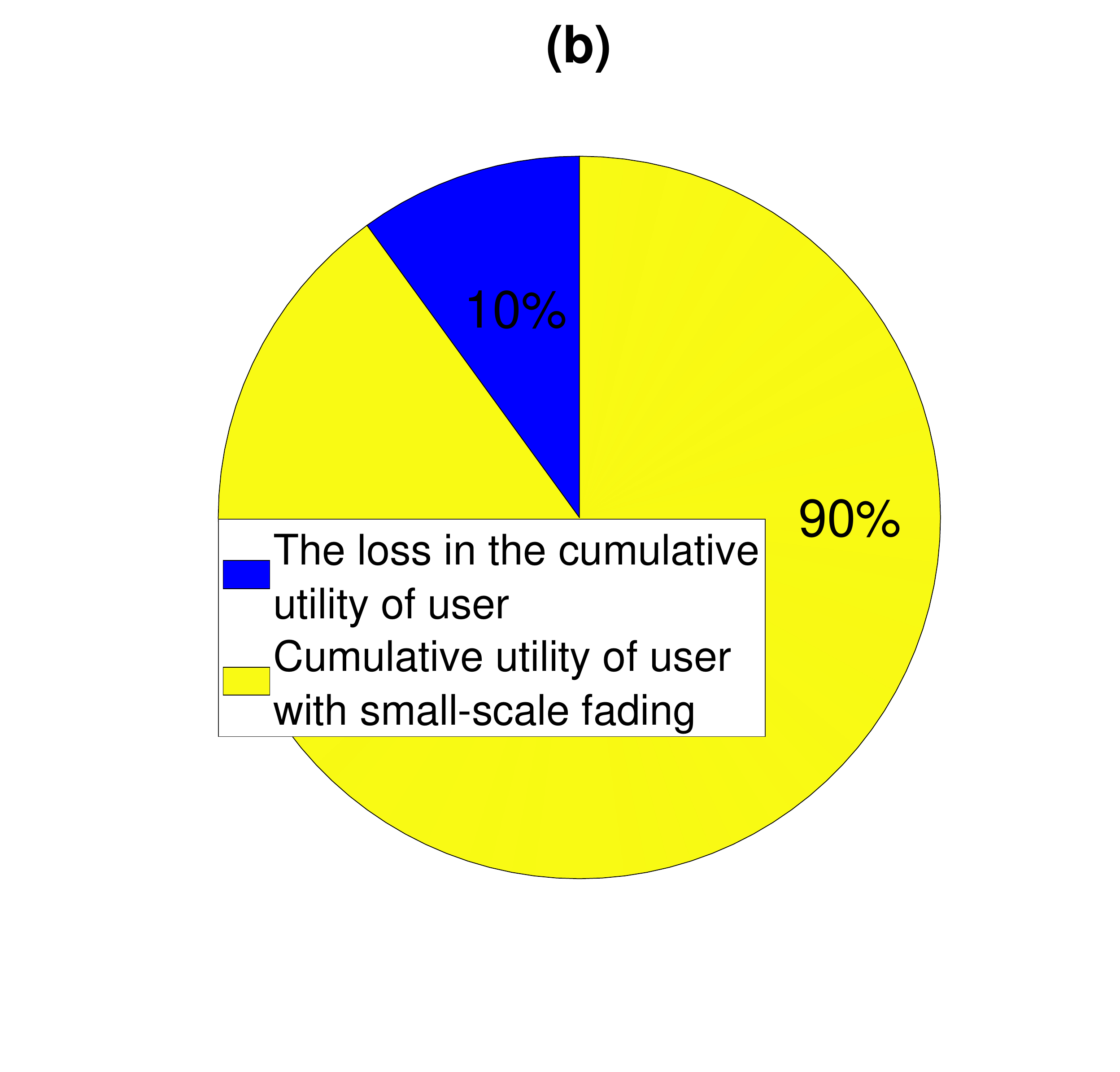}
	\end{minipage}
	\begin{minipage}{6cm}
		\centering
		\includegraphics[width=1.1\textwidth,trim=5 5 15 5,clip]{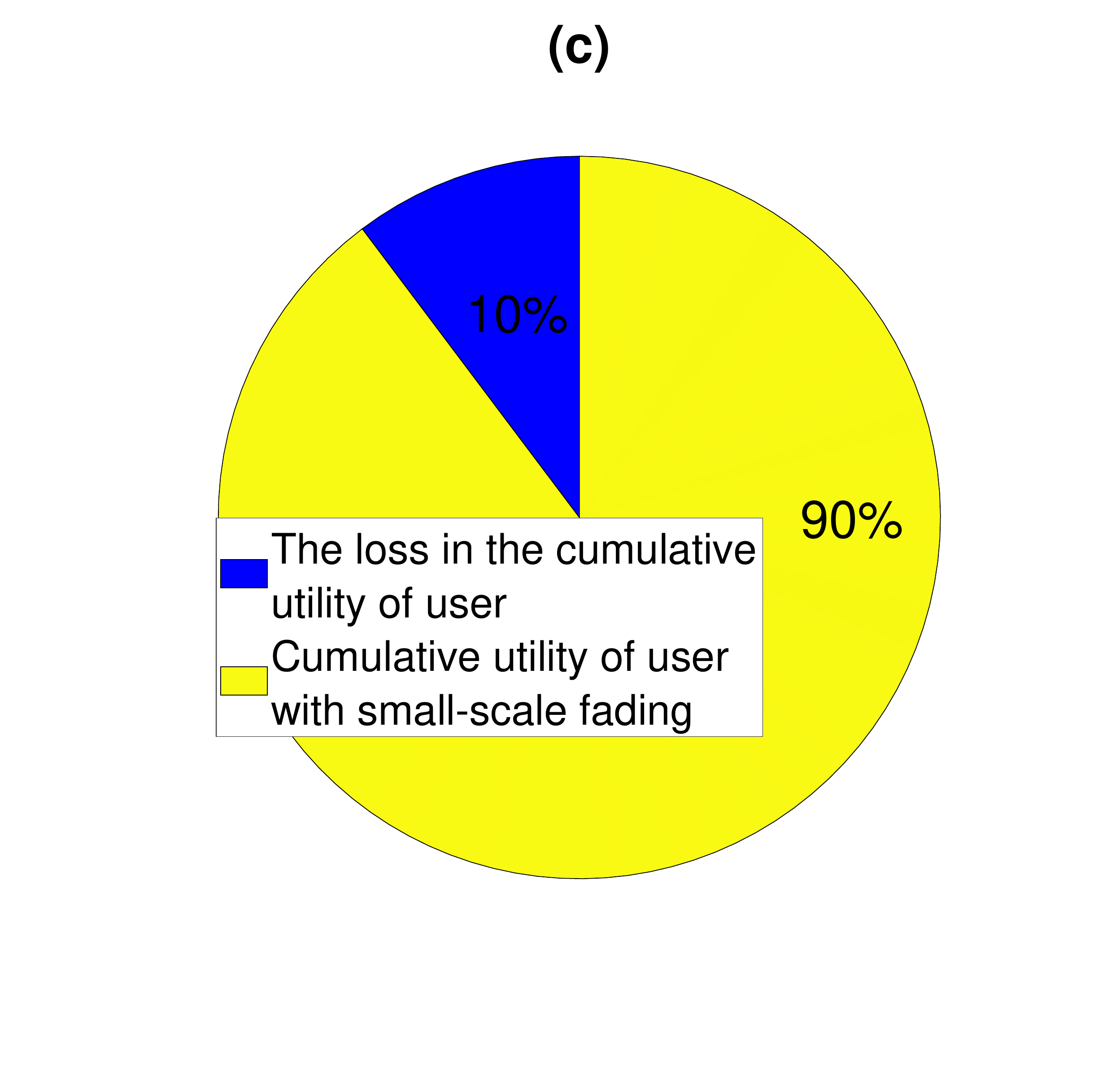}
	\end{minipage}
	\begin{minipage}{6cm}
		\centering
		\includegraphics[width=1.1\textwidth,trim=5 5 15 5,clip]{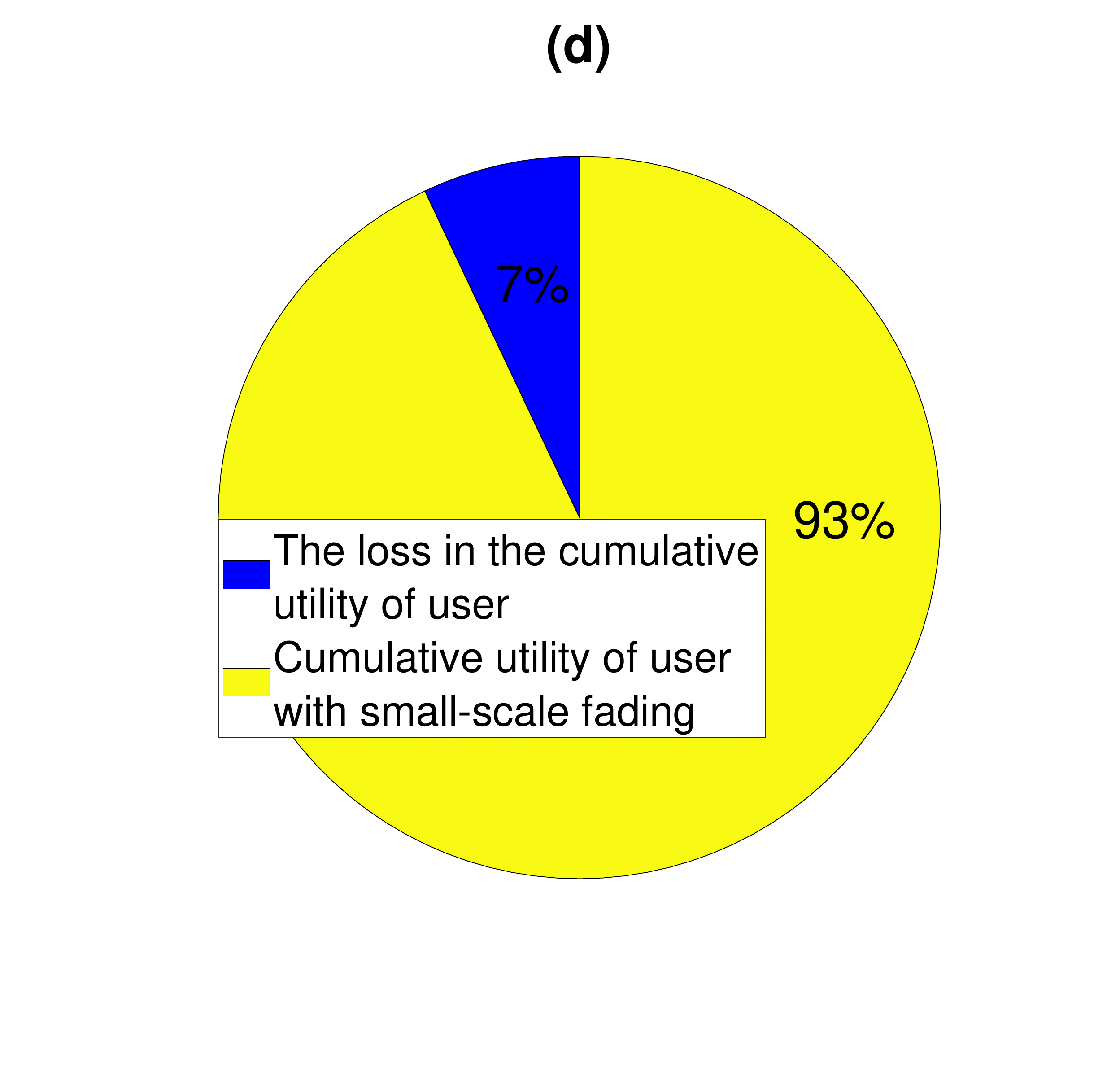}
	\end{minipage}
	\caption{(a) The user's cumulative utility and the loss in the user's cumulative utility in the fractional/classical evolutionary games with $\beta = 1$ (b), $\beta = 0.7$ (c), and $\beta = 1.3$ (d) due to the small-scale fading}
	\label{fig:summary_cumulative_utility}
\end{figure}

Next, we evaluate the joint impact of the memory and the small-scale fading on the user's cumulative utility. We assume the channel gain of the user will fluctuate every time interval of $0.01$. Note here that the results of the user's cumulative utility with the small-fading are the mean of the results we obtained from repeating the fractional/classical evolutionary game for $100$ times. As shown in Fig.~\ref{fig:summary_cumulative_utility}(a), the user's cumulative utilities in all the games get damaged due to the randomicity in the user's channel gain caused by the small-scale fading. To explicitly explore the damage in the user's cumulative utility, we summarize the results of Fig.~\ref{fig:summary_cumulative_utility}(a) in Figs.~\ref{fig:summary_cumulative_utility}(b), (c), and~(d). Here, the results show that the positive memory effect, i.e., $\beta\in\left(1,2\right)$, can help the user to reduce the negative effect caused by the small-scale fading on its cumulative utility compared with the negative memory effect, i.e., $\beta\in\left(0,1\right)$. The reason is that the fast strategy adaptation rate of the users due to the positive memory effect can help the user to timely make decision on its network selection strategy, which in return reduces the negative effect caused by the small-scale fading on its cumulative utility.


\subsection{Numerical Results for the System Model of the Heterogeneous Network with Heterogeneous Users}
\label{subsec:performance_general}

\begin{table*}[!]
	\centering
	\caption{Location of the BSs and users}
	\begin{tabular}{|c|c|c||c|c|c||}
		\hline
		\hline
		{\bf{Set}} & {\bf{No.}}&{\bf{Coordinate ($Km$)}} &{\bf{Set}} & {\bf{No.}}&{\bf{Coordinate ($Km$)}} \\
		\hline
		\multirow{2}{*}{UHF BS ${\cal{U}}$} & $1$ & $\left[1\ 0\ 0\right]$ &\multirow{4}{*}{\makecell[c]{UAV-enabled \\mmWave BS ${\cal{A}}$}} & $1$ & $\left[0\ 0.1\ 0.02\right]$\\
		\cline{2-3} \cline{5-6}
		& $2$ & $\left[1\ 7\ 0\right]$ & & $2$ & $\left[0\ -0.1\ 0.02\right]$\\
		\cline{1-3} \cline{5-6}
		\multirow{2}{*}{mmWave BS ${\cal{M}}$} &  $1$ & $\left[0\ 0\ 0\right]$ & & $3$ & $\left[0\ 7.1\ 0.02\right]$\\
		\cline{2-3} \cline{5-6}
		& $2$ & $\left[0\ 7\ 0\right]$ & & $4$ & $\left[0\ 6.9\ 0.02\right]$\\
		\hline
		\multirow{8}{*}{User ${\cal{N}}$} & $1$ & $\left[0\ 0.12\ 0\right]$&\multirow{8}{*}{User ${\cal{N}}$} & $9$ &$\left[0\ 7.12\ 0\right]$\\
		\cline{2-3} \cline{5-6}
		& $2$ & $\left[0\ 0.08\ 0\right]$ & & $10$ & $\left[0\ 7.08\ 0\right]$\\
		\cline{2-3} \cline{5-6}
		& $3$ & $\left[0.02\ 0.1\ 0\right]$ & & $11$ & $\left[0.02\ 7.1\ 0\right]$ \\
		\cline{2-3} \cline{5-6}
		& $4$ & $\left[-0.02\ 0.1\ 0\right]$ & & $12$ & $\left[-0.02\ 7.1\ 0\right]$ \\
		\cline{2-3} \cline{5-6}
		& $5$ & $\left[0\ -0.12\ 0\right]$ & & $13$ & $\left[0\ 6.92\ 0\right]$ \\
		\cline{2-3} \cline{5-6}
		& $6$ & $\left[0\ -0.08\ 0\right]$ & & $14$ & $\left[0\ 6.88\ 0\right]$\\
		\cline{2-3} \cline{5-6}
		& $7$ & $\left[0.02\ -0.1\ 0\right]$ & & $15$ & $\left[0.02\ 6.9\ 0\right]$\\
		\cline{2-3} \cline{5-6}
		& $8$ & $\left[-0.02\ -0.1\ 0\right]$ & & $16$ & $\left[-0.02\ 6.9\ 0\right]$\\
		\hline
	\end{tabular}
	\label{tab:BS_user_location_general}
\end{table*}

\begin{table}[!]
	\centering
	\caption{Parameters setting for the heterogeneous network with heterogeneous users}
	\begin{tabular}{|c|c||c|c||}
		\hline
		\hline
		{\bf{Parameters}}& {\bf{Value}} & {\bf{Parameters}}& {\bf{Value}} \\
		\hline
		$\lambda_{u,i}^\phi$& $\frac{1}{3}\times10^{-7}$ & $\lambda_{a,i}^\phi$& $10^{-9}$\\
		\hline
		$\lambda_{m,i}^\phi$& $10^{-9}$ & $\delta$ & $2$\\
		\hline
	\end{tabular}
	\label{tab:parameter_general}
\end{table}

We consider the HetNet with $2$ UHF BSs, $2$ mmWave BSs, and $4$ UAV-enabled mmWave BSs for simplicity, the locations of which are indicated in Table~\ref{tab:BS_user_location_general}. The locations of the users are given in Table~\ref{tab:BS_user_location_general}. The parameter setting for the system model is shown in Table~\ref{tab:parameter_general}. We evaluate the performance of the HetNet with the small-scale fading. 


\label{subsubsec:performance_general_with_fading}

\begin{figure}[!]
	\centering
	\begin{minipage}{4cm}
		\centering
		\includegraphics[width=1.05\textwidth,trim=5 5 15 5,clip]{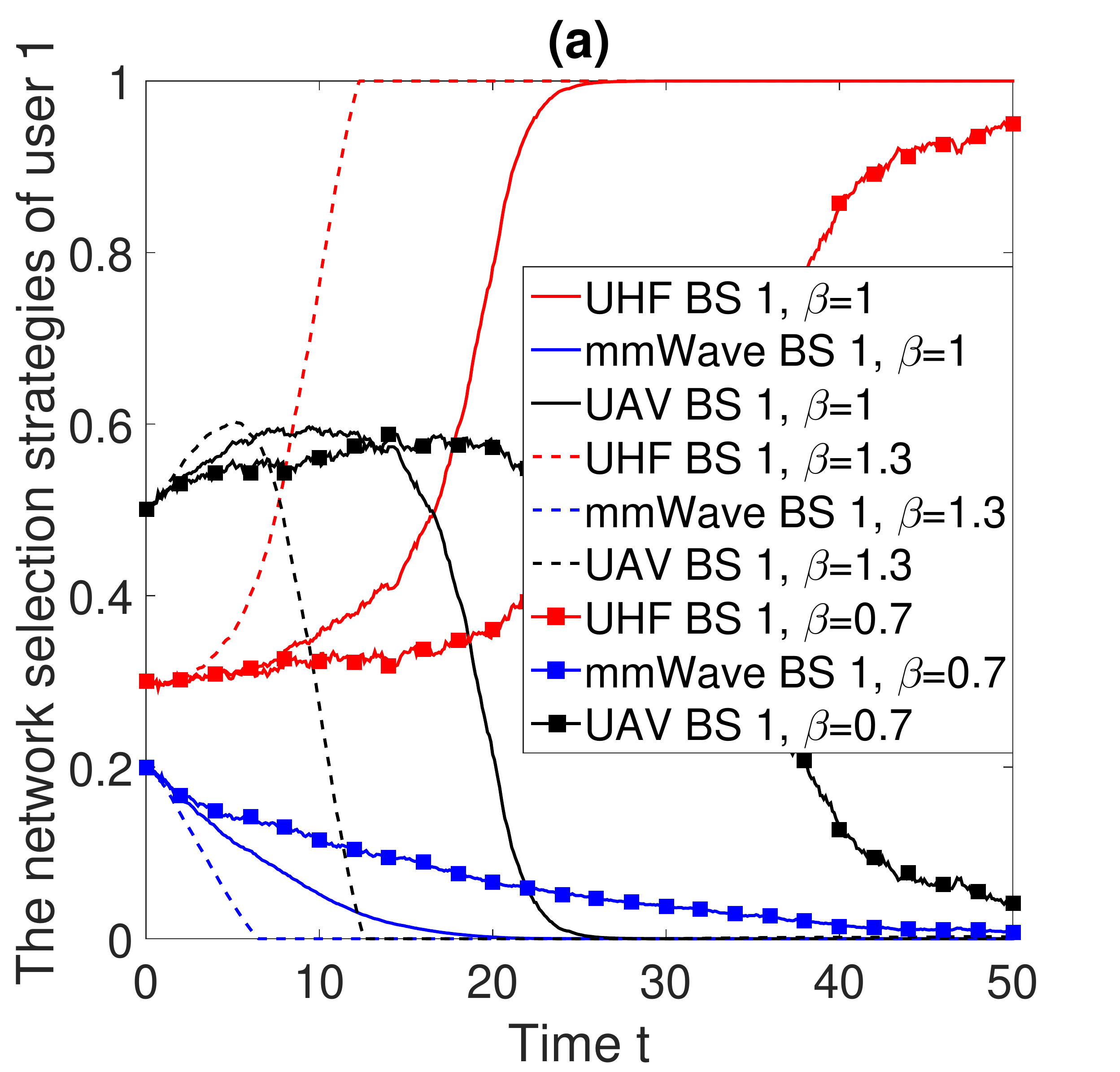}
	\end{minipage}
	\begin{minipage}{4cm}
		\centering
		\includegraphics[width=1.05\textwidth,trim=5 5 15 5,clip]{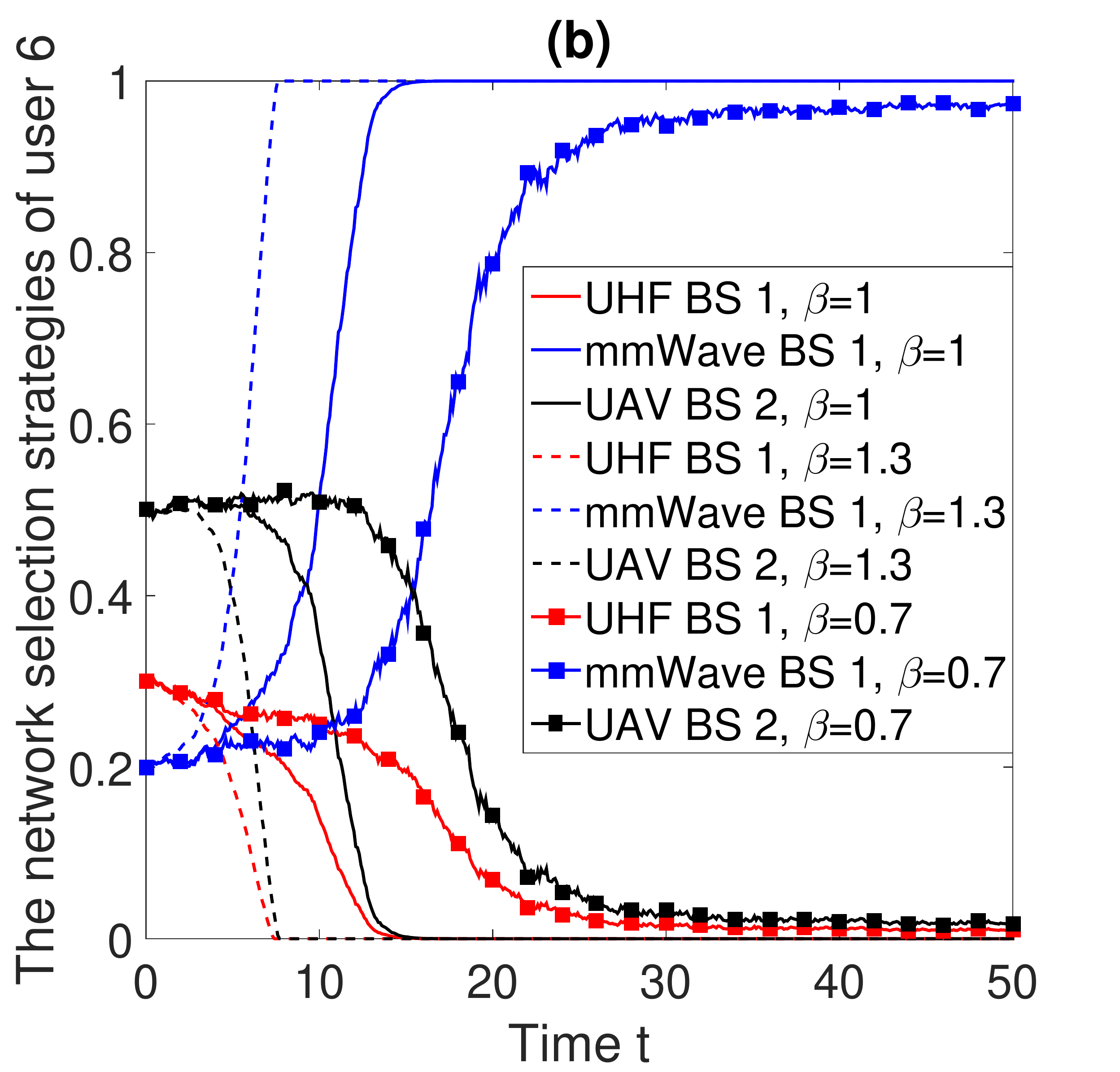}
	\end{minipage}
	\begin{minipage}{4cm}
		\centering
		\includegraphics[width=1.05\textwidth,trim=5 5 15 5,clip]{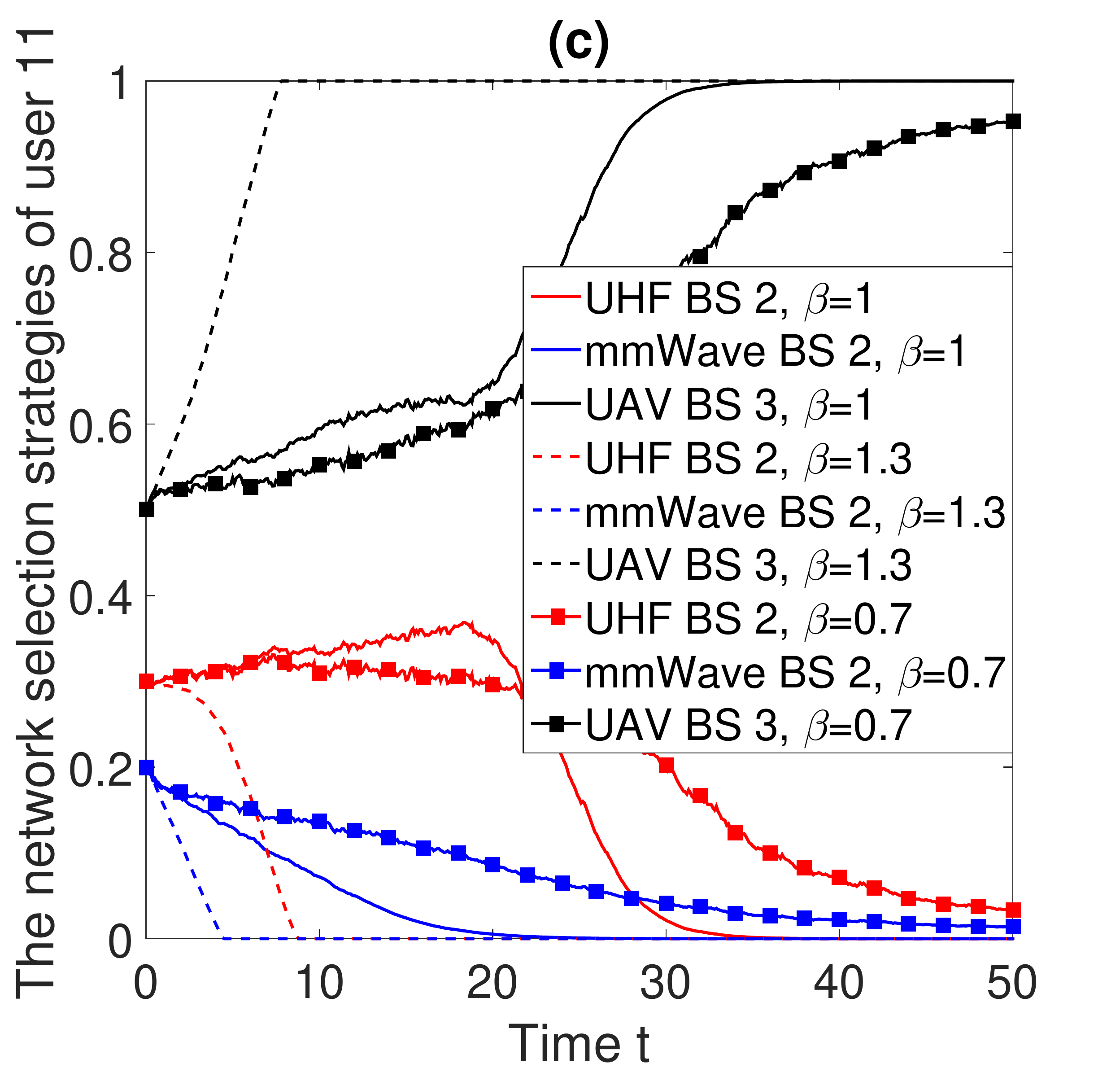}
	\end{minipage}
	\begin{minipage}{4cm}
		\centering
		\includegraphics[width=1.05\textwidth,trim=5 5 15 5,clip]{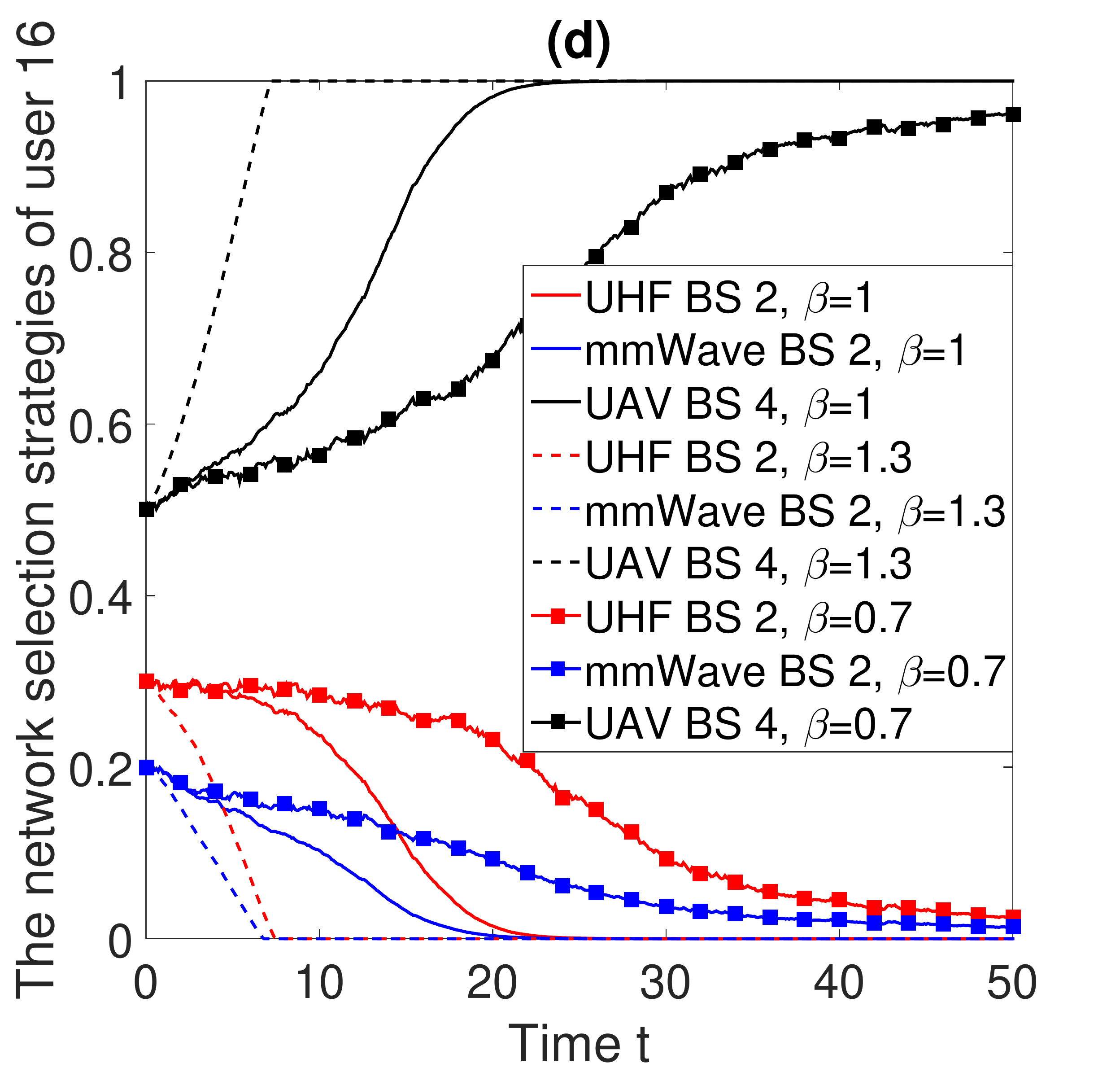}
	\end{minipage}
	\caption{The users' network selection strategies with small-scale fading}
	\label{fig:evolutionary_game_general_uncertain}
\end{figure}

From Fig.~\ref{fig:evolutionary_game_general_uncertain}, we observe that the convergence rate of the replicator dynamics in the fractional evolutionary game with $\beta=1.3$ is still the fastest compared with that in the other two games, i.e., classical evolutionary game and fractional evolutionary game with $\beta=0.7$, under the condition of the small-scale fading. Moreover, the users in the fractional evolutionary game with $\beta=1.3$ are not significantly affected by the small-scale fading. For example, as shown in Fig.~\ref{fig:evolutionary_game_general_uncertain}(c), the user $11$'s UAV-enabled mmWave BS $3$ selection strategy fluctuates in both the classical evolutionary game and fractional evolutionary game with $\beta=0.7$ while that in the fractional evolutionary game with $\beta=1.3$ stably evolves and converges to the equilibrium strategy. The reason is that in the fractional evolutionary game with $\beta>1$, the positive memory effect can incentivize the users to accelerate their reaction speed and further the evolution processes. In this case, the users in the fractional evolutionary game with positive memory effect will experience only a short-term network switch and moreover the negative effect caused by the small-scale fading has little impact on the strategy evolution of the users compared with the users in the other two games.  


In summary, we have evaluated the performance of the HetNet and the users' utilities under different memory effects. In the HetNet with homogeneous users, the fractional evolutionary game with negative memory effect induces a low utility for the users and slow strategy adaptation for the replicator dynamics compared with the fractional evolutionary game with positive memory effect. Moreover, in the HetNet with homogeneous users, the positive memory effect can help the user to reduce the negative effect caused by the small-scale fading on its cumulative utility. Furthermore, in the HetNet with heterogeneous users, the evolution processes of the users in the fractional evolutionary game with positive memory effect are not significantly affected by the small-scale fading compared with that with negative memory effect.


\section{Conclusion}
\label{sec:conclusion}

We have presented dynamic game framework to analyze the strategies of the users in the HetNets. The interaction among the users has been first modeled as an evolutionary game, where the dynamic network selection strategies of the users are captured by the replicator dynamics. Then, we cast the classical evolutionary game as a fractional evolutionary game by incorporating the concept of the power-law memory, where the decision-making of the users is affected by their memory. We analytically validated the existence, uniqueness, and stability of the fractional evolutionary equilibrium. We also numerically verified the stability of the fractional evolutionary equilibrium. The stable and unique fractional evolutionary equilibrium has been obtained as the solution to the fractional evolutionary game. Moreover, we have presented a series of insightful analytical and numerical results on the equilibrium of the fractional evolutionary games. For future work, we will further include the UAV-enabled mmWave BSs as players.

\bibliography{bibfile}

\begin{thebibliography}{10}

\bibitem{Cambridge2014Joint}
C.~Wireless,
\newblock ``Joint ict ktn and radio technology sig event `millimeter-wave
  radios: Challenges and opportunities',''
  \url{https://www.cambridgewireless.co.uk/events/43845-joint-ict-ktn-and-radio-technology-sig-event/},
  2014.

\bibitem{andrews2014will}
J.~G. {Andrews}, S.~{Buzzi}, W.~{Choi}, S.~V. {Hanly}, A.~{Lozano}, A.~C.~K.
  {Soong}  and J.~C. {Zhang},
\newblock ``What will 5g be?,''
\newblock {\em IEEE Journal on Selected Areas in Communications}, vol. 32, no.
  6, pp. 1065--1082, June 2014.

\bibitem{hu2014energy}
R.~Q. Hu and Y.~Qian,
\newblock ``An energy efficient and spectrum efficient wireless heterogeneous
  network framework for 5g systems,''
\newblock {\em IEEE Communications Magazine}, vol. 52, no. 5, pp. 94--101,
  2014.

\bibitem{hu2015mih}
S.~{Hu}, X.~{Wang}  and M.~Z. {Shakir},
\newblock ``A mih and sdn-based framework for network selection in 5g hetnet:
  Backhaul requirement perspectives,''
\newblock in {\em 2015 IEEE International Conference on Communication Workshop
  (ICCW)}, June 2015, pp. 37--43.

\bibitem{simonsohn2009weather}
U.~Simonsohn,
\newblock ``{Weather To Go To College},''
\newblock {\em The Economic Journal}, vol. 120, no. 543, pp. 270--280, 07 2009.

\bibitem{sun2018new}
H.~Sun, Y.~Zhang, D.~Baleanu, W.~Chen  and Y.~Chen,
\newblock ``A new collection of real world applications of fractional calculus
  in science and engineering,''
\newblock {\em Communications in Nonlinear Science and Numerical Simulation},
  vol. 64, pp. 213--231, 2018.

\bibitem{tarasova2018concept}
V.~V. Tarasova and V.~E. Tarasov,
\newblock ``Concept of dynamic memory in economics,''
\newblock {\em Communications in Nonlinear Science and Numerical Simulation},
  vol. 55, pp. 127 -- 145, 2018.

\bibitem{lynch1991memory}
J.~G. Lynch,
\newblock ``Memory and decision making,'' 1991.

\bibitem{tarasova2017logistic}
V.~V. Tarasova and V.~E. Tarasov,
\newblock ``Logistic map with memory from economic model,''
\newblock {\em Chaos, Solitons \& Fractals}, vol. 95, pp. 84 -- 91, 2017.

\bibitem{tarasova2016fractional}
V.~V. {Tarasova} and V.~E. {Tarasov},
\newblock ``{Fractional Dynamics of Natural Growth and Memory Effect in
  Economics},''
\newblock {\em arXiv e-prints}, p. arXiv:1612.09060, Dec 2016.

\bibitem{tarasov2016long}
V.~E. Tarasov and V.~V. Tarasova,
\newblock ``Long and short memory in economics: fractional-order difference and
  differentiation,''
\newblock {\em arXiv preprint arXiv:1612.07903}, 2016.

\bibitem{koeller1984applications}
R.~Koeller,
\newblock ``Applications of fractional calculus to the theory of
  viscoelasticity,''
\newblock {\em Journal of Applied Mechanics}, vol. 51, no. 2, pp. 299--307,
  1984.

\bibitem{tarasov2011fractional}
V.~E. Tarasov,
\newblock {\em Fractional dynamics: applications of fractional calculus to
  dynamics of particles, fields and media},
\newblock Springer Science \& Business Media, 2011.

\bibitem{cai2016green}
S.~Cai, Y.~Che, L.~Duan, J.~Wang, S.~Zhou  and R.~Zhang,
\newblock ``Green 5g heterogeneous networks through dynamic small-cell
  operation,''
\newblock {\em IEEE Journal on Selected Areas in Communications}, vol. 34, no.
  5, pp. 1103--1115, 2016.

\bibitem{madhusudhanan2012downlink}
P.~Madhusudhanan, J.~G. Restrepo, Y.~Liu  and T.~X. Brown,
\newblock ``Downlink coverage analysis in a heterogeneous cellular network,''
\newblock in {\em 2012 IEEE Global Communications Conference (GLOBECOM)}. IEEE,
  2012, pp. 4170--4175.

\bibitem{elshaer2016downlink}
H.~Elshaer, M.~N. Kulkarni, F.~Boccardi, J.~G. Andrews  and M.~Dohler,
\newblock ``Downlink and uplink cell association with traditional macrocells
  and millimeter wave small cells,''
\newblock {\em IEEE Transactions on Wireless Communications}, vol. 15, no. 9,
  pp. 6244--6258, 2016.

\bibitem{sun2018smart}
Y.~Sun, G.~Feng, S.~Qin, Y.-C. Liang  and T.-S.~P. Yum,
\newblock ``The smart handoff policy for millimeter wave heterogeneous cellular
  networks,''
\newblock {\em IEEE Transactions on Mobile Computing}, vol. 17, no. 6, pp.
  1456--1468, 2018.

\bibitem{tan2018learning}
J.~Tan, S.~Xiao, S.~Han  and Y.-C. Liang,
\newblock ``A learning-based coexistence mechanism for laa-lte based hetnets,''
\newblock in {\em 2018 IEEE International Conference on Communications (ICC)}.
  IEEE, 2018, pp. 1--6.

\bibitem{si2010optimal}
P.~Si, H.~Ji  and F.~R. Yu,
\newblock ``Optimal network selection in heterogeneous wireless multimedia
  networks,''
\newblock {\em Wireless Networks}, vol. 16, no. 5, pp. 1277--1288, 2010.

\bibitem{xie2016joint}
R.~Xie, F.~R. Yu, T.~Huang, J.~Liu  and Y.~Liu,
\newblock ``Joint user association and rate allocation for http adaptive
  streaming in heterogeneous cellular networks,''
\newblock in {\em 2016 IEEE International Conference on Communications (ICC)}.
  IEEE, 2016, pp. 1--6.

\bibitem{li2017cooperative}
Q.~Li, W.~Shi, X.~Ge  and Z.~Niu,
\newblock ``Cooperative edge caching in software-defined hyper-cellular
  networks,''
\newblock {\em IEEE Journal on Selected Areas in Communications}, vol. 35, no.
  11, pp. 2596--2605, 2017.

\bibitem{wang2013mathematical}
L.~Wang and G.-S.~G. Kuo,
\newblock ``Mathematical modeling for network selection in heterogeneous
  wireless networks—a tutorial,''
\newblock {\em IEEE Communications Surveys \& Tutorials}, vol. 15, no. 1, pp.
  271--292, 2013.

\bibitem{jiang2012renewal}
C.~Jiang, Y.~Chen  and K.~R. Liu,
\newblock ``A renewal-theoretical framework for dynamic spectrum access with
  unknown primary behavior,''
\newblock in {\em 2012 IEEE Global Communications Conference (GLOBECOM)}. IEEE,
  2012, pp. 1422--1427.

\bibitem{yang2013wireless}
Y.-H. Yang, Y.~Chen, C.~Jiang, C.-Y. Wang  and K.~R. Liu,
\newblock ``Wireless access network selection game with negative network
  externality,''
\newblock {\em IEEE Transactions on Wireless Communications}, vol. 12, no. 10,
  pp. 5048--5060, 2013.

\bibitem{chen2013evolutionarily}
X.~Chen and J.~Huang,
\newblock ``Evolutionarily stable spectrum access,''
\newblock {\em IEEE Transactions on Mobile Computing}, vol. 12, no. 7, pp.
  1281--1293, 2013.

\bibitem{rad2009utility}
A.~H.~M. Rad, J.~Huang, M.~Chiang  and V.~W. Wong,
\newblock ``Utility-optimal random access without message passing,''
\newblock {\em IEEE Transactions on wireless communications}, vol. 8, no. 3,
  pp. 1073--1079, 2009.

\bibitem{ge20165g}
X.~Ge, S.~Tu, G.~Mao, C.-X. Wang  and T.~Han,
\newblock ``5g ultra-dense cellular networks,''
\newblock {\em IEEE Wireless Communications}, vol. 23, no. 1, pp. 72--79, 2016.

\bibitem{forte2014handbook}
F.~Forte, R.~Mudambi  and P.~M. Navarra,
\newblock {\em A Handbook of Alternative Theories of Public Economics},
\newblock Edward Elgar Publishing, 2014.

\bibitem{tarasova2016elasticity}
V.~V. Tarasova and V.~E. Tarasov,
\newblock ``Elasticity for economic processes with memory: Fractional
  differential calculus approach,''
\newblock {\em Fractional Differential Calculus}, vol. 6, no. 2, pp. 219--232,
  2016.

\bibitem{tarasova2017economic}
V.~V. Tarasova and V.~E. Tarasov,
\newblock ``Economic growth model with constant pace and dynamic memory,''
\newblock {\em arXiv preprint arXiv:1701.06299}, 2017.

\bibitem{zeng2019energy}
Y.~Zeng, J.~Xu  and R.~Zhang,
\newblock ``Energy minimization for wireless communication with rotary-wing
  uav,''
\newblock {\em IEEE Transactions on Wireless Communications}, vol. 18, no. 4,
  pp. 2329--2345, 2019.

\bibitem{singh2015tractable}
S.~{Singh}, M.~N. {Kulkarni}, A.~{Ghosh}  and J.~G. {Andrews},
\newblock ``Tractable model for rate in self-backhauled millimeter wave
  cellular networks,''
\newblock {\em IEEE Journal on Selected Areas in Communications}, vol. 33, no.
  10, pp. 2196--2211, Oct 2015.

\bibitem{wang2016physical}
C.~{Wang} and H.~{Wang},
\newblock ``Physical layer security in millimeter wave cellular networks,''
\newblock {\em IEEE Transactions on Wireless Communications}, vol. 15, no. 8,
  pp. 5569--5585, Aug 2016.

\bibitem{lyu2017blocking}
J.~{Lyu} and R.~{Zhang},
\newblock ``{Blocking Probability and Spatial Throughput Characterization for
  Cellular-Enabled UAV Network with Directional Antenna},''
\newblock {\em arXiv e-prints}, p. arXiv:1710.10389, Oct 2017.

\bibitem{balanis2016antenna}
C.~A. Balanis,
\newblock {\em Antenna theory: analysis and design},
\newblock John wiley \& sons, 2016.

\bibitem{al2014optimal}
A.~{Al-Hourani}, S.~{Kandeepan}  and S.~{Lardner},
\newblock ``Optimal lap altitude for maximum coverage,''
\newblock {\em IEEE Wireless Communications Letters}, vol. 3, no. 6, pp.
  569--572, Dec 2014.

\bibitem{wang1965principle}
C.~C. Wang,
\newblock ``The principle of fading memory,''
\newblock {\em Archive for Rational Mechanics and Analysis}, vol. 18, no. 5,
  pp. 343--366, Jan 1965.

\bibitem{coleman1968general}
B.~D. Coleman and V.~J. Mizel,
\newblock ``On the general theory of fading memory,''
\newblock {\em Archive for Rational Mechanics and Analysis}, vol. 29, no. 1,
  pp. 18--31, 1968.

\bibitem{niyato2008dynamics}
D.~Niyato and E.~Hossain,
\newblock ``Dynamics of network selection in heterogeneous wireless networks:
  An evolutionary game approach,''
\newblock {\em IEEE transactions on vehicular technology}, vol. 58, no. 4,
  2008.

\bibitem{semasinghe2014evolutionary}
P.~Semasinghe, E.~Hossain  and K.~Zhu,
\newblock ``An evolutionary game for distributed resource allocation in
  self-organizing small cells,''
\newblock {\em IEEE Transactions on Mobile Computing}, vol. 14, no. 2, pp.
  274--287, 2014.

\bibitem{mathworld2019gamma}
MathWorld,
\newblock ``Gamma function,''
  \url{http://mathworld.wolfram.com/GammaFunction.html}, 2019.

\bibitem{edelman2019evolution}
M.~Edelman,
\newblock ``Evolution of systems with power-law memory: Do we have to die?,''
\newblock {\em arXiv preprint arXiv:1904.13370}, 2019.

\bibitem{wixted1990analyzing}
J.~T. Wixted,
\newblock ``Analyzing the empirical course of forgetting.,''
\newblock {\em Journal of Experimental Psychology: Learning, Memory, and
  Cognition}, vol. 16, no. 5, pp. 927, 1990.

\bibitem{anderson2000learning}
J.~R. Anderson,
\newblock {\em Learning and memory: An integrated approach},
\newblock John Wiley \& Sons Inc, 2000.

\bibitem{dassiosfractional}
G.~DASSIOS, G.~FRAGOYIANNIS  and K.~SATRAZEMI,
\newblock ``A fractional rate model of learning,''
\newblock .

\bibitem{lubashevsky2014fractional}
I.~Lubashevsky and B.~Datsko,
\newblock ``Fractional dynamics and multi-slide model of human memory,''
\newblock {\em arXiv preprint arXiv:1402.4058}, 2014.

\bibitem{jain1979numerical}
M.~K. Jain,
\newblock {\em Numerical solution of differential equations},
\newblock Wiley Eastern New Delhi, 1979.

\bibitem{podlubny1998fractional}
I.~Podlubny,
\newblock {\em Fractional differential equations: an introduction to fractional
  derivatives, fractional differential equations, to methods of their solution
  and some of their applications}, vol. 198,
\newblock Elsevier, 1998.

\bibitem{heymans2006physical}
N.~Heymans and I.~Podlubny,
\newblock ``Physical interpretation of initial conditions for fractional
  differential equations with riemann-liouville fractional derivatives,''
\newblock {\em Rheologica Acta}, vol. 45, no. 5, pp. 765--771, 2006.

\bibitem{gapeyenko2018flexible}
M.~Gapeyenko, V.~Petrov, D.~Moltchanov, S.~Andreev, N.~Himayat  and
  Y.~Koucheryavy,
\newblock ``Flexible and reliable uav-assisted backhaul operation in 5g mmwave
  cellular networks,''
\newblock {\em IEEE Journal on Selected Areas in Communications}, vol. 36, no.
  11, pp. 2486--2496, 2018.

\bibitem{ghosh2014millimeter}
A.~Ghosh, T.~A. Thomas, M.~C. Cudak, R.~Ratasuk, P.~Moorut, F.~W. Vook, T.~S.
  Rappaport, G.~R. MacCartney, S.~Sun  and S.~Nie,
\newblock ``Millimeter-wave enhanced local area systems: A high-data-rate
  approach for future wireless networks,''
\newblock {\em IEEE Journal on Selected Areas in Communications}, vol. 32, no.
  6, pp. 1152--1163, 2014.

\bibitem{el2007stability}
A.~El-Salam, A.~Sheren  and A.~El-Sayed,
\newblock ``On the stability of some fractional-order non-autonomous
  systems.,''
\newblock {\em Electronic Journal of Qualitative Theory of Differential
  Equations}, 2007.

\end{thebibliography}

\end{document}